\pgfplotsset{compat=newest}
\def\blfootnote{\xdef\@thefnmark{}\@footnotetext}
\title{Planning Not to Talk: Multiagent Systems\\that are Robust to Communication Loss}
\author{Mustafa O. Karabag*}
\affiliation{
  \institution{The University of Texas at Austin}
    \state{Texas}
  \country{USA}
}
\email{karabag@utexas.edu}
\author{Cyrus Neary*}
\affiliation{
  \institution{The University of Texas at Austin}
    \state{Texas}
  \country{USA}
}
\email{cneary@utexas.edu}
\author{Ufuk Topcu}
\affiliation{
  \institution{The University of Texas at Austin}
    \state{Texas}
  \country{USA}
}
\email{utopcu@utexas.edu}
\keywords{Multiagent Systems; Communication Loss; Information Theory}
\newcommand{\BibTeX}{\rm B\kern-.05em{\sc i\kern-.025em b}\kern-.08em\TeX}
\begin{abstract}

In a cooperative multiagent system, a collection of agents executes a joint policy in order to achieve some common objective.
The successful deployment of such systems hinges on the availability of reliable inter-agent communication.
However, many sources of potential disruption to communication exist in practice, such as radio interference, hardware failure, and adversarial attacks. 
In this work, we develop joint policies for cooperative multiagent systems that are robust to potential losses in communication.
More specifically, we develop joint policies for cooperative Markov games with reach-avoid objectives.
First, we propose an algorithm for the decentralized execution of joint policies during periods of communication loss. 
Next, we use the total correlation of the state-action process induced by a joint policy as a measure of the intrinsic dependencies between the agents.
We then use this measure to lower-bound the performance of a joint policy when communication is lost.
Finally, we present an algorithm that maximizes a proxy to this lower bound in order to synthesize minimum-dependency joint policies that are robust to communication loss.
Numerical experiments show that the proposed minimum-dependency policies require minimal coordination between the agents while incurring little to no loss in performance; 
the total correlation value of the synthesized policy is one fifth of the total correlation value of the baseline policy which does not take potential communication losses into account.
As a result, the performance of the minimum-dependency policies remains consistently high regardless of whether or not communication is available. 
By contrast, the performance of the baseline policy decreases by twenty percent when communication is lost.
\end{abstract}
\begin{document}
\pagestyle{fancy}
\fancyhead{}
\maketitle 

\newcommand{\expectation}{\mathbb{E}}
\newcommand{\kl}{KL}
\newcommand{\entropy}{H}
\newcommand{\distribution}{\Delta}
\newcommand{\probabilityMeasure}{\mu}
\newcommand{\genericRandomVar}{Y}
\newcommand{\genericRandomVarSupport}{\mathcal{\genericRandomVar}}
\newcommand{\genericDistribution}{Q}
\newcommand{\genericDistributionSupport}{\mathcal{\genericDistribution}}
\newcommand{\genericFunction}{f}
\newcommand{\genericFunctionAlt}{g}
\newcommand{\emptyString}{\varepsilon}
\newcommand{\epsilonTransition}{\epsilon}
\newcommand{\genericString}{w}
\newcommand{\constantNumber}{K}
\newcommand{\genericSet}{V}

\newcommand{\mdp}{\mathcal{M}}
\newcommand{\mdpState}{s}
\newcommand{\mdpStateAlt}{y}
\newcommand{\mdpInitialState}{\mdpState_{I}}
\newcommand{\mdpStateSpace}{\mathcal{S}}
\newcommand{\mdpAction}{a}
\newcommand{\mdpActionAlt}{b}
\newcommand{\mdpActionSpace}{\mathcal{A}}
\newcommand{\mdpReward}{\mathcal{R}}
\newcommand{\mdpTransition}{\mathcal{T}}

\newcommand{\policy}{\pi}

\newcommand{\mdpPath}{\xi}
\newcommand{\mdpPathDist}{\Gamma}
\newcommand{\mdpValue}{v}
\newcommand{\mdpStateActionProcess}{X}
\newcommand{\mdpStateActionProcessAlt}{Y}
\newcommand{\mdpMixedStateActionProcess}{\bar{\mdpStateActionProcess}}
\newcommand{\mdpStateRandomVar}{S}
\newcommand{\mdpActionRandomVar}{A}

\newcommand{\timeHorizon}{\mathcal{T}}

\newcommand{\joint}{joint}
\newcommand{\fullcommunication}{full}
\newcommand{\fullyimaginary}{full\text{ }img}
\newcommand{\imaginary}{img}
\newcommand{\intermittent}{int}

\newcommand{\game}{\bm{\mdp}}
\newcommand{\gameState}{\bm{\mdpState}}
\newcommand{\gameStateAlt}{\bm{\mdpStateAlt}}
\newcommand{\gameActionAlt}{\bm{\mdpActionAlt}}
\newcommand{\gameInitialState}{\bm{\mdpInitialState}}
\newcommand{\gameStateSpace}{\bm{\mdpStateSpace}}
\newcommand{\gameAction}{\bm{\mdpAction}}
\newcommand{\gameActionSpace}{\bm{\mdpActionSpace}}
\newcommand{\gameTransition}{\bm{\mdpTransition}}
\newcommand{\gameReward}{\bm{\mdpReward}}
\newcommand{\gameStateRandomVar}{\bm{\mdpStateRandomVar}}
\newcommand{\gameActionRandomVar}{\bm{\mdpActionRandomVar}}
\newcommand{\gameStateActionProcessAlt}{\bm{\mdpStateActionProcessAlt}}
\newcommand{\len}{len}
\newcommand{\expectedLength}{l}

\newcommand{\jointPolicy}{\policy_{\joint}}

\newcommand{\targetSet}{\gameStateSpace_{\mathcal{T}}}

\newcommand{\deadSet}{\gameStateSpace_{\mathcal{A}}}
\newcommand{\deadSetPrime}{\gameStateSpace_{\mathcal{D}}}
\newcommand{\doneSet}{\gameStateSpace_{\mathcal{E}}}
\newcommand{\gameProcessEndState}{\gameState_{\epsilonTransition}}
\newcommand{\mdpProcessEndState}{\mdpState_{\epsilonTransition}}

\newcommand{\gamePath}{\bm{\mdpPath}}
\newcommand{\gamePathDist}{\bm{\mdpPathDist}}
\newcommand{\gameValue}{\bm{\mdpValue}}
\newcommand{\gameStateActionProcess}{\bm{\mdpStateActionProcess}}
\newcommand{\totalCorrelation}{C}
\newcommand{\totalCorrelationUpperBound}{\bar{\totalCorrelation}}

\newcommand{\numAgents}{N}

\newcommand{\probabilityFailureForever}{p}
\newcommand{\probabilityFailureOneStep}{q}
\newcommand{\sequenceCommAvailibility}{\Lambda}
\newcommand{\oneStepCommAvailibility}{\lambda}

\newcommand{\expectedLengthCoef}{\delta}
\newcommand{\totalCorrelationCoef}{\beta}
\newcommand{\occupancyVar}{x}

\newcommand{\rover}{R}
\newcommand{\robot}{R}
\newcommand{\agent}{R}
\newcommand{\base}{B}
\newcommand{\goal}{T}
\blfootnote{*These authors contributed equally to this work.}
\section{Introduction}

In a cooperative multiagent systems, a team of decision-making agents aims to achieve a common objective through repeated interactions with each other and with a shared environment.
Such multiagent systems are ubiquitous; many applications of autonomous systems --- such as the coordination of autonomous vehicles, the control of networks of mobile sensors, or the control of traffic lights --- can be modeled as collections of interacting agents \cite{cao2012overview, parker2016multiple}. 

Inter-agent communication plays an essential role in the successful deployment of such multiagent systems.
In particular, the coordination between agents via communication -- their agreement upon the particular actions to collectively take at any given point in time -- is often necessary for the successful implementation of an optimal joint policy \cite{boutilier1996planning}.
However, many possible sources of communication disruption exist in practice, such as radio interference, hardware failure, or even adversarial attacks intended to sabotage the team.
Lost or unreliable communication can result in substantial degradation of the team's performance, because it removes the agents' ability to coordinate. 
Despite this reliance of the team's performance on communication, multiagent learning and planning algorithms typically do not offer robustness guarantees against possible losses in communication.

In this work, we study multiagent systems that are robust to such communication losses.
In detail, we study sequential multiagent decision problems formulated as cooperative Markov games with reach-avoid objectives \cite{littman1994markov, baier2008principles}.
We make the following contributions.

    \textit{(1) Imaginary Play for Policy Execution During Intermittent Communication.}
    During periods of lost communication, each agent maintains imaginary versions of their teammates' states and actions using the pre-agreed-upon joint policy and a model of the environment's stochastic dynamics.
    By maintaining such imaginary copies of their teammates, each agent may act according to a model of how their teammates are likely to behave, without receiving any communicated information from them.
    Once communication is re-established the agents share updates, correct their imaginary models, and proceed with policy execution as normal until communication is lost again, or until the team's task is complete.

    \textit{(2) Theoretical Results: Total Correlation as a Measure of Policy Robustness to Communication Loss.}
    We use the total correlation~\cite{watanabe1960information} -- a generalization of the mutual information -- of the stochastic state-action process induced by the joint policy as a measure of how reliant that particular policy is on communication.
    To relate this measure to the performance of the policy, we provide lower bounds on the value function achieved during intermittent communication, in terms of the total correlation of the policy and the value function it achieves when communication is available.
    In addition to the policy synthesis algorithm described below, this lower bound provides a means to select communication resources that are sufficient to achieve a particular performance while using noisy communication channels.

    \textit{(3) Synthesis of Policies Robust to Intermittent Communication.}
    To synthesize \textit{minimum-dependency policies} that remain performant under intermittent communication, we present an algorithm that maximizes a proxy to the lower bound described above.
    This optimization problem is formulated as a difference of convex terms.
    We solve for local optima using the convex-concave procedure \cite{yuille2002concave}.

Numerical results empirically demonstrate the effectiveness of the proposed algorithms for communication-free policy execution and for the synthesis of minimum-dependency joint policies.
When communication is not restricted, the synthesized minimum-dependency policies enjoy task performance that is similar to a baseline policy that does not take potential communication losses into account.
However, the minimum-dependency policies require minimal coordination between agents; the total correlation value of their joint state-action processes is roughly one fifth of the total correlation value of the process induced by the baseline policy.

As a result, the performance of the minimum-dependency policies remain constant, even when communication between agents is restricted to be entirely unavailable. 
By contrast, we observe a twenty percent degradation in the performance of the baseline policy when communication is lost.

\paragraph{Outline.}
In \S \ref{sec:related_work} we discuss related work.
In \S \ref{sec:preliminaries} we introduce preliminary background material as well as the notation used throughout the paper. 
We present our problem statement and an illustrative running example in \S \ref{sec:multiagent_planning_and_comms}. 
The proposed algorithms for policy execution during communication losses are presented in \S \ref{sec:imaginary_play}.
The paper's theoretical results and their implications are discussed in \S\ref{sec:measuring_dependencies} and \S \ref{sec:theoretical_results}.
We leave the details of the proofs of these theoretical results to the supplementary material.
In \S \ref{sec:policy_synthesis} we present the proposed formulation and solution to the policy synthesis problem, before presenting the experimental results in \S \ref{sec:experiments}.

\section{Related Work}

\label{sec:related_work}

Multiagent decision-making problems have been formulated using several models, e.g., multiagent Markov decision processes (MDPs) \cite{boutilier1996planning}.
Our problem setting, in which each agent has independent transitions and may only observe their own local state, is most similar to transition-independent decentralized MDPs (Dec-MDPs) \cite{becker2003transition}.
However, while this work considers the fully decentralized setting -- the agents cannot communicate at all -- we consider the setting in which communication is allowed but unreliable.
We note that Dec-MDPs are a special case of decentralized partially observable MDPs (Dec-POMDPs) \cite{oliehoek2016concise}, which are notoriously difficult to solve in general when the agents cannot communicate.
In fact, even policy synthesis for finite-horizon transition-independent Dec-MDPs without communication is NP-complete \cite{goldman2004decentralized}.

Prior work for multiagent systems considers imposing specific communication structures between the agents, either as a dependency graph \cite{guestrin2002multiagent}, or as a subset of joint states at which the agents may communicate \cite{melo2011decentralized}.
In addition to these fixed communication structures, the papers \cite{becker2009analyzing, wu2011online} consider communication as an explicit action that can be taken by the agents, leading to dynamic communication structures that change over time.
While all of the above works consider synthesizing optimal behavior according to specific communication structures, 
our work studies multiagent systems that are robust to unpredictable communication losses.

To render the multiagent systems robust to communication loss, our work aims to minimize intrinsic dependencies between the agents.
As a measure of such dependencies, we use the total correlation \cite{watanabe1960information} -- an information theoretic measure -- of the state-action process induced by the joint policy.
Information theoretic measures have been studied in single-agent MDPs \cite{savas2019entropy, leibfried2020mutual, tanaka2021transfer, eysenbach2021robust}.
In particular, \cite{tanaka2021transfer} synthesizes single-agent policies that minimize the transfer entropy from the state process to the action process with the purpose of minimizing the reliance of the policy on the underlying state process.
By contrast, our work considers a multiagent setting and introduces information theoretic measures with the specific purpose of providing guarantees on the performance of the team under communication loss.
The paper \cite{eysenbach2021robust} proposes to minimize the mutual information between the underlying state process and the agent's action process in the context of single-agent reinforcement learning.  
By contrast, we study the multiagent setting and provide bounds on the performance of the entire team, when the agents have intermittent communication.
Furthermore, we provide an optimization problem to synthesize joint policies that are robust to communication loss.
In the multiagent reinforcement learning setting, \cite{wang2020learning} consider minimizing the mutual information between the state processes and the messages shared between the agents, but do not provide theoretical result on the performance of the team when communication is only intermittently available.

The centralized training decentralized execution paradigm in has recently drawn attention in multiagent reinforcement learning \cite{rashid2018qmix, sunehag2018value, son2019qtran, mahajan2019maven}. 
These works enforce independence between the agents by imposing that the team's value function can be decomposed into local functions for each of the agents.
In our work, we do not consider decomposition of the value function, but instead directly synthesize a joint policy that leads to intrinsic independence between agents.
Another method to compute policies for decentralized execution, is to post-process a given joint policy.
For example, \cite{dobbe2017fully} uses the rate-distortion framework \cite{cover1991elements} for this purpose.
Our work does not assume a joint policy to be given a priori; we instead directly synthesize a joint policy that minimizes dependencies.

As discussed above, prior works tackle communication loss by making the policies fully decentralized \cite{rashid2018qmix, son2019qtran}, or by having the agents maintain beliefs about their teammates \cite{becker2009analyzing, wu2011online}.
While belief-based myopic approaches lead to high reward for a single step, they do not guarantee optimality over entire paths. 
Instead of maintaining such belief distributions, in our work each agent creates imaginary copies of its teammates when communication is lost; this idea is similar in spirit to the concept of digital twins \cite{boschert2016digital}. 
Combined with total correlation, the proposed imaginary play algorithm leads to performance guarantees over the entire path.

\section{Preliminaries}

\label{sec:preliminaries}

In this section, we outline several definitions and notation used throughout the paper.
Given a finite collection of \(\numAgents\) agents -- which we index by \(i \in \{1, 2, \ldots, \numAgents\}\) -- we model the dynamics of each individual agent using a Markov decision process (MDP) \(\mdp^i\).
An MDP is a tuple \(\mdp^i = (\mdpStateSpace^i, \mdpInitialState^i, \mdpActionSpace^i, \mdpTransition^i)\).
Here, \(\mdpStateSpace^i\) is a finite set of states, \(\mdpInitialState^i \in \mdpStateSpace^i\) is an initial state, \(\mdpActionSpace^i\) is a finite set of available actions, and \(\mdpTransition^i : \mdpStateSpace^i \times \mdpActionSpace^i \to \distribution(\mdpStateSpace^i)\) is a transition probability function.
We use \(\distribution(\mdpStateSpace^i)\) to denote the set of all probability distributions over the state space \(\mdpStateSpace^i\). For notational brevity, we use \(\mdpTransition^i(\mdpState^{i}, \mdpAction^i, \mdpStateAlt^i)\) to denote the probability of \(\mdpStateAlt^{i}\) given by the distribution \(\mdpTransition^i(\mdpState^{i}, \mdpAction^i)\).
A path \(\mdpPath^i\) in the MDP \(\mdp^i\) is an infinite sequence \(\mdpPath^i = \mdpState^i_0 \mdpAction^i_0 \mdpState^i_1 \mdpAction^i_1 \ldots\) of state-action pairs such that for every \(t = 0,1,\ldots\), \(\mdpTransition^i(\mdpState^i_t, \mdpAction^i_t, \mdpState^i_{t+1}) >0\).

Given such collection of agents, along with their corresponding MDPs \(\mdp^i\), we formulate the team's decision problem as a cooperative Markov game \(\game\). The game is considered cooperative because all agents share a common objective.
A cooperative Markov game involving \(\numAgents\) agents, each of which is modeled by an MDP \(\mdp^i = (\mdpStateSpace^i, \mdpInitialState^i, \mdpActionSpace^i, \mdpTransition^i)\), is given by the tuple \(\game = (\gameStateSpace, \gameInitialState, \gameActionSpace, \gameTransition)\).
Here, \(\gameStateSpace = \mdpStateSpace^1 \times \mdpStateSpace^2 \times \ldots \times \mdpStateSpace^{\numAgents}\) is the finite set of joint states, \(\gameInitialState = (\mdpInitialState^1, \ldots, \mdpInitialState^{\numAgents})\) is the joint initial state, \(\gameActionSpace = \mdpActionSpace^1 \times \mdpActionSpace^2 \times \ldots \times \mdpActionSpace^{\numAgents}\) is the finite set of joint actions, and \(\gameTransition : \gameStateSpace \times \gameActionSpace \to \distribution(\gameStateSpace)\) is the joint transition probability function. 
For notational brevity, we use \(\gameTransition(\gameState, \gameAction, \gameStateAlt)\) to denote the probability of \(\gameStateAlt\) in the distribution \(\gameTransition(\gameState, \gameAction)\).
The joint transition function \(\gameTransition\) is defined as \(\gameTransition(\gameState, \gameAction, \gameStateAlt) = \prod_{i=1}^{\numAgents} \mdpTransition(\mdpState^i, \mdpAction^i, \mdpStateAlt^i)\) for all \(\gameState = (\mdpState^1, \ldots, \mdpState^{\numAgents}), \gameStateAlt = (\mdpStateAlt^{1}, \ldots, \mdpStateAlt^{\numAgents}) \in \gameStateSpace\) and \(\gameAction = (\mdpAction^1, \ldots, \mdpAction^{\numAgents}) \in \gameActionSpace\). We note that the definition of the joint transition function \(\gameTransition\) assumes that the dynamics of the individual agents are independent.
We use \(\gamePath = \gameState_{0} \gameAction_{0} \gameState_{1} \gameAction_{1} \ldots\) to denote the  joint path of the agents. 
The joint path \(\gamePath\) is the union of individual paths \(\mdpPath^{1},\ldots, \mdpPath^{\numAgents}\).

A (stationary) joint policy \(\jointPolicy : \gameStateSpace \to \distribution(\gameActionSpace)\) is a mapping from a particular joint state to a probability distribution over joint actions.
We use \(\jointPolicy(\gameState, \gameAction)\) to denote the probability that action \(\gameAction\) is selected by \(\jointPolicy\) given the team is in joint state \(\gameState\).

In this work we consider team reach-avoid problems. 
That is, the team's objective is to collectively reach some target set \(\targetSet \subseteq \gameStateSpace\) of states, while avoiding a set \(\deadSet \subseteq \gameStateSpace\) of states.
The centralized planning problem then, is to solve for a team policy \(\jointPolicy\) maximizing the probability of reaching \(\targetSet\) from the team's initial joint state \(\gameInitialState\), while avoiding \(\deadSet\).
We call this probability value the reach-avoid probability.
More formally, we say that a \textit{path} \(\gamePath = \gameState_1 \gameAction_1 \gameState_2 \gameAction_2 \ldots\) successfully reaches the target set \(\targetSet\) if there exists some time \(M\) such that \(\gameState_{M} \in \targetSet\) and for all \(t < M\), \(\gameState_t \not \in \deadSet\).

For notational convenience, we use \(\gameState^{-i} \in \mdpStateSpace^1 \times \ldots \times \mdpStateSpace^{i-1} \times \mdpStateSpace^{i+1} \times \ldots \times \mdpStateSpace^{\numAgents}\) to denote the states of agent \(i\)'s teammates, excluding agent \(i\) itself.
By \(\gameStateSpace^{-i} = \mdpStateSpace^1 \times \ldots \times \mdpStateSpace^{i-1} \times \mdpStateSpace^{i+1} \times \ldots \mdpStateSpace^{\numAgents}\), we denote the set of all collections of the states of agent \(i\)'s teammates.
We similarly use \(\gameAction^{-i}\) and \(\gameActionSpace^{-i}\) to denote the actions of agent \(i\)'s teammates and the set of all possible such collections of actions, respectively.

We use \(\occupancyVar_{\gameState, \gameAction}\) to denote the occupancy measure of the state-action pair \((\gameState, \gameAction)\), i.e., the expected number of times that action \(\gameAction\) is taken at state \(\gameState\). Similarly, \(\occupancyVar_{\mdpState^{i}, \mdpAction^{i}}\) denotes the the occupancy measure of the state-action pair \((\mdpState^{i}, \mdpAction^{i})\) for agent \(i\). We note that \(\occupancyVar_{\mdpState^{i}, \mdpAction^{i}} = \sum_{\gameState^{-i} \in \gameStateSpace^{-i}} \sum_{\gameAction^{-i}\in\gameActionSpace^{-i}} \occupancyVar_{\mdpState^i, \gameState^{-i}, \mdpAction^{i}, \gameAction^{-i}}\). 

The entropy~\cite{cover1991elements} of a discrete random variable \(\genericRandomVar\) with a support \(\genericRandomVarSupport\) is \(\entropy(\genericRandomVar) = -\sum_{y\in \genericRandomVarSupport} \Pr(\genericRandomVar = y) \log(\Pr(\genericRandomVar = y)).\)  The Kullback-Leibler (KL) divergence~\cite{cover1991elements} between discrete probability distributions \(\genericDistribution^{1}\) and \(\genericDistribution^{2}\) with supports \(\genericDistributionSupport^{1}\) and  \(\genericDistributionSupport^{2}\), respectively, is \[\kl(\genericDistribution^{1} || \genericDistribution^{2}) = \sum_{q\in \genericDistributionSupport^{1}} \genericDistribution^{1}(q) \log\left(\frac{\genericDistribution^{1}(q)}{\genericDistribution^{2}(q)}\right).\]

\section{Multiagent Planning and Communication}

\label{sec:multiagent_planning_and_comms}

In this work we study the cooperative execution of joint policies when communication between the agents is intermittent, and in some cases entirely absent.
We begin by discussing the inter-agent communication that is necessary, in general, for team policy execution before we present the problem statement.

\begin{figure}[t]
    \centering
    \tikzset{>=stealth}
\resizebox{1\columnwidth}{!}{
\begin{tikzpicture}
    \contourlength{0mm};
    \draw[dotted] (-0.5,0.75) -- (0,0.75);
    \draw[dotted] (10.5,0.75) -- (11,0.75);
    \draw[ fill=blue, fill opacity=0.2, text opacity = 1] (0,0) rectangle (1.75,1.5) node[pos=.5] {\small  {\shortstack{Share \(\mdpState^{i}_{t}\) \\with other\\ agents}}};\
    \draw[ fill=red, fill opacity=0.2, text opacity = 1] (1.75,0) rectangle (3.5,1.5) node[pos=.5] { \small  {\shortstack{Jointly\\decide on\\action \(\gameAction_{t}\)}}};
    \draw[fill=green!50!black, fill opacity=0.2, text opacity = 1] (3.5,0) rectangle (5.25,1.5) node[pos=.5] {\small \shortstack{Transition\\to \(\mdpState^{i}_{t+1}\)}};
    \draw[ fill=blue, fill opacity=0.2, text opacity = 1] (5.25,0) rectangle (7,1.5) node[pos=.5] {\small  {\shortstack{Share \(\mdpState^{i}_{t+1}\) \\with other\\ agents}}};
    \draw[ fill=red, fill opacity=0.2, text opacity = 1] (7,0) rectangle (8.75,1.5) node[pos=.5] {\small {\shortstack{Jointly\\decide on\\action \(\gameAction_{t+1}\)}}};
    \draw[ fill=green!50!black, fill opacity=0.2, text opacity = 1] (8.75,0) rectangle (10.5,1.5) node[pos=.5] {\small \shortstack{Transition\\to \(\mdpState^{i}_{t+2}\)}};
    
    \filldraw[black] (-0.3,-0.5) circle (0pt) node[anchor=north west]{\shortstack{Check communication\\availability at time \(t\)}};
    \filldraw[black] (5.25,-0.5) circle (0pt) node[anchor=north]{\shortstack{Check  communication\\availability at time \(t+1\)}};
    \filldraw[black] (10.8,-0.5) circle (0pt) node[anchor=north east]{\shortstack{Check  communication\\availability at time \(t+2\)}};
    
    \draw[-stealth, thick]  (0,-0.5) -- (0,0);
    \draw[-stealth, thick]  (5.25,-0.5) -- (5.25,0);
    \draw[-stealth, thick]  (10.5,-0.5) -- (10.5,0);

\end{tikzpicture}}
    \caption{An illustration of the procedure for joint policy execution. At each decision step, all agents simultaneously check whether communication is available. If it is available, the agents share their local states in order to obtain the current joint state \(\gameState_t\) before agreeing upon a joint action \(\gameAction_t\) sampled from the joint policy \(\jointPolicy\). Otherwise, the agents execute \(\jointPolicy\) using imaginary play, outlined in Algorithm \ref{alg:imaginaryplay}.}
    \label{fig:policy_execution}
\end{figure}
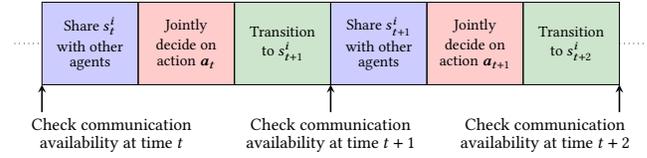

The agents operate in the environment by collectively executing a joint policy \(\jointPolicy\).
Each agent only has access to its own local state and action information. 
The agents must communicate their local states \(\mdpState^i_t\) at each timestep \(t\) and use \(\jointPolicy\) to collectively decide on a joint action \(\gameAction\), as is illustrated in Figure \ref{fig:policy_execution}.
Each agent executes its own local component \(\mdpAction^{i}\) of the selected joint action and resultingly transitions to its next local state \(\mdpState^i_{t+1}\).

We note that the joint policy requires communication between the agents at every time step. 
On the other hand, if the team suffers a communication failure at any given timestep, then they will not be able to share the necessary information to execute the joint policy in the manner described above.

\paragraph{Problem statement.} 
\((1)\) Create a planning algorithm that enables the agents to perform decentralized execution of the joint policy when communication is lost.
\((2)\) Quantify the performance of the team when such an algorithm is used during communication losses.
\((3)\) Synthesize joint policies that remain performant, even when communication is lost.
\begin{figure}[t]
    \centering
    \includegraphics[height=5cm] {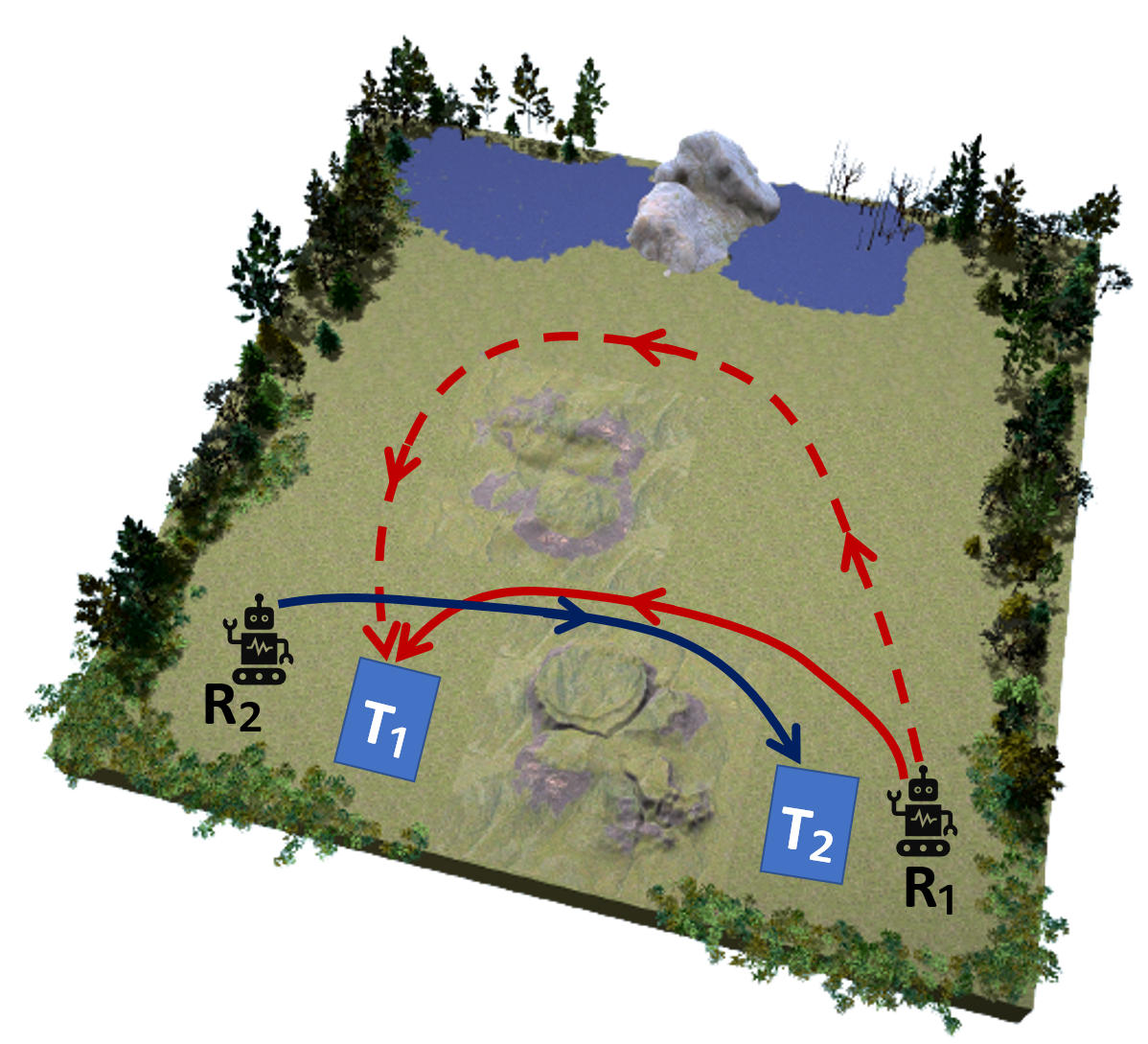}
    \caption{A two-agent navigation example. Two robots, \(\robot_1\) and \(\robot_2\), must navigate to their respective targets, \(\goal_1\) and \(\goal_2\), while avoiding collisions with each other.
    The terrain necessitates that each robot navigates through one of two valleys, while avoiding the water at the top of the map.
    During policy execution, each robot may only observe its own location,
    however, the agents communicate their locations with each other when such communication is possible.
    The colored curves illustrate different paths that the robots might to take, depending on the availability of communication.}
    \label{fig:running_example}
\end{figure}

 \paragraph{An illustrative example.} 
We present the running example illustrated in Figure \ref{fig:running_example} to help motivate the above problems.
Two robots \(\robot_1\) and \( \robot_2\) must simultaneously navigate to their respective targets \(\goal_1\) and \(\goal_2\).
The robots must also maintain a pre-specified minimum distance from each other during navigation to reduce the risk of the robots colliding.
Furthermore, rough terrain makes large portions of the navigation environment impassable, requiring the robots to navigate through one of two narrow valleys in order to reach their targets.
Finally, a lake of water presents risk to the robots; if either of them accidentally falls into the water, then the team fails its task.
The team's task is only considered complete once both robots have safely navigated to their respective targets. The objective of the agents is to complete this task with as high a probability as possible.

Given this team task, the robots may both choose to navigate through the bottom valley in order to reach their targets.
This route is shorter than traveling through the top valley for both robots, and it avoids passing near the dangerous body of water.
However, they must take turns when passing through the shared bottom valley to ensure that the robots never get too close to each other.
Such behavior requires communication; both agents should share their current location and intended next action in order to avoid simultaneously entering the valley.

By contrast, if no communication is available, the robots may instead choose to navigate through different valleys altogether.
This joint behavior increases the risk that one might fall into the water, but it removes the requirement that the robots communicate.
\section{Decentralized Policy Execution Under Communication Loss}
\label{sec:imaginary_play}

Consider a scenario in which the team of agents lose communication during the execution of a joint policy. 
Under such circumstances, the agents cannot execute the policy as outlined in the previous section and as illustrated in Figure \ref{fig:policy_execution}.
Each agent must instead decide on a local action for itself, without knowing the local states or actions of its teammates.
To achieve this decentralized execution of the joint policy, we propose to use \textit{imaginary play}; each agent maintains imaginary copies of its teammates during periods of communication loss.
That is, given the joint policy, the stochastic dynamics of the Markov game, and the states of their teammates at the last timestep before communication was lost, the agents maintain simulated copies of their teammates' states.
Each agent then uses its own imaginary version of the entire team to sample a joint action from the policy, executes its own local component of that joint action, and then simulates the next states of its imaginary teammates.
In the next time step, this process repeats.

\begin{algorithm}[t]
    \DontPrintSemicolon 
    \SetKwBlock{DoParallel}{For every \(i \in [\numAgents]\) do in parallel}{end}
    \(t_{loss} = \infty\).
    
    \For{\( t = 0,1,\ldots \)}{
        
        \eIf{Communication is possible}{
        
            \DoParallel{
                Share $\mdpState^{i}_{t}$ with other agents.\;
                
                Set $\hat{\mdpState}^{j}_{t,i} = \mdpState^{j}_{t}$ for all $j\neq i$.\;
                
                Jointly decide on an action $\gameAction_{t}  \sim \jointPolicy(\gameState_{t})$.\;
                
                Set $\hat{\gameAction}_{t,i} = \gameAction_{t}$.\;
                
                Execute $\mdpAction^{i}_{t}$ and transition to $\mdpState^{i}_{t+1} \sim \mdpTransition(\mdpState^{i}_{t},\mdpAction^{i}_{t})$.\;
            }
        }{ 
        
            Set \(t_{loss} = t\). \textbf{break}\;
            
            }
            
       }
    
    \For{\( t = t_{loss},t_{loss}+1,\ldots\)}{
            
        \eIf{t = 0}{
            Set $\hat{\mdpState}^{j}_{t,i} = \mdpInitialState^{j}$ for all $j \neq i$. \;
        }{
            Sample $\hat{\mdpState}^{j}_{t,i} \sim \mdpTransition^{j}(\hat{\mdpState}^{j}_{t-1,i}, \hat{\mdpAction}^{j}_{t-1,i})$ for all $j \neq i$. \;
        }
        \DoParallel{
            Sample $\hat{\mdpState}^{j}_{t,i} \sim \mdpTransition^{j}(\hat{\mdpState}^{j}_{t-1,i}, \hat{\mdpAction}^{j}_{t,i})$ for all $j \neq i$. \;
            
            Decide on an action $\hat{\gameAction}_{t,i}  \sim \jointPolicy(\hat{\mdpState}^{1}_{t,i}, \ldots, \hat{\mdpState}^{i-1}_{t,i}, \mdpState^{i}_{t}, \hat{\mdpState}^{i+1}_{t,i}, \ldots, \hat{\mdpState}^{N}_{t,i})$.\;
    
            Execute $\hat{\mdpAction}^{i}_{t,i}$ and transition to $\mdpState^{i}_{t+1} \sim \mdpTransition(\mdpState^{i}_{t},\hat{\mdpAction}^{i}_{t,i})$.\;
        }
    }
    \caption{Policy Execution with Imaginary Play}
    \label{alg:imaginaryplay}
\end{algorithm}

\begin{algorithm}[t]
    \DontPrintSemicolon 
    \SetKwBlock{DoParallel}{For every \(i \in [\numAgents]\) do in parallel}{end}
    
    \For{\( t = 0,1,\ldots \)}{
        
        \eIf{Communication is possible}{
        
            \DoParallel{
                Share $\mdpState^{i}_{t}$ with other agents.\;
                
                Set $\hat{\mdpState}^{j}_{t,i} = \mdpState^{j}_{t}$ for all $j\neq i$.\;
                
                Jointly decide on an action $\gameAction_{t}  \sim \jointPolicy(\gameState_{t})$. \;
                
                Set $\hat{\gameAction}_{t,i} = \gameAction_{t}$.\;
                
                Execute $\mdpAction^{i}_{t}$ and transition to $\mdpState^{i}_{t+1} \sim \mdpTransition(\mdpState^{i}_{t},\mdpAction^{i}_{t})$.\;
            }
        }{

            \DoParallel{
                    \eIf{t = 0}{
            Set $\hat{\mdpState}^{j}_{t,i} = \mdpInitialState^{j}$ for all $j \neq i$. \;
        }{
            Sample $\hat{\mdpState}^{j}_{t,i} \sim \mdpTransition^{j}(\hat{\mdpState}^{j}_{t-1,i}, \hat{\mdpAction}^{j}_{t-1,i})$ for all $j \neq i$. \;
        }
                Decide on an action $\hat{\gameAction}_{t,i}  \sim \jointPolicy(\hat{\mdpState}^{1}_{t,i}, \ldots, \hat{\mdpState}^{i-1}_{t,i}, \mdpState^{i}_{t}, \hat{\mdpState}^{i+1}_{t,i}, \ldots, \hat{\mdpState}^{N}_{t,i})$.\;
        
                Execute $\hat{\mdpAction}^{i}_{t,i}$ and transition to $\mdpState^{i}_{t+1} \sim \mdpTransition(\mdpState^{i}_{t},\hat{\mdpAction}^{i}_{t,i})$.\;
            }
        }
    }

    \caption{Policy Execution with Intermittent Communication}
    \label{alg:intermittent}
\end{algorithm}

Algorithm \ref{alg:imaginaryplay} details this process of joint policy execution through imaginary play.
Before the communication breaks, every Agent \(i\) shares its state \(\mdpState^{i}_{t}\) with its teammates at every time step, and the agents collectively decide on a joint action \(\gameAction_{t}\). 
When the communication breaks at time \(t_{loss}\), every agent \(i\) starts to play with imaginary teammates. 
That is, based on the last joint action \(\hat{\gameAction}_{t_{loss - 1},i}\) prior to communication loss, every Agent \( i \) uses the joint transition function \(\gameTransition\) to sample an imaginary state \(\hat{\mdpState}^{j}_{t_{loss - 1}+1,i}\) for each of its teammates. 
Here, \(\hat{\mdpState}^{j}_{t,i}\) denotes Agent \(i\)'s belief on Agent \(j\)'s state at time \(t\), and  \(\hat{\gameAction}_{t,i}\) denotes Agent \(i\)'s belief on the joint action at time \(t\).
Then, at every time step \(t \geq t_{loss}\), every agent \(i\) samples a joint game action \(\hat{\gameAction}_{t,i}\) using the joint policy and these imagined teammate states \( \hat{\mdpState}^{1}_{t,i}, \ldots, \hat{\mdpState}^{\numAgents}_{t,i} \). 
Every agent \(i\) then executes the local part \(\hat{\mdpAction}^{i}_{t,i} \) of its joint action \(\hat{\gameAction}_{t,i}\) and transitions to its next local state \(\mdpState^{i}_{t+1,i}\). 
Based on its imagined joint action \(\hat{\gameAction}_{t,i}\) and the previous imagined teammate states \( \hat{\mdpState}^{1}_{t,i}, \ldots, \hat{\mdpState}^{\numAgents}_{t,i} \), every Agent \( i \) also samples next imaginary states \( \hat{\mdpState}^{1}_{t+1,i}, \ldots, \hat{\mdpState}^{\numAgents}_{t+1,i} \) for its teammates.

We remark that while every agent operates cooperatively with its imaginary teammates under a communication loss, the objective of the team is evaluated with respect to the true joint state.

In some scenarios, communication failures may be intermittent as opposed to being persistent. That is, the agents may re-gain communication capabilities after periods of communication loss.
For such scenarios, we propose that the agents follow imaginary play whenever communication is lost,
update their imaginary representations when communication is re-established,
and coordinate directly with their real teammates for as long as communication remains available.
Algorithm \ref{alg:intermittent} describes this proposed approach for policy execution with intermittent communication.

\section{Measuring the Intrinsic Dependencies Between the Agents}

\label{sec:measuring_dependencies}

Given a joint policy, the team's performance under imaginary play will differ from the performance that would have been achieved under full communication.
Recall that we measure the team's performance as their probability of reaching the set \(\targetSet \subseteq \gameStateSpace\) of target joint states from the initial joint state \(\gameInitialState\), while avoiding \(\deadSet \subseteq \gameStateSpace\).

Intuitively, the team's performance under imaginary play will depend on how much the behavior of any particular agent changes according to the behavior of its teammates, as well as on how much the behavior of an agent's imaginary teammates differs from that of its actual teammates.
In other words, if the joint policy induces high intrinsic dependencies between the agents, then policy execution using imaginary play will lead to different outcomes than policy execution with fully available communication.

Total correlation~\cite{watanabe1960information} measures the amount of information shared between multiple random variables. Let \(\mdpStateActionProcess^{i}\) be a random variable over the paths \(\mdpPath^{i} = \mdpState^i_0 \mdpAction^i_0 \mdpState^i_1 \mdpAction^i_1\ldots\) of Agent \(i\) and \(\gameStateActionProcess\) be a random variable over the joint paths \(\gamePath = \gameState_0 \gameAction_0 \gameState_1 \gameAction_1 \ldots\) of all agents induced by the joint policy \(\jointPolicy\) under full communication. 
We refer to the total correlation \(\totalCorrelation_{\jointPolicy}\) of joint policy \(\jointPolicy\) as
\[\totalCorrelation_{\jointPolicy} =  \left[ \sum_{i=1}^{N} \entropy(\mdpStateActionProcess^{i} ) \right]  - \entropy(\gameStateActionProcess) . \]
There are two contributing factors to the value of the total correlation.
Firstly, if the actions of a particular agent depend on the local states of its teammates, then this will increase the value of the total correlation.
Secondly, if the joint policy is randomized and the agents need to coordinate on an action -- the action of each agent depends on the actions simultaneously selected by its teammates -- then this will also increase the value of the total correlation.

If the total correlation is \( 0 \), then there are no dependencies between the agents, i.e., the path of any given agent is independent from those of its teammates. As the dependencies between the agents increase, so too does the value of the total correlation.
We additionally remark that when there are only two agents, the total correlation between the state-action processes of the agents is equivalent to the mutual information between them.

We accordingly propose to use total correlation as measure of the intrinsic dependencies between the agents induced by a particular joint policy. 
In the next section, we relate the value of total correlation to the team's performance under communication loss.

\section{Performance Guarantees Under Communication Loss}

\label{sec:theoretical_results}

In this section, we provide lower bounds on the team's performance under a particular joint policy during communication loss.
These theoretical results are accomplished by relating the total correlation of the joint policy to the distribution over paths induced by executing that policy using imaginary play. The proofs of all results are included in the supplementary material.

\paragraph{Relating total correlation to imaginary play} Let \( \gamePathDist^{\fullcommunication}\) be the distribution of joint paths induced by the joint policy executed with full communication. Also, let \(\gamePathDist^{\imaginary}_{0}\) be the distribution of joint paths under imaginary play with no communication, i.e., \(t_{loss} = 0 \) in Algorithm \ref{alg:imaginaryplay}. 
By the definition of total correlation, we have \[\totalCorrelation_{\jointPolicy} =  \left[ \sum_{i=1}^{N} \entropy(\mdpStateActionProcess^{i} ) \right]  - \entropy(\gameStateActionProcess) =  \kl(\gamePathDist^{\fullcommunication} || \gamePathDist^{\imaginary}_{0}).\] 
From this definition, we observe that when \(\totalCorrelation_{\jointPolicy} = 0\), the induced distributions \(\gamePathDist^{\fullcommunication}, \gamePathDist^{\imaginary}_{0}\) must be the same since \(\kl(\gamePathDist^{\fullcommunication} || \gamePathDist^{\imaginary}_{0})=0\).
Furthermore, as the value of \(\totalCorrelation_{\jointPolicy}\) increases, the KL divergence between \(\gamePathDist^{\fullcommunication}\) and \(\gamePathDist^{\imaginary}_{0}\) increases as well.

\paragraph{On the closeness between path distributions induced by different communication availabilities.}
The value of \(\totalCorrelation_{\jointPolicy}\) measures how much the distribution over paths \(\gamePathDist^{\imaginary}_0\) differs from \(\gamePathDist^{\fullcommunication}\) in the setting where the agents never communicate, i.e. \(t_{loss} = 0\).
We now consider a scenario in which the agents communicate and operate together for some time, then lose communication and switch to imaginary play at time \(t_{loss} >0\). 
Let \( \gamePathDist^{\imaginary}_{t_{loss}} \) be the distribution of joint paths for an arbitrary positive value of \(t_{loss}\). 
Intuitively, we expect that the initial period of communication should not increase the KL divergence between \(\gamePathDist^{\fullcommunication}\) and \(\gamePathDist^{\imaginary}_{t_{loss}}\) in comparison with the case when \(t_{loss} = 0\).
Lemma \ref{lemma:imaginary} confirms this intuition.

\begin{lemma} \label{lemma:imaginary}
For every \(t_{loss} \in \lbrace 0, 1, \ldots \rbrace \cup \lbrace \infty \rbrace\) in Algorithm \ref{alg:imaginaryplay}, \[\kl(\gamePathDist^{\fullcommunication} || \gamePathDist^{\imaginary}_{0}) \geq  \kl(\gamePathDist^{\fullcommunication} || \gamePathDist^{\imaginary}_{t_{loss}}). \] 
\end{lemma}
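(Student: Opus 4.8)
The plan is to prove the stronger fact that $t_{loss}\mapsto\kl(\gamePathDist^{\fullcommunication}\|\gamePathDist^{\imaginary}_{t_{loss}})$ is nonincreasing; the stated inequality is then the comparison of $t_{loss}=0$ against an arbitrary $t_{loss}$. The value $t_{loss}=\infty$ is trivial, since the agents then always communicate, $\gamePathDist^{\imaginary}_{\infty}=\gamePathDist^{\fullcommunication}$, and the right-hand side is $0\le\totalCorrelation_{\jointPolicy}=\kl(\gamePathDist^{\fullcommunication}\|\gamePathDist^{\imaginary}_{0})$. For finite $t_{loss}$ it suffices, by telescoping, to establish the one-step inequality $\kl(\gamePathDist^{\fullcommunication}\|\gamePathDist^{\imaginary}_{k})\ge\kl(\gamePathDist^{\fullcommunication}\|\gamePathDist^{\imaginary}_{k+1})$ for each integer $k\ge0$. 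Throughout, write $\gamePath_{\le t}=\gameState_0\gameAction_0\cdots\gameState_t$ for the length-$t$ joint prefix and $\mdpPath^i_{\le t}=\mdpState^i_0\mdpAction^i_0\cdots\mdpState^i_t$ for agent $i$'s own prefix.

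First I would expand both divergences with the chain rule for KL divergence, conditioning in turn on each next joint state and each next joint action. Because imaginary play still has every agent execute a genuine local action and transition by its own $\mdpTransition^i$, conditioned on $(\gameState_t,\gameAction_t)$ the next joint state is distributed as $\gameTransition(\gameState_t,\gameAction_t)$ under either measure, so all state-transition terms vanish; and since $\gamePathDist^{\imaginary}_{k}$ agrees with $\gamePathDist^{\fullcommunication}$ through the action at time $k-1$, the action terms for $t<k$ vanish as well. This leaves
\[ \kl(\gamePathDist^{\fullcommunication}\|\gamePathDist^{\imaginary}_{k})=\sum_{t\ge k}\ \expectation_{\gamePath\sim\gamePathDist^{\fullcommunication}}\Big[\kl\big(\jointPolicy(\gameState_t,\cdot)\,\|\,\gamePathDist^{\imaginary}_{k}(\gameAction_t\in\cdot\mid\gamePath_{\le t})\big)\Big]. \]

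The crucial step is to identify $\gamePathDist^{\imaginary}_{k}(\gameAction_t\in\cdot\mid\gamePath_{\le t})$ for $t\ge k$. I claim it is the product $\prod_{i=1}^{\numAgents}\mu^{(k),i}_t$ over agents, where $\mu^{(k),i}_t=\Pr_{\gamePathDist^{\fullcommunication}}(\mdpAction^i_t\in\cdot\mid\mdpPath^i_{\le t},\gameState_{k-1},\gameAction_{k-1})$ is the \emph{true} conditional law of agent $i$'s action given its own realized history and the last joint state--action pair observed before communication was lost (for $k=0$, given its own history and the fixed initial state). The reason is that when communication breaks, agent $i$ seeds imagined teammates by sampling their time-$k$ states from the exact conditional law $\prod_{j\ne i}\mdpTransition^{j}(\mdpState^{j}_{k-1},\mdpAction^{j}_{k-1},\cdot)$, then rolls them forward under $\jointPolicy$ while inserting its own real state into the query; because the joint dynamics factor over agents, this rollout --- conditioned on agent $i$'s realized path --- has exactly the law of the true full-communication process on the teammates' coordinates, so the action read off the rollout is distributed as $\mu^{(k),i}_t$. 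Using the elementary identity $\kl(P\,\|\,\prod_iQ_i)=\totalCorrelation(P)+\sum_i\kl(P_i\,\|\,Q_i)$ (with $P_i$ the $i$-th marginal and $\totalCorrelation(P)$ the total correlation), writing $\jointPolicy^{i}(\gameState_t,\cdot)$ for the marginal of $\jointPolicy(\gameState_t,\cdot)$ on agent $i$'s action, and noting that $\expectation_{\gamePathDist^{\fullcommunication}}[\totalCorrelation(\jointPolicy(\gameState_t,\cdot))]$ does not depend on $k$, subtraction gives
\[ \kl(\gamePathDist^{\fullcommunication}\|\gamePathDist^{\imaginary}_{k})-\kl(\gamePathDist^{\fullcommunication}\|\gamePathDist^{\imaginary}_{k+1})=\expectation_{\gamePathDist^{\fullcommunication}}\Big[\totalCorrelation(\jointPolicy(\gameState_k,\cdot))+\textstyle\sum_i\kl\big(\jointPolicy^{i}(\gameState_k,\cdot)\,\|\,\mu^{(k),i}_k\big)\Big]+\sum_{t\ge k+1}\sum_i\big(I^{(k)}_{i,t}-I^{(k+1)}_{i,t}\big), \]
where the first bracket is nonnegative and, recognising each remaining expected KL as a conditional mutual information, $I^{(m)}_{i,t}=I_{\gamePathDist^{\fullcommunication}}\big(\mdpAction^{i}_t;\gameState^{-i}_t\mid\mdpPath^i_{\le t},\gameState_{m-1},\gameAction_{m-1}\big)$.

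It remains to show $I^{(k)}_{i,t}\ge I^{(k+1)}_{i,t}$ for all $i$ and all $t\ge k+1$; this is the main obstacle, though I expect it to dissolve as follows. Since the joint process is Markov, conditioning on $(\gameState_k,\gameAction_k)$ together with agent $i$'s path screens off $(\gameState_{k-1},\gameAction_{k-1})$ for the future pair $(\mdpAction^{i}_t,\gameState^{-i}_t)$, and conditioning on $\gameState_t$ (determined by $\gameState^{-i}_t$ and $\mdpPath^i_{\le t}$) screens off both synchronised pairs for $\mdpAction^{i}_t$. Expanding $I_{\gamePathDist^{\fullcommunication}}\big(\mdpAction^{i}_t;\,\gameState^{-i}_t,(\gameState_k,\gameAction_k)\mid\mdpPath^i_{\le t},\gameState_{k-1},\gameAction_{k-1}\big)$ with the chain rule for mutual information in two ways then yields $I^{(k)}_{i,t}=I_{\gamePathDist^{\fullcommunication}}\big(\mdpAction^{i}_t;(\gameState_k,\gameAction_k)\mid\mdpPath^i_{\le t},\gameState_{k-1},\gameAction_{k-1}\big)+I^{(k+1)}_{i,t}\ge I^{(k+1)}_{i,t}$ by nonnegativity of mutual information, completing the one-step inequality and hence the lemma. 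The step I would treat most carefully is the structural identification of $\mu^{(k),i}_t$ above --- showing that, conditioned on agent $i$'s realized history, its imaginary-play action distribution is the genuine conditional law given the last synchronised joint state; this is precisely where the design of imaginary play (exact reseeding from $\mdpTransition^{j}$ and querying $\jointPolicy$ with one's own real state) and the transition-independence of the agents are used. One also has to check the absolute-continuity condition underlying the chain-rule expansion, which follows from the same structural characterization of $\gamePathDist^{\imaginary}_{k}$ and is in any case implicit in the paper's identity for $\totalCorrelation_{\jointPolicy}$.
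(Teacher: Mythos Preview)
Your argument is correct and establishes the stronger statement that $t_{loss}\mapsto\kl(\gamePathDist^{\fullcommunication}\|\gamePathDist^{\imaginary}_{t_{loss}})$ is monotone nonincreasing, but the route is quite different from the paper's. The paper does not compare consecutive values of $t_{loss}$; instead it fixes an arbitrary $t_{loss}\ge1$, splits $\kl(\gamePathDist^{\fullcommunication}\|\gamePathDist^{\imaginary}_{t_{loss}})$ at the loss time, uses that under $\gamePathDist^{\imaginary}_{t_{loss}}$ the agents are fully independent after $t_{loss}$ to rewrite the tail KL as $\sum_i\entropy(\mdpStateRandomVar^{i}_{\ge t_{loss}}\mdpActionRandomVar^{i}_{\ge t_{loss}}\mid\gameStateRandomVar_{0:t_{loss}-1}\gameActionRandomVar_{0:t_{loss}-1})-\entropy(\gameStateRandomVar_{\ge t_{loss}}\gameActionRandomVar_{\ge t_{loss}}\mid\gameStateRandomVar_{0:t_{loss}-1}\gameActionRandomVar_{0:t_{loss}-1})$, and then applies ``conditioning reduces entropy'' together with the nonnegativity of each stage-wise term $\sum_i\entropy(\mdpStateRandomVar^{i}_t\mdpActionRandomVar^{i}_t\mid\cdot)-\entropy(\gameStateRandomVar_t\gameActionRandomVar_t\mid\cdot)$ to bound the whole thing by $\totalCorrelation_{\jointPolicy}$. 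This is a one-shot entropy estimate rather than an inductive comparison.

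What the two approaches buy: the paper's argument is considerably shorter and uses only the chain rule and monotonicity of entropy under conditioning, never needing to identify the conditional action law under imaginary play. Your approach is more work---the structural identification of $\mu^{(k),i}_t$ as the genuine full-communication conditional $\Pr_{\gamePathDist^{\fullcommunication}}(\mdpAction^i_t\in\cdot\mid\mdpPath^i_{\le t},\gameState_{k-1},\gameAction_{k-1})$ is exactly the delicate point, and it does require the transition-independence assumption plus the fact that Algorithm~\ref{alg:imaginaryplay} reseeds from the exact transition kernel---but it pays off with the full monotonicity statement and a sharper picture of where the gap between successive $t_{loss}$ values comes from (namely a nonnegative conditional mutual information at each step). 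Your mutual-information screening argument for $I^{(k)}_{i,t}\ge I^{(k+1)}_{i,t}$ is sound: the two Markov screening facts you invoke hold for exactly the reasons you state.
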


We can similarly show that arbitrary intermittent communication does not increase the KL divergence between the induced path distributions. 
Let \(\sequenceCommAvailibility = \oneStepCommAvailibility_{0}, \oneStepCommAvailibility_{1},\ldots\) be a sequence of binary values such that  \(\oneStepCommAvailibility_{t} \)= 1 if and only if communication is available at time \(t\). 
The KL divergence between \( \gamePathDist^{\fullcommunication}\) and \(\gamePathDist^{\imaginary}_{0} \) is not higher than that between \( \gamePathDist^{\fullcommunication}\) and \(\gamePathDist^{\intermittent}_{\sequenceCommAvailibility} \), where \(\gamePathDist^{\intermittent}_{\sequenceCommAvailibility}\) is the distribution of paths under intermittent communication with an arbitrary sequence \(\sequenceCommAvailibility\) of communication availability. 
Furthermore, as shown in the second half of Lemma \ref{lemma:intermittent}, when \(\sequenceCommAvailibility = \lambda_0, \lambda_1, \ldots\) is a random sequence of communication availabilities, the communication dropout rate \(\probabilityFailureOneStep\) is related to the KL divergence between the distributions. 
\begin{lemma} \label{lemma:intermittent}
Let \(\sequenceCommAvailibility = \oneStepCommAvailibility_{0}, \oneStepCommAvailibility_{1},\ldots\) be an arbitrary sequence of communication availability in Algorithm \ref{alg:intermittent}. 
Then, \[\kl(\gamePathDist^{\fullcommunication} || \gamePathDist^{\imaginary}_{0}) \geq  \kl(\gamePathDist^{\fullcommunication} || \gamePathDist^{\intermittent}_{\sequenceCommAvailibility}). \] 

Let \(\sequenceCommAvailibility = \lambda_0, \lambda_1, \ldots\) be a random sequence of binary values such that every \(\oneStepCommAvailibility_{t}\) is independently sampled from a Bernoulli random variable with parameter \(1-\probabilityFailureOneStep\), and \(\gamePathDist^{\intermittent} = \expectation_{\sequenceCommAvailibility}\left[\gamePathDist^{\intermittent}_{\sequenceCommAvailibility}\right]\). Then,  \[\kl(\gamePathDist^{\fullcommunication} || \gamePathDist^{\imaginary}_{0}) \geq  \kl(\gamePathDist^{\fullcommunication} || \gamePathDist^{\intermittent})/\probabilityFailureOneStep. \]
\end{lemma}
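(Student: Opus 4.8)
The plan is to build both parts on a common ``information-processing'' view of imaginary play: at each time step the agents either receive a full synchronization (when $\oneStepCommAvailibility_t = 1$) or advance their imaginary copies independently (when $\oneStepCommAvailibility_t = 0$), and the second case is exactly one step of the independent-product dynamics whose path distribution is $\gamePathDist^{\imaginary}_{0}$. I would first establish a chain-rule decomposition of each relevant KL divergence over time steps, writing $\kl(\gamePathDist^{\fullcommunication} \,||\, \gamePathDist^{\intermittent}_{\sequenceCommAvailibility}) = \sum_{t} \expectation[\kl(\,\cdot\,||\,\cdot\,)]$ where the $t$-th term is the conditional KL between the true one-step joint transition-and-policy kernel and the imaginary one-step kernel, conditioned on the history. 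The key observation is that whenever $\oneStepCommAvailibility_t = 1$ the agents' imaginary states are reset to the true states, so the one-step conditional KL contributed at a synchronization step is bounded above by the corresponding term in the $t_{loss}=0$ decomposition; and whenever $\oneStepCommAvailibility_t = 0$ the imaginary kernel is literally the same product kernel used throughout $\gamePathDist^{\imaginary}_{0}$. In either case the per-step contribution is dominated by (at most) the per-step contribution to $\kl(\gamePathDist^{\fullcommunication} \,||\, \gamePathDist^{\imaginary}_{0})$. Summing over $t$ gives the first inequality. This is essentially the same argument as Lemma \ref{lemma:imaginary} with the single switch time $t_{loss}$ replaced by an arbitrary $0/1$ pattern $\sequenceCommAvailibility$, so I would try to factor out a shared per-step lemma and invoke it in both proofs.

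For the second part I would use the fact that $\gamePathDist^{\intermittent} = \expectation_{\sequenceCommAvailibility}[\gamePathDist^{\intermittent}_{\sequenceCommAvailibility}]$ is a mixture over the Bernoulli-random availability sequences, and apply convexity of KL divergence in its second argument: $\kl(\gamePathDist^{\fullcommunication} \,||\, \gamePathDist^{\intermittent}) \le \expectation_{\sequenceCommAvailibility}[\kl(\gamePathDist^{\fullcommunication} \,||\, \gamePathDist^{\intermittent}_{\sequenceCommAvailibility})]$. Now I reuse the per-step decomposition: $\kl(\gamePathDist^{\fullcommunication}\,||\,\gamePathDist^{\intermittent}_{\sequenceCommAvailibility})$ equals a sum over $t$ of conditional-KL terms that vanish whenever $\oneStepCommAvailibility_t = 1$ (because a synchronization step makes the true and imaginary kernels coincide — the imaginary states are exactly the true ones and the policy is applied to the true joint state), and otherwise are bounded by the $t_{loss}=0$ per-step term. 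Taking expectation over $\sequenceCommAvailibility$, each $t$-th term survives only with probability $\probabilityFailureOneStep$ (the event $\oneStepCommAvailibility_t = 0$), independent of the history up to a conditioning argument, so $\expectation_{\sequenceCommAvailibility}[\kl(\gamePathDist^{\fullcommunication}\,||\,\gamePathDist^{\intermittent}_{\sequenceCommAvailibility})] \le \probabilityFailureOneStep \sum_t (\text{per-step term of } \kl(\gamePathDist^{\fullcommunication}\,||\,\gamePathDist^{\imaginary}_{0})) = \probabilityFailureOneStep \cdot \kl(\gamePathDist^{\fullcommunication}\,||\,\gamePathDist^{\imaginary}_{0})$. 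Rearranging yields $\kl(\gamePathDist^{\fullcommunication}\,||\,\gamePathDist^{\imaginary}_{0}) \ge \kl(\gamePathDist^{\fullcommunication}\,||\,\gamePathDist^{\intermittent})/\probabilityFailureOneStep$.

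The main obstacle I anticipate is making the per-step decomposition rigorous, since the imaginary-play path distributions live over infinite trajectories and an agent's imaginary copies of its teammates are not a deterministic function of the shared history — each agent $i$ carries its own $\hat{\mdpState}^{j}_{t,i}$, so the ``state'' of the process $\gamePathDist^{\intermittent}_{\sequenceCommAvailibility}$ is really the tuple of true states together with all agents' imaginary beliefs. I would handle this by defining an augmented Markov chain on this enlarged state space, checking that the marginal on true joint paths is the object we care about, and then applying the KL chain rule to the augmented chains; the reset events $\oneStepCommAvailibility_t = 1$ collapse the augmented state back onto the diagonal where all beliefs agree with the truth, which is what makes the per-step terms there either zero (for the true-vs-intermittent comparison) or no larger than the no-communication terms. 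A secondary technical point is justifying the interchange of expectation over $\sequenceCommAvailibility$ with the infinite sum over $t$ and the conditioning needed to decouple $\oneStepCommAvailibility_t$ from the history; since all terms are nonnegative, monotone convergence (Tonelli) should suffice, and the independence of $\oneStepCommAvailibility_t$ from everything else makes the $\probabilityFailureOneStep$ factor come out cleanly.
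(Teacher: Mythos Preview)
Your proposal is correct and matches the paper's approach: chain-rule decomposition of the KL divergence, zero contribution at synchronization steps (not merely bounded --- at a sync step the intermittent kernel coincides exactly with the full-communication kernel), per-step bounds at non-sync steps that sum to $\totalCorrelation_{\jointPolicy}$, convexity of KL in the second argument for the mixture $\gamePathDist^{\intermittent}$, and linearity of expectation together with independence of the $\oneStepCommAvailibility_t$'s to extract the factor $\probabilityFailureOneStep$. The only organizational difference is that the paper decomposes the chain rule over maximal no-communication blocks $[l_j, r_j-1]$ rather than individual steps; because the imaginary states are synced at each block's start, the intermittent marginal on the true joint path within a block factorizes exactly as the product of the individual full-communication marginals, which delivers the key intermediate bound $\kl(\gamePathDist^{\fullcommunication}\,\|\,\gamePathDist^{\intermittent}_{\sequenceCommAvailibility}) \le \sum_t (1-\oneStepCommAvailibility_t)\,[\text{step-}t\text{ term of }\totalCorrelation_{\jointPolicy}]$ without any augmented-chain machinery.
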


Lemmas \ref{lemma:imaginary} and \ref{lemma:intermittent} bound the KL divergence between path distributions when the communication availability is independent from the histories of the agents. 
In  practice, communication availability may depend on the state-action processes of the agents. 
For example, in the multiagent navigation task depicted in Figure \ref{fig:running_example}, the agents may not be able to communicate if they do not have line-of-sight, e.g., when they are on the opposite sides of the mountains. Lemma \ref{lemma:imaginarystronger} shows a stronger result: The distribution over joint paths under imaginary play is close to \(\gamePathDist^{\fullcommunication}\) even when the communication availability is a function of the agents' histories.
\begin{lemma} \label{lemma:imaginarystronger}
Let \(\genericFunction:(\gameStateSpace\times\gameActionSpace)^{*}\to \lbrace 0, 1\rbrace\) be an arbitrary function that determines the communication availability based on the team's joint history such that \(\oneStepCommAvailibility_{0} = \genericFunction(\emptyString)\) and \(\oneStepCommAvailibility_{t} = \genericFunction(\gameState_{0}\gameAction_{0}\ldots\gameState_{t-1}\gameAction_{t-1})\). 
Let \(\gamePathDist^{\imaginary}_{\genericFunction} \) be the distribution over joint paths induced by imaginary play (Algorithm \ref{alg:imaginaryplay}) and communication availability dictated by \(\genericFunction\). Then, \[\kl(\gamePathDist^{\fullcommunication} || \gamePathDist^{\imaginary}_{0}) \geq  \kl(\gamePathDist^{\fullcommunication} || \gamePathDist^{\imaginary}_{\genericFunction}). \] 
\end{lemma}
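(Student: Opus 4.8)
The plan is to reduce the inequality to a term-by-term comparison, over time steps, against the identity $\totalCorrelation_{\jointPolicy} = \kl(\gamePathDist^{\fullcommunication} \,||\, \gamePathDist^{\imaginary}_{0})$ recorded above. Write $\gamePath_{0:t} = \gameState_{0}\gameAction_{0}\cdots\gameState_{t}$ for a joint history up to time $t$. Since $\gamePathDist^{\fullcommunication}$ and $\gamePathDist^{\imaginary}_{\genericFunction}$ generate every next joint state from the same kernel $\gameTransition$, expanding $\kl(\gamePathDist^{\fullcommunication} \,||\, \gamePathDist^{\imaginary}_{\genericFunction})$ by the chain rule over the sequence $\gameState_{0},\gameAction_{0},\gameState_{1},\gameAction_{1},\ldots$ makes all state contributions vanish and leaves only the action contributions:
\[
\kl(\gamePathDist^{\fullcommunication} \,||\, \gamePathDist^{\imaginary}_{\genericFunction}) = \sum_{t=0}^{\infty} \expectation_{\gamePath_{0:t}\sim \gamePathDist^{\fullcommunication}}\Big[ \kl\big(\jointPolicy(\gameState_{t},\cdot) \,||\, \gamePathDist^{\imaginary}_{\genericFunction}(\gameAction_{t}\mid \gamePath_{0:t})\big)\Big],
\]
using that the conditional law of $\gameAction_{t}$ under $\gamePathDist^{\fullcommunication}$ is exactly $\jointPolicy(\gameState_{t},\cdot)$, and writing $\gamePathDist^{\imaginary}_{\genericFunction}(\gameAction_{t}\mid \gamePath_{0:t})$ for the conditional law of $\gameAction_{t}$ given the true history under imaginary play. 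The same expansion with $\genericFunction\equiv 0$ yields $\totalCorrelation_{\jointPolicy}$, so it is enough to bound the $t$-th summand for $\genericFunction$ by the $t$-th summand for the all-zero schedule, for every $t$.

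Next I would describe the imaginary-play conditional action law. Given $\gamePath_{0:t}$, the communication pattern up to time $t$ is determined by $\genericFunction$, and each Agent $i$ executes a component sampled from $\jointPolicy$ applied to its own true state $\mdpState^{i}_{t}$ together with its beliefs about its teammates --- beliefs that were last set equal to the truth at the most recent communication time $\tau\le t$ and rolled forward since. Because the agents use independent randomness in their rollouts, these executed components are conditionally independent given $\gamePath_{0:t}$, so $\gamePathDist^{\imaginary}_{\genericFunction}(\gameAction_{t}\mid \gamePath_{0:t})$ is a product distribution over joint actions, and when communication is available at time $t$ this product is exactly $\jointPolicy(\gameState_{t},\cdot)$, making the summand $0$. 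Comparing $\genericFunction$ with the all-zero schedule term by term then reduces the lemma to the following monotonicity: moving the last reset time $\tau$ later (the all-zero schedule corresponds to $\tau=0$) does not increase the expected $\kl$ from $\jointPolicy(\gameState_{t},\cdot)$ to this product of imagined action marginals.

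The crux is this monotonicity, which I would establish by comparing consecutive reset times $\tau-1$ and $\tau$. From time $\tau$ onward the two belief processes evolve under the same dynamics (sample an imagined joint action from $\jointPolicy$, then apply the individual transition kernels), so the product of imagined action marginals at time $t$ is an affine image of the belief law at time $\tau$; convexity of $\kl(\jointPolicy(\gameState_{t},\cdot)\,||\,\cdot)$ in its second argument lets one pass the outer expectation through, reducing the claim to the statement that pinning an agent's beliefs to the \emph{true} joint state at time $\tau$ is at least as good, in the averaged-$\kl$ sense, as leaving them as a free rollout from the previous reset. This is the ``restart'' counterpart of the fact that $\gamePathDist^{\imaginary}_{0}$ equals the product of the single-agent marginals of $\gamePathDist^{\fullcommunication}$ (the fact underlying $\totalCorrelation_{\jointPolicy} = \kl(\gamePathDist^{\fullcommunication}\,||\,\gamePathDist^{\imaginary}_{0})$): conditioned on the true joint state at a reset time, an agent's subsequent imagined rollout reproduces the correct law of the information it acts on. I expect this to be the main obstacle, since the two belief processes are correlated with the true history in genuinely different ways --- the free-rollout one never observes the true state at time $\tau$ --- so a naive coupling need not preserve the $\kl$ ordering; the argument must track the joint law of the true history and all agents' beliefs and exploit both the product structure of $\gameTransition$ and the structure of imaginary play.

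Finally, I would note that this argument subsumes Lemmas~\ref{lemma:imaginary} and~\ref{lemma:intermittent} (first part): a deterministic $t_{loss}$ and an arbitrary fixed binary sequence $\sequenceCommAvailibility$ are special cases of a history-dependent $\genericFunction$, and the per-step comparison applies verbatim. The divergences are well defined because every joint path with positive probability under $\gamePathDist^{\fullcommunication}$ also has positive probability under imaginary play with any communication schedule.
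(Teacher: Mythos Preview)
Your chain-rule decomposition into per-step action-KL contributions is correct, and so is the observation that the conditional action law under imaginary play, given the true joint history, factors across agents. The gap is precisely where you put it: the per-step monotonicity in the reset time. The convexity move you sketch points the wrong way for what you need --- convexity of $\kl(P\,\|\,\cdot)$ gives $\kl(P\,\|\,\expectation[Q]) \le \expectation[\kl(P\,\|\,Q)]$, which upper-bounds the term with the \emph{averaged} belief by the term with a \emph{random} belief, whereas you still have to compare the averaged-belief term against the term with the \emph{true} state plugged in. That last comparison is not a consequence of convexity, and it is not clear that the inequality even holds for each $t$ separately; only the sum over $t$ is required, and that is all the paper establishes.

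The paper sidesteps the per-step comparison entirely. It introduces the (random, history-dependent) stopping time $\tau$ at which the joint prefix first lands in either $W_{loss}$ (histories triggering the first loss) or $W_{\neg loss}$ (histories after which $\genericFunction$ guarantees communication forever). Because imaginary play coincides with full communication up to $\tau$, and forever on $W_{\neg loss}$ paths, the KL reduces to the post-$\tau$ contribution over $W_{loss}$ prefixes, where the post-$\tau$ law under $\gamePathDist^{\imaginary}_{\genericFunction}$ is the product of agent marginals conditioned on the full joint prefix $\gameState_0\gameAction_0\ldots\gameState_\tau\gameAction_\tau$. The key trick is then to \emph{add back} two nonnegative KL terms --- the product-form KL for the prefix, and for the post-$\tau$ block on $W_{\neg loss}$ paths --- so that the resulting expression becomes a clean entropy difference with conditioning on the full joint prefix. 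A single application of ``conditioning reduces entropy'' (replacing the joint prefix by agent $i$'s own prefix in each $\entropy(\mdpStateRandomVar^{i}_{\tau+1}\mdpActionRandomVar^{i}_{\tau+1}\ldots\mid\cdot)$ term) then collapses everything, via the chain rule, to $\totalCorrelation_{\jointPolicy}=\kl(\gamePathDist^{\fullcommunication}\,\|\,\gamePathDist^{\imaginary}_{0})$. So the missing idea is not a sharper per-step estimate but the add-back-then-relax-conditioning maneuver, which delivers the global inequality in one stroke.
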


We remark that Algorithms \ref{alg:imaginaryplay} and \ref{alg:intermittent} are agnostic to when the future communication failures happen. The lemmas do not assume a priori knowledge of the sequence of communication availability. 

\paragraph{On the value of the reach-avoid probability under communication loss}
We use the above results on the KL divergence between distributions of paths to derive bounds on the reach-avoid probability achieved by a particular joint policy under communication loss.

Let $\gameValue^{\fullcommunication}$ be the reach-avoid probability induced by a joint policy with full communication,
$\gameValue^{\imaginary}$ be the reach-avoid probability of the same policy under imaginary play (Algorithm \ref{alg:imaginaryplay}), and 
$\gameValue^{\intermittent}$ be the reach-avoid probability under intermittent communication (Algorithm \ref{alg:intermittent}). 
Also, let \( \deadSetPrime\) be the states from which the probability of reaching \(\targetSet\) is \(0\) under the joint policy. Define \(\len(\gamePath = \gameState_0 \gameAction_0\ldots) = \min\lbrace t+1 | \gameState_{t} \in \targetSet \cup \deadSetPrime \rbrace \) and \(\expectedLength^{\fullcommunication} = \expectation[\len(\gamePath) | \gamePath \sim \gamePathDist^{\fullcommunication}]\).

Theorem \ref{theorem:imaginarystonger} shows that the reach-avoid probability of a joint policy under imaginary play is lower-bounded by a function of the policy's reach-avoid probability with full communication and the value of \(\totalCorrelation_{\jointPolicy}\), even when the communication availability depends on the agents' histories. 

\begin{theorem} \label{theorem:imaginarystonger}
Let \(\genericFunction:(\gameStateSpace\times\gameActionSpace)^{*}\to \lbrace 0, 1\rbrace\) be an arbitrary function that determines the communication availability based on the history of the agents such that \(\oneStepCommAvailibility_{t} = \genericFunction(\gameState_{0}\gameAction_{0}\ldots\gameState_{t-1}\gameAction_{t-1})\).  For this system,
\[\gameValue^{\imaginary} \geq \gameValue^{\fullcommunication} - \sqrt{1-\exp(-\totalCorrelation_{\jointPolicy})}. \]
\end{theorem}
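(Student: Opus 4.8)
The plan is to bound the gap $\gameValue^{\fullcommunication} - \gameValue^{\imaginary}$ by the total variation distance between the two induced path distributions, and then to convert the total-correlation bound on the KL divergence into a total variation bound via the Bretagnolle--Huber inequality.

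First I would fix the history-dependent communication schedule $\genericFunction$ and write $\gamePathDist^{\imaginary}_{\genericFunction}$ for the resulting distribution over joint paths under imaginary play, so that $\gameValue^{\imaginary} = \gamePathDist^{\imaginary}_{\genericFunction}(E)$ and $\gameValue^{\fullcommunication} = \gamePathDist^{\fullcommunication}(E)$, where $E$ is the (measurable) set of joint paths $\gamePath = \gameState_0\gameAction_0\gameState_1\gameAction_1\ldots$ that reach $\targetSet$ while avoiding $\deadSet$. Since $E$ is a single event, $\gameValue^{\fullcommunication} - \gameValue^{\imaginary} = \gamePathDist^{\fullcommunication}(E) - \gamePathDist^{\imaginary}_{\genericFunction}(E) \le \| \gamePathDist^{\fullcommunication} - \gamePathDist^{\imaginary}_{\genericFunction} \|_{\mathrm{TV}}$, so it suffices to show $\| \gamePathDist^{\fullcommunication} - \gamePathDist^{\imaginary}_{\genericFunction} \|_{\mathrm{TV}} \le \sqrt{1-\exp(-\totalCorrelation_{\jointPolicy})}$. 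Because the game has finitely many states and actions, both measures are determined by their finite-horizon marginals, and the total variation distance and the KL divergence both equal the supremum over the horizon of their finite-horizon counterparts; this lets the information-theoretic steps below be carried out with genuinely discrete distributions and then passed to the limit.

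The core step is the inequality $\| P - Q \|_{\mathrm{TV}} \le \sqrt{1-\exp(-\kl(P\|Q))}$ applied with $P = \gamePathDist^{\fullcommunication}$ and $Q = \gamePathDist^{\imaginary}_{\genericFunction}$. A self-contained derivation goes through the Bhattacharyya coefficient $\rho = \sum_{\gamePath}\sqrt{P(\gamePath)\,Q(\gamePath)}$: writing the total variation distance as $\frac{1}{2}\sum_{\gamePath} | \sqrt{P(\gamePath)} - \sqrt{Q(\gamePath)} | \, ( \sqrt{P(\gamePath)} + \sqrt{Q(\gamePath)} )$ and applying the Cauchy--Schwarz inequality gives $\| P - Q \|_{\mathrm{TV}} \le \sqrt{1-\rho^2}$, while Jensen's inequality for the concave logarithm gives $-2\log\rho = -2\log \expectation_{\gamePath \sim P}\big[ \sqrt{Q(\gamePath)/P(\gamePath)} \, \big] \le \kl(P\|Q)$, i.e.\ $\rho^2 \ge \exp(-\kl(P\|Q))$; chaining the two yields the stated bound.

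To finish, I would invoke the earlier results: Lemma \ref{lemma:imaginarystronger} gives $\kl(\gamePathDist^{\fullcommunication}\|\gamePathDist^{\imaginary}_{\genericFunction}) \le \kl(\gamePathDist^{\fullcommunication}\|\gamePathDist^{\imaginary}_{0})$, and the identity recalled in \S\ref{sec:theoretical_results} gives $\kl(\gamePathDist^{\fullcommunication}\|\gamePathDist^{\imaginary}_{0}) = \totalCorrelation_{\jointPolicy}$. Since $x \mapsto \sqrt{1-\exp(-x)}$ is nondecreasing on $[0,\infty)$, combining these with the previous paragraph gives $\| \gamePathDist^{\fullcommunication} - \gamePathDist^{\imaginary}_{\genericFunction} \|_{\mathrm{TV}} \le \sqrt{1-\exp(-\totalCorrelation_{\jointPolicy})}$, hence $\gameValue^{\imaginary} \ge \gameValue^{\fullcommunication} - \sqrt{1-\exp(-\totalCorrelation_{\jointPolicy})}$. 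I expect the main obstacle to be exactly this KL-to-probability conversion: Pinsker's inequality would only give $\sqrt{\totalCorrelation_{\jointPolicy}/2}$, which is vacuous once $\totalCorrelation_{\jointPolicy} \ge 2$, so the sharper Bretagnolle--Huber/Hellinger route is essential, and the one place needing genuine care is making it rigorous on the infinite path space, which is handled by the finite-horizon truncation noted above. The remaining ingredients --- the event-versus-total-variation bound, Lemma \ref{lemma:imaginarystronger}, and the monotonicity of $\sqrt{1-e^{-x}}$ --- are routine.
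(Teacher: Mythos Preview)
Your proposal is correct and follows essentially the same route as the paper: bound the value gap by the total variation distance between $\gamePathDist^{\fullcommunication}$ and $\gamePathDist^{\imaginary}_{\genericFunction}$, apply the Bretagnolle--Huber inequality to pass from KL to total variation, and then invoke Lemma~\ref{lemma:imaginarystronger} together with the identity $\kl(\gamePathDist^{\fullcommunication}\|\gamePathDist^{\imaginary}_{0}) = \totalCorrelation_{\jointPolicy}$. The paper simply cites Bretagnolle--Huber rather than deriving it, but your Bhattacharyya/Cauchy--Schwarz argument and the finite-horizon truncation remark are welcome additions that do not change the structure of the proof.
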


We now consider the setting in which the team's communication fails at some random time \(t_{loss} \geq 0\) and does not recover thereafter.
When \(t_{loss}\) follows a geometric distribution, we derive a stronger bound that relates the probability of communication failure at each time step to the reach-avoid probability under imaginary play.
\begin{theorem} \label{theorem:imaginary}
 Consider a communication system that fails with probability \(\probabilityFailureForever\) at any communication step and never recovers, i.e.,  \(\Pr(t_{loss} = t) = (1-\probabilityFailureForever)^{t}\probabilityFailureForever\) in Algorithm \ref{alg:imaginaryplay}. For this system, 
\[\gameValue^{\imaginary} \geq \max\big(\gameValue^{\fullcommunication} - \sqrt{1-\exp(-\totalCorrelation_{\jointPolicy})},  \gameValue^{\fullcommunication}  (1-\probabilityFailureForever)^{\frac{\expectedLength^{\fullcommunication}}{\gameValue^{\fullcommunication}}} \big). \]
\end{theorem}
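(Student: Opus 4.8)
The plan is to prove the two arguments of the maximum as separate lower bounds on $\gameValue^{\imaginary}$; since $\gameValue^{\imaginary}$ dominates each, it dominates their maximum. For the first bound, $\gameValue^{\imaginary} \geq \gameValue^{\fullcommunication} - \sqrt{1-\exp(-\totalCorrelation_{\jointPolicy})}$, I would condition on the loss time and reduce to Theorem~\ref{theorem:imaginarystonger}. For each fixed $t_{0}\in\{0,1,\ldots\}\cup\{\infty\}$, the deterministic loss time $t_{loss}=t_{0}$ is realized by the history-independent availability function $\genericFunction_{t_{0}}$ that returns $1$ while fewer than $t_{0}$ steps have elapsed and $0$ thereafter, and the path distribution it induces under Algorithm~\ref{alg:imaginaryplay} is exactly $\gamePathDist^{\imaginary}_{t_{0}}$; Theorem~\ref{theorem:imaginarystonger} then gives $\gameValue^{\fullcommunication} - \sqrt{1-\exp(-\totalCorrelation_{\jointPolicy})}$ as a lower bound on the reach-avoid probability conditioned on $t_{loss}=t_{0}$, uniformly in $t_{0}$. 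Averaging over the geometric law of $t_{loss}$ via the law of total probability preserves this bound. (Alternatively, combining Lemma~\ref{lemma:imaginary} with convexity of the KL divergence in its second argument gives $\kl(\gamePathDist^{\fullcommunication} || \gamePathDist^{\imaginary}) \leq \totalCorrelation_{\jointPolicy}$ for the mixture $\gamePathDist^{\imaginary}=\expectation_{t_{loss}}[\gamePathDist^{\imaginary}_{t_{loss}}]$, after which the Bretagnolle--Huber inequality and the fact that total variation distance upper bounds differences of event probabilities finish the argument.)

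For the second bound I would use a coupling between the full-communication execution and imaginary play. Before time $t_{loss}$, Algorithm~\ref{alg:imaginaryplay} performs exactly the full-communication update, so the two processes can be run on a common probability space sharing the joint states $\gameState_{0},\ldots,\gameState_{t_{loss}}$, with $t_{loss}$ an independent geometric loss time satisfying $\Pr(t_{loss}\geq k)=(1-\probabilityFailureForever)^{k}$; let $\gamePath\sim\gamePathDist^{\fullcommunication}$ be the shared full-communication path. Because the joint policy reaches $\targetSet$ with probability zero from every state in $\deadSetPrime$, for $\gamePathDist^{\fullcommunication}$-almost every path that succeeds (reaches $\targetSet$ while avoiding $\deadSet$), the first visit to $\targetSet$ occurs exactly at time $\len(\gamePath)-1$. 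Hence on the event $\{\gamePath\text{ succeeds}\}\cap\{t_{loss}\geq \len(\gamePath)\}$ the true joint state of the imaginary-play process agrees with $\gamePath$ through time $\len(\gamePath)-1$ and therefore succeeds as well, so by independence of $t_{loss}$ and $\gamePath$, $\gameValue^{\imaginary}\geq\expectation_{\gamePath\sim\gamePathDist^{\fullcommunication}}\big[\mathbbm{1}\{\gamePath\text{ succeeds}\}\,(1-\probabilityFailureForever)^{\len(\gamePath)}\big]$. Assuming $\gameValue^{\fullcommunication}>0$ (the bound being trivial otherwise), I would rewrite the right-hand side as $\gameValue^{\fullcommunication}\,\expectation[(1-\probabilityFailureForever)^{\len(\gamePath)}\mid\gamePath\text{ succeeds}]$, apply Jensen's inequality to the convex map $x\mapsto(1-\probabilityFailureForever)^{x}$, bound $\expectation[\len(\gamePath)\mid\gamePath\text{ succeeds}]\leq\expectation[\len(\gamePath)]/\gameValue^{\fullcommunication}=\expectedLength^{\fullcommunication}/\gameValue^{\fullcommunication}$ (using $\len\geq 0$), and use monotonicity of $x\mapsto(1-\probabilityFailureForever)^{x}$ to conclude $\gameValue^{\imaginary}\geq\gameValue^{\fullcommunication}(1-\probabilityFailureForever)^{\expectedLength^{\fullcommunication}/\gameValue^{\fullcommunication}}$.

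I expect the coupling step for the second bound to be the main obstacle: one must check carefully that Algorithm~\ref{alg:imaginaryplay} reproduces the full-communication dynamics on the entire prefix $\{0,\ldots,t_{loss}\}$ (so that, on the relevant event, the outcome of the task is already settled before communication is lost), track the off-by-one between the loss time and the target-hitting time $\len(\gamePath)-1$, and dispose of the degenerate cases $\gameValue^{\fullcommunication}=0$, $\probabilityFailureForever=0$, and $\expectedLength^{\fullcommunication}=\infty$. The remaining steps---Jensen's inequality and the conditional-expectation bound---are routine.
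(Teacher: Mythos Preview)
Your proposal is correct and matches the paper's proof essentially line for line: for the second bound the paper also restricts to the event that communication survives until the task is decided, rewrites the resulting expectation as $\gameValue^{\fullcommunication}\,\expectation[(1-\probabilityFailureForever)^{\len(\gamePath)}\mid\text{success}]$, applies Jensen, and then uses $\expectation[\len(\gamePath)\mid\text{success}]\leq \expectedLength^{\fullcommunication}/\gameValue^{\fullcommunication}$; for the first bound the paper uses exactly your parenthetical ``alternative'' route (convexity of KL on the mixture $\gamePathDist^{\imaginary}=\expectation_{t_{loss}}[\gamePathDist^{\imaginary}_{t_{loss}}]$, then Lemma~\ref{lemma:imaginary}, then Bretagnolle--Huber), while your primary route via Theorem~\ref{theorem:imaginarystonger} conditioned on $t_{loss}$ is an equally valid minor variant.
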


When communication availability is intermittent, and can be modeled by a Bernoulli process, the reach-avoid probability under intermittent communication is directly lower-bounded by a function of the communication dropout rate \(\probabilityFailureOneStep\). 
We remark that the lower bound provides a means to select communication resources that are sufficient to achieve a particular performance while using noisy communication channels. 
In detail, consider a noisy communication channel on which the team must communicate. 
The code rate~\cite{cover1991elements} can be adjusted according to the desired value of \(\probabilityFailureOneStep\), which in turn determines the value of the lower bound on \(\gameValue^{\intermittent}\).
\begin{theorem}\label{theorem:intermittentstructured}
Consider a communication system that fails with probability \(\probabilityFailureOneStep\) at any communication step independent from the other communication steps. For this system, \[\gameValue^{\intermittent} \geq \max\big(\gameValue^{\fullcommunication} - \sqrt{1-\exp(-\probabilityFailureOneStep\totalCorrelation_{\jointPolicy})},  \gameValue^{\fullcommunication}  (1-\probabilityFailureOneStep)^{\frac{\expectedLength^{\fullcommunication}}{\gameValue^{\fullcommunication}}} \big).\]
\end{theorem}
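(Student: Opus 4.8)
\noindent The plan is to prove the two bounds inside the maximum separately and then combine them, following the same two-pronged strategy as the proof of Theorem~\ref{theorem:imaginary} but with the i.i.d.\ Bernoulli dropout process in place of the geometric failure time.

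For the first term, $\gameValue^{\fullcommunication} - \sqrt{1-\exp(-\probabilityFailureOneStep\totalCorrelation_{\jointPolicy})}$, I would start from the second half of Lemma~\ref{lemma:intermittent}, which gives \[\kl(\gamePathDist^{\fullcommunication} || \gamePathDist^{\intermittent}) \le \probabilityFailureOneStep \cdot \kl(\gamePathDist^{\fullcommunication} || \gamePathDist^{\imaginary}_{0}) = \probabilityFailureOneStep\, \totalCorrelation_{\jointPolicy},\] where the last equality is the identity $\totalCorrelation_{\jointPolicy} = \kl(\gamePathDist^{\fullcommunication} || \gamePathDist^{\imaginary}_{0})$ recorded in \S\ref{sec:theoretical_results}. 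Since the reach--avoid event is a measurable subset of the space of joint paths, $|\gameValue^{\fullcommunication} - \gameValue^{\intermittent}|$ is bounded by the total variation distance $\|\gamePathDist^{\fullcommunication} - \gamePathDist^{\intermittent}\|_{TV}$; applying the Bretagnolle--Huber inequality $\|P-Q\|_{TV} \le \sqrt{1 - \exp(-\kl(P||Q))}$ together with the displayed KL bound, then rearranging, yields $\gameValue^{\intermittent} \ge \gameValue^{\fullcommunication} - \sqrt{1 - \exp(-\probabilityFailureOneStep\totalCorrelation_{\jointPolicy})}$.

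For the second term, $\gameValue^{\fullcommunication}(1-\probabilityFailureOneStep)^{\expectedLength^{\fullcommunication}/\gameValue^{\fullcommunication}}$, I would construct an explicit coupling of a full-communication path $\gamePath \sim \gamePathDist^{\fullcommunication}$, an i.i.d.\ Bernoulli$(1-\probabilityFailureOneStep)$ availability sequence $\sequenceCommAvailibility$ independent of $\gamePath$, and an intermittent-communication path $\gamePath^{\intermittent} \sim \gamePathDist^{\intermittent}$ in which all action- and transition-sampling randomness is shared. Arguing by induction on $t$: as long as communication was available at every step $0, \ldots, t-1$, Algorithm~\ref{alg:intermittent} has kept each agent's imagined teammate states equal to the true ones, so the joint action at step $t$ (when available) is sampled from $\jointPolicy(\gameState_t)$ exactly as under full communication, and the shared randomness makes the two paths agree on $\gameState_0, \ldots, \gameState_t$. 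Hence whenever $\gamePath$ is a success and communication is available at every step before $\gamePath$ first reaches $\targetSet \cup \deadSetPrime$, the path $\gamePath^{\intermittent}$ is a success as well. Taking expectations over $\gamePath$, using independence of $\sequenceCommAvailibility$, and noting that the relevant continuous-availability event has probability at least $(1-\probabilityFailureOneStep)^{\len(\gamePath)}$, gives \[\gameValue^{\intermittent} \ge \expectation_{\gamePath \sim \gamePathDist^{\fullcommunication}}\!\left[\mathbbm{1}[\gamePath \text{ succeeds}] \, (1-\probabilityFailureOneStep)^{\len(\gamePath)}\right] = \gameValue^{\fullcommunication}\, \expectation\!\left[(1-\probabilityFailureOneStep)^{\len(\gamePath)} \mid \gamePath \text{ succeeds}\right].\] Then convexity of $u \mapsto (1-\probabilityFailureOneStep)^u$ with Jensen's inequality lower-bounds the right side by $\gameValue^{\fullcommunication}(1-\probabilityFailureOneStep)^{\expectation[\len(\gamePath)\mid \text{success}]}$, and the Markov-type estimate $\expectation[\len(\gamePath)\mid\text{success}] \le \expectedLength^{\fullcommunication}/\gameValue^{\fullcommunication}$ --- which follows from $\expectedLength^{\fullcommunication} = \expectation[\len(\gamePath)] \ge \gameValue^{\fullcommunication}\,\expectation[\len(\gamePath)\mid\text{success}]$ --- together with the monotonicity of $(1-\probabilityFailureOneStep)^u$ completes this bound. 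Taking the maximum of the two bounds gives the theorem.

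The step I expect to be the main obstacle is making the coupling/synchronization argument for the second term fully rigorous: one must carefully verify that Algorithm~\ref{alg:intermittent}, run with communication never lost up to the random termination time, reproduces the full-communication path \emph{distribution} (not merely a single realization), that the Bernoulli availability sequence can indeed be taken independent of the path, and that all of this is well-defined as measures on the space of infinite joint paths. By comparison the first term is routine once Lemma~\ref{lemma:intermittent} and the total-variation/KL inequality are in hand.
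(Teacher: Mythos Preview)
Your proposal is correct and follows essentially the same two-pronged strategy as the paper's proof. The first term is handled identically: Lemma~\ref{lemma:intermittent} gives $\kl(\gamePathDist^{\fullcommunication}\|\gamePathDist^{\intermittent})\le \probabilityFailureOneStep\,\totalCorrelation_{\jointPolicy}$, then Bretagnolle--Huber converts this to a total-variation (hence value) gap.

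For the second term the paper uses the same ingredients (independence of $\sequenceCommAvailibility$ from the path, Jensen on $u\mapsto(1-\probabilityFailureOneStep)^{u}$, and the bound $\expectation[\len(\gamePath)\mid\text{success}]\le \expectedLength^{\fullcommunication}/\gameValue^{\fullcommunication}$), but phrases the key step as a conditioning argument rather than a coupling: it writes $\gameValue^{\intermittent}\ge \Pr^{\probabilityMeasure^{\intermittent}}(\text{Success}\ \&\ \neg L)$ for $L=\{\text{some }\oneStepCommAvailibility_t=0\text{ before termination}\}$ and simply observes that $\probabilityMeasure^{\intermittent}(\cdot\mid \oneStepCommAvailibility_0=\cdots=\oneStepCommAvailibility_t=1)=\probabilityMeasure^{\fullcommunication}(\cdot)$. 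This sidesteps exactly the measurability and synchronization worries you flag as the main obstacle; your coupling is a valid way to make the same point, but the conditional-equality formulation is shorter and requires no new construction.
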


The lower bounds in Theorems \ref{theorem:imaginarystonger}, \ref{theorem:imaginary}, and \ref{theorem:intermittentstructured} show that the reach-avoid probability of a joint policy under communication loss depends on the total correlation of the joint policy, the reach-avoid probability achieved with full communication, the communication dropout rate, and the expected path length under the joint policy. 
When the total correlation is \(0\), the reach-avoid probability under communication loss is the same as the reach-avoid probability with full communication. 
As the total correlation of the joint policy increases, the values of the lower bounds decrease. 
During intermittent communication, if value of the dropout rate is \(0\), then the reach-avoid probability of the joint policy executed using imaginary play (Algorithm \ref{alg:imaginaryplay}) or intermittent communication (Algorithm \ref{alg:intermittent}) is the same as when the policy is executed with full communication.
When the communication dropout rates are \(1\), the reach-avoid probability under communication loss depends on the value of the total correlation. We note that the bounds are tight when either the communication dropout rate or the total correlation is \(0\).

\section{Joint Policy Synthesis}

\label{sec:policy_synthesis}

In this section, we discuss the synthesis of minimum-dependency joint policies \(\policy_{MD}\) that are robust to communication failures. 
\paragraph{Entropy of paths for a single agent}
Given the Markov game, a stationary joint policy \(\jointPolicy\) induces a Markov chain. This Markov chain generates a stationary process \(\gameStateActionProcess\), which is the joint path of the agents. The entropy \(\entropy(\gameStateActionProcess)\) of a stationary process has a closed form expression in terms of the occupancy measures \(\occupancyVar_{\gameState, \gameAction}\) of the joint state-action pairs \((\gameState, \gameAction)\) \cite{savas2019entropy}. The path of a single agent, on the other hand, follows a hidden Markov model where \(\gameStateActionProcess\) is the underlying process and \(\mdpStateActionProcess^{i}\) is the observed process.  However, the entropy \(\entropy(\mdpStateActionProcess^{i})\) of a process that follows a hidden Markov model does not admit a closed-form expression. 

Let \(\occupancyVar_{\mdpState^{i},\mdpAction^{i}}\) be the occupancy measure for the state-action pair \((\mdpState^{i}, \mdpAction^{i}) \in \mdpStateSpace^{i} \times \mdpActionSpace^{i}\) under the joint policy \( \jointPolicy \). Consider a stationary process \(\mdpMixedStateActionProcess^{i}\) that induces the same occupancy measures \(\occupancyVar_{\mdpState^{i},\mdpAction^{i}}\) as the joint policy. 
The entropy \(\entropy(\mdpMixedStateActionProcess^{i})\) of the stationary process is greater than or equal to the entropy \(\entropy(\mdpStateActionProcess^{i})\) of the original process \cite{savas2019entropy}. 
Since \(\entropy(\mdpStateActionProcess^{i})\) does not admit a closed form expression, we instead upper bound \(\totalCorrelation_{\jointPolicy}\) using \(\entropy(\mdpMixedStateActionProcess^{i})\). 
Formally, we have \[\totalCorrelationUpperBound_{\jointPolicy} = \left[  \sum_{i=1}^{\numAgents} \entropy(\mdpMixedStateActionProcess^{i}) \right]  - \entropy(\gameStateActionProcess) \geq \totalCorrelation_{\jointPolicy} =  \left[ \sum_{i=1}^{N} \entropy(\mdpStateActionProcess^{i} ) \right]  - \entropy(\gameStateActionProcess). \]

\paragraph{The policy synthesis optimization problem.} 
To optimize the reach-avoid probability under communication loss, we would like to maximize the lower bound given in Theorem \ref{theorem:imaginary}.
However, due to the complex nature of this lower bound, we propose to instead use the following optimization problem as a proxy to the original problem:
\begin{equation}\sup_{\jointPolicy} \gameValue^{\fullcommunication} - \expectedLengthCoef \expectedLength^{\fullcommunication} - \totalCorrelationCoef \totalCorrelationUpperBound_{\jointPolicy} 
    \label{optproblemgeneric}
\end{equation}
 where \(\expectedLengthCoef > 0\) and \(\totalCorrelationCoef > 0\) are constants. 

We now represent \eqref{optproblemgeneric} in terms of occupancy measures and construct the optimization problem for synthesis. 
We first preprocess \(\game\) to ensure that \(\totalCorrelationUpperBound_{\jointPolicy}\) is well-defined. 
Define \(\deadSetPrime = \lbrace \gameState | \max_{\jointPolicy} \gameValue_{\joint} = 0 \text{ when the path begins at } \gameState \rbrace\), the set of all states from which the reach-avoid task is violated with probability \(1\). 
We note that \(\deadSetPrime \supseteq \deadSet\). For synthesis, we add an absorbing end state \(\gameProcessEndState = (\mdpProcessEndState^{1}, \ldots, \mdpProcessEndState^{\numAgents})\) and a joint action \(\epsilonTransition = (\epsilonTransition^{1}, \ldots, \epsilonTransition^{\numAgents})\) to \(\game\), which represent the end of the game in terms of the reach-avoid objective. 
Every \(\gameState \in \targetSet \cup \deadSetPrime\) has a single action \(\epsilonTransition\), and \(\gameTransition(\gameState,\epsilonTransition,\gameProcessEndState) = 1\) for all \(\gameState \in \targetSet \cup \deadSetPrime\), i.e., the states in \(\targetSet \cup \deadSetPrime\) deterministically transitions to \(\gameProcessEndState\). For synthesis, we assume that every \(\gameState \in \gameStateSpace \setminus (\targetSet \cup \deadSetPrime)\) has a finite occupancy measure, i.e., \(\sum_{\gameAction \in \gameActionSpace} \occupancyVar(\gameState,\gameAction) \leq \constantNumber \) for some \(\constantNumber \geq 0\).

In the previous sections, we assumed that the joint policy is stationary. The following proposition shows that stationary policies suffice to maximize \eqref{optproblemgeneric} after the preprocessing step.
\begin{proposition} \label{proposition:stationaryissufficient}
There exists a stationary joint policy that is a solution to \eqref{optproblemgeneric}.
\end{proposition}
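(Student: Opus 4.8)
The plan is to show that among all (possibly history-dependent) joint policies achieving the supremum in \eqref{optproblemgeneric}, one can extract a stationary policy with at least the same objective value. Note first that after the preprocessing step, every state in $\targetSet \cup \deadSetPrime$ transitions deterministically to the absorbing state $\gameProcessEndState$ under the unique action $\epsilonTransition$, and by assumption every state outside $\targetSet \cup \deadSetPrime$ has finite occupancy measure $\sum_{\gameAction} \occupancyVar(\gameState,\gameAction) \le \constantNumber$. Hence any policy induces a well-defined finite occupancy measure $\occupancyVar_{\gameState,\gameAction}$ on the non-absorbing states, and $\gameValue^{\fullcommunication}$, $\expectedLength^{\fullcommunication}$, and the marginal occupancies $\occupancyVar_{\mdpState^i,\mdpAction^i}$ are all determined by this occupancy measure. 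The key observation is that each of the three terms in the objective depends on the policy \emph{only through} the occupancy measure $\occupancyVar_{\gameState,\gameAction}$: the reach-avoid probability $\gameValue^{\fullcommunication}$ is the expected number of transitions into $\gameProcessEndState$ from states in $\targetSet$ (equivalently a linear functional of $\occupancyVar$), the expected length $\expectedLength^{\fullcommunication}$ equals $\sum_{\gameState \notin \targetSet\cup\deadSetPrime}\sum_{\gameAction}\occupancyVar_{\gameState,\gameAction}$ plus a constant, and $\totalCorrelationUpperBound_{\jointPolicy}$ is, by the definition in the preceding paragraph, a function of the joint occupancies $\occupancyVar_{\gameState,\gameAction}$ (through $\entropy(\gameStateActionProcess)$, which has a closed form in the $\occupancyVar_{\gameState,\gameAction}$) and the marginal occupancies $\occupancyVar_{\mdpState^i,\mdpAction^i}$ (through $\entropy(\mdpMixedStateActionProcess^i)$, each of which is the entropy of the stationary process matching that marginal). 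Since $\occupancyVar_{\mdpState^i,\mdpAction^i}$ is itself a linear image of $\occupancyVar_{\gameState,\gameAction}$, the entire objective is a function of $\occupancyVar_{\gameState,\gameAction}$ alone.

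Next I would invoke the standard occupancy-measure correspondence for MDPs (here the Markov game viewed as an MDP over joint states with joint actions): the set of occupancy measures realizable by \emph{some} policy coincides with the set realizable by \emph{stationary} policies, and this set is exactly the polytope cut out by the Bellman flow constraints together with the nonnegativity and boundedness constraints $0 \le \sum_{\gameAction}\occupancyVar_{\gameState,\gameAction}\le\constantNumber$. Concretely, given any policy $\policy_{\joint}$ with occupancy measure $\occupancyVar$, define the stationary policy $\tilde\policy_{\joint}(\gameState,\gameAction) = \occupancyVar_{\gameState,\gameAction}/\sum_{\gameAction'}\occupancyVar_{\gameState,\gameAction'}$ at every state with positive total occupancy (and arbitrarily elsewhere, including the trivial choice $\epsilonTransition$ on $\targetSet\cup\deadSetPrime$). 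A routine verification using the flow constraints shows $\tilde\policy_{\joint}$ induces the \emph{same} occupancy measure $\occupancyVar$. Since the objective in \eqref{optproblemgeneric} is a function of $\occupancyVar$ only, $\tilde\policy_{\joint}$ attains the same objective value as $\policy_{\joint}$. Applying this to a sequence of policies approaching the supremum shows that the supremum over stationary policies equals the supremum over all policies; and because the occupancy polytope is compact (finite state-action space, bounded occupancies) and the objective is continuous in $\occupancyVar$, the supremum is in fact attained by a stationary policy.

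The main obstacle I anticipate is being careful about the hidden-Markov entropy terms $\entropy(\mdpStateActionProcess^i)$ versus their upper bounds $\entropy(\mdpMixedStateActionProcess^i)$: the objective uses $\totalCorrelationUpperBound_{\jointPolicy}$, and one must confirm that this quantity is genuinely well-defined and depends only on $\occupancyVar$ — i.e., that $\entropy(\mdpMixedStateActionProcess^i)$ is defined as the entropy rate of the \emph{particular} stationary process matching the marginal occupancies $\occupancyVar_{\mdpState^i,\mdpAction^i}$, so there is no residual dependence on the original (non-stationary) process. The preprocessing step (adding $\gameProcessEndState$ and $\epsilonTransition$, and the finite-occupancy assumption) is precisely what guarantees all these entropies and occupancies are finite and the closed-form expression of \cite{savas2019entropy} applies, so the argument reduces to the classical stationary-policy-suffices result for occupancy-measure objectives once this bookkeeping is done. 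A secondary point to handle cleanly is the degenerate states where $\sum_{\gameAction'}\occupancyVar_{\gameState,\gameAction'} = 0$: these are unreachable under $\occupancyVar$, so the stationary policy's choice there is immaterial and can be fixed arbitrarily without affecting either the occupancy measure or the objective.
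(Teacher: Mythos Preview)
Your argument has a genuine gap in the treatment of $\entropy(\gameStateActionProcess)$. You assert that the entire objective depends on the policy only through the occupancy measure $\occupancyVar_{\gameState,\gameAction}$, and in particular that $\entropy(\gameStateActionProcess)$ ``has a closed form in the $\occupancyVar_{\gameState,\gameAction}$.'' That closed form from \cite{savas2019entropy} is valid only for \emph{stationary} policies, where the joint state-action process is a time-homogeneous Markov chain. For a general (history-dependent or time-varying) policy $\jointPolicy'$, the path entropy $\entropy(\gameStateActionProcess)$ is determined by the full joint distribution over paths, not by the occupancy measure alone; two policies with identical occupancies can have different path entropies. So your step ``$\tilde\policy_{\joint}$ attains the same objective value as $\policy_{\joint}$'' does not follow.

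The fix, which is exactly what the paper does, is to observe that among all policies sharing the same occupancy measure, the stationary one \emph{maximizes} $\entropy(\gameStateActionProcess)$ (this is the maximum-entropy characterization from \cite{savas2019entropy}). Since $\entropy(\gameStateActionProcess)$ enters the objective with a positive coefficient $+\totalCorrelationCoef$ (it is subtracted inside $\totalCorrelationUpperBound_{\jointPolicy}$, which is itself subtracted), the stationary policy $\tilde\policy_{\joint}$ achieves an objective value \emph{at least as large} as $\policy_{\joint}$, not necessarily equal. The remaining terms $\gameValue^{\fullcommunication}$, $\expectedLength^{\fullcommunication}$, and each $\entropy(\mdpMixedStateActionProcess^{i})$ are indeed functions of the occupancy measure (the last because $\mdpMixedStateActionProcess^{i}$ is \emph{defined} as the stationary process matching the marginal occupancies), so your treatment of those is fine. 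Your compactness/continuity argument for attainment then goes through unchanged.
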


Given that the stationary policies suffice, we can rewrite \eqref{optproblemgeneric} as an optimization problem in terms of the occupancy measures \(\occupancyVar_{\gameState, \gameAction}\). 
The constraints of this optimization problem are as follows. State \(\gameProcessEndState\) has an occupancy measure of zero, i.e. \(\occupancyVar_{\gameProcessEndState, \gameAction} = 0\) for all \(\gameAction \in \gameActionSpace \cup \lbrace \epsilonTransition\rbrace\). 
The other states have nonnegative occupancy measures, i.e., \(\occupancyVar_{\gameState, \gameAction} \geq 0\) for all \(\gameState \in \gameStateSpace, \gameAction \in \gameActionSpace \cup \lbrace \epsilonTransition \rbrace\). 
The occupancy measures satisfy the flow equations \(\sum_{\gameAction \in \gameActionSpace \cup \lbrace \epsilonTransition \rbrace} \occupancyVar_{\gameState, \gameAction} = \sum_{\substack{\gameStateAlt \in \gameStateSpace \\ \gameActionAlt \in \gameActionSpace \cup \lbrace \epsilonTransition \rbrace}} \occupancyVar_{\gameStateAlt, \gameActionAlt}\gameTransition(\gameStateAlt, \gameActionAlt, \gameState) + \mathbbm{1}_{\{\gameInitialState = \gameState\}}\) for all \( \gameState \in \gameStateSpace.\)
The objective function is \[    \max_{\occupancyVar} \quad \gameValue^{\fullcommunication} - \expectedLengthCoef \expectedLength^{\fullcommunication} - \totalCorrelationCoef \left( \sum_{i=1}^{\numAgents} \entropy(\mdpMixedStateActionProcess^{i}) - \entropy(\gameStateActionProcess) \right).\] 
The reach-avoid probability \(\gameValue^{\fullcommunication}\) can be expressed as \(\gameValue^{\fullcommunication} = \sum_{\gameState \in \gameStateSpace \setminus (\deadSetPrime \cup \targetSet)} \sum_{\gameAction \in \gameActionSpace} \sum_{\gameStateAlt \in \targetSet} \occupancyVar_{\gameState, \gameAction} \gameTransition(\gameState, \gameAction, \gameStateAlt).\) 
The expected path length is the expected time spent in the transient states, i.e., \(\expectedLength^{\fullcommunication} = \sum_{\gameState \in \gameStateSpace} \sum_{\gameAction \in \gameActionSpace \cup \lbrace \epsilonTransition \rbrace} \occupancyVar_{\gameState, \gameAction}.\) 
The entropy \(\entropy(\gameStateActionProcess) \)~\cite{savas2019entropy} of the joint state-action process until reaching state \(\gameProcessEndState\) is 
\[
     \sum_{\substack{\gameState \in \gameStateSpace  \\ \gameAction \in \gameActionSpace}} \occupancyVar_{\gameState, \gameAction} \log\left(\frac{\underset{\gameActionAlt \in \gameActionSpace}{\sum} \occupancyVar_{\gameState, \gameActionAlt}}{\occupancyVar_{\gameState, \gameAction}}\right) + \sum_{\substack{\gameState \in \gameStateSpace  \\ \gameAction \in \gameActionSpace}} \occupancyVar_{\gameState, \gameAction}\sum_{\gameStateAlt \in \gameStateSpace}  \gameTransition(\gameState, \gameAction, \gameStateAlt) \log\left(\frac{1}{\gameTransition(\gameState, \gameAction, \gameStateAlt)}  \right).    \]
The entropy \(\entropy(\mdpMixedStateActionProcess^{i})\)~\cite{savas2019entropy} of the stationary state-action process \(\mdpMixedStateActionProcess^{i}\) until reaching state \(\gameProcessEndState\) is

\begin{align*}
    & \sum_{\substack{\mdpState^{i} \in \mdpStateSpace^{i}  \\ \mdpAction^{i} \in \mdpActionSpace^{i} \cup \lbrace \epsilonTransition^{i} \rbrace }}  \occupancyVar_{\mdpState^{i}, \mdpAction^{i}}  \log\left(\frac{\underset{\mdpActionAlt^{i} \in \mdpActionSpace^{i}}{\sum} \occupancyVar_{\mdpState^{i}, \mdpActionAlt^{i}}}{\occupancyVar_{\mdpState^{i}, \mdpAction^{i}}}\right) \\
    & + \sum_{\substack{\mdpState^{i} \in \mdpStateSpace^{i} \\ \mdpAction^{i} \in \mdpActionSpace^{i} \cup \lbrace \epsilonTransition^{i} \rbrace }} \occupancyVar_{\mdpState^{i}, \mdpAction^{i}} \sum_{\mdpStateAlt^i \in \mdpStateSpace^{i} \cup \lbrace\mdpProcessEndState^{i}\rbrace} \mdpTransition^{i}(\mdpState^{i}, \mdpAction^{i}, \mdpStateAlt^{i}) \log\left( \frac{1}{\mdpTransition^{i}(\mdpState^{i}, \mdpAction^{i}, \mdpStateAlt^{i})} \right).
\end{align*}

The objective function of the optimization problem consists of convex, concave, and linear functions of the occupancy measures.
\(\gameValue^{\fullcommunication}\) and \(-\expectedLengthCoef \expectedLength^{\fullcommunication}\) are linear functions of the occupancy measures. \(\totalCorrelationCoef \entropy(\gameStateActionProcess)\) is a concave function of occupancy measures, and  \(-\totalCorrelationCoef \sum_{i=1}^{\numAgents} \entropy(\mdpMixedStateActionProcess^{i})\) is a convex function of occupancy measures. 
Furthermore, the problem's constraints are linear.
We use the concave-convex procedure \cite{lanckriet2009convergence,yuille2002concave} to solve for a local optimum.

After solving for the optimal values \(\occupancyVar^*_{\gameState, \gameAction}\) of the occupancy measure variables, we define the minimum-dependency joint policy as \(\policy_{MD}(\gameState, \gameAction) =\) \(\occupancyVar^*(\gameState,\gameAction) / \sum_{\gameActionAlt \in \gameActionSpace}\occupancyVar^*(\gameState,\gameActionAlt)\) for all \(\gameState \in \gameStateSpace \setminus(\targetSet \cup \deadSetPrime), \gameAction \in \gameActionSpace\)  such that \(\sum_{\gameActionAlt \in \gameActionSpace}\occupancyVar^*(\gameState,\gameActionAlt) >0\), and \(\policy_{MD}(\gameState, \gameAction) = 1/|\gameActionSpace| \) otherwise \cite{puterman2014markov}. We note that \(\policy_{MD}\) is stationary in the joint state space \(\gameStateSpace\).

\section{Numerical Experiments}

\label{sec:experiments}

\label{sec:experiments_multiagent_navigation}

We apply the proposed policy synthesis algorithm to the two-agent navigation example illustrated in Figure \ref{fig:running_example}.
The setup and objective of this task are as described in \S \ref{sec:multiagent_planning_and_comms}.
In all of the experiments, we compare the results of the minimum-dependency policy \(\policy_{MD}\), synthesized by the algorithm presented in \S \ref{sec:policy_synthesis}, to a baseline policy \(\policy_{base}\) which does not take communication into account; the baseline policy maximizes the probability that the team will complete its task, while assuming that communication will always be available.
For further details surrounding the synthesis of the baseline policy and for an additional three-agent experiment, we refer the reader to the supplementary material.
Project code is available at \href{https://github.com/cyrusneary/multi-agent-comms}{github.com/cyrusneary/multi-agent-comms}.

The common environment of the agents, illustrated in Figure \ref{fig:running_example}, is discretized into a grid of cells, each of which corresponds to an individual local state.
At any given timestep, each agent takes one of five separate actions: move left, move right, move up, move down, or remain in place.
Each agent slips with probability \(0.05\) every time it takes an action, resulting in the agent moving instead to another one of its valid neighboring states.
The resulting optimization problem has \(15,625\) variables and \(16,087\) constraints.

In all experiments, the values of the coefficients \(\expectedLengthCoef\) and \(\totalCorrelationCoef\) in the objective of the policy synthesis problem are set to \(0.01\) and \(0.4\) respectively. 
These values were selected to strike a balance between the optimization objective's three competing terms.

\begin{figure}
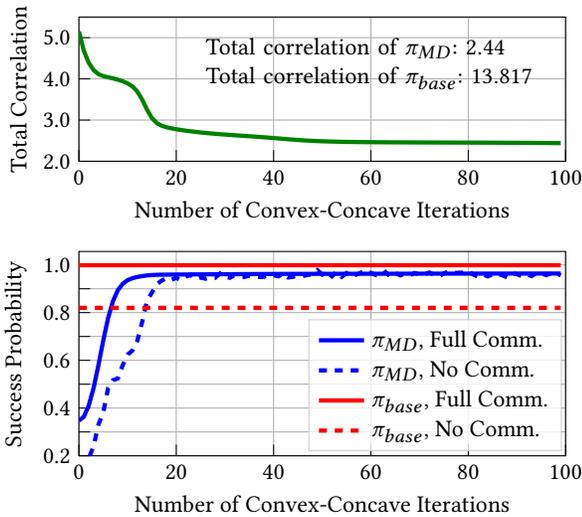

    \centering
\begin{tikzpicture}

\begin{groupplot}[group style={group name = plots, group size=1 by 2, vertical sep=1.2cm, horizontal sep=0.0cm}]

\nextgroupplot[
width=0.95\columnwidth, 
height=3.5cm,
legend cell align={left},
legend style={
  fill opacity=0.8,
  draw opacity=1,
  text opacity=1,
  at={(4.5cm, 0.4cm)},
  anchor=south west,
  draw=white!80!black
},
tick align=inside,
tick pos=left,
x grid style={white!69.0196078431373!black},
xlabel={Number of Convex-Concave Iterations},
xmajorgrids,
xmin=0.0, xmax=100.0,
xtick style={color=black},
y grid style={white!69.0196078431373!black},
ylabel={Total Correlation},
ymajorgrids,
ymin=2.0, ymax=5.5,
ytick style={color=black},
ytick={2.0, 3.0, 4.0, 5.0},
yticklabels={2.0, 3.0, 4.0, 5.0},
]

\input{tikz/total_corr_vs_iters_aux_action}

\node at (4.2cm, 1.3cm) [text width=5cm] {Total correlation of \(\policy_{MD}\): \(2.44\) \\ Total correlation of \(\policy_{base}\): \(13.817\)};

\nextgroupplot[
width=0.95*\columnwidth, 
height=4.3cm,
legend cell align={left},
legend columns = 1,
legend style={
  fill opacity=0.8,
  draw opacity=1,
  text opacity=1,
  at={(3.05cm, 0.2cm)},
  anchor=south west,
  draw=white!80!black
},
tick align=inside,
tick pos=left,
x grid style={white!69.0196078431373!black},
xlabel={Number of Convex-Concave Iterations},
xmajorgrids,
xmin=0.0, xmax=100.0,
xtick style={color=black},
y grid style={white!69.0196078431373!black},
ylabel={Success Probability},
ymajorgrids,
ymin=0.2, ymax=1.0562822829432,
ytick style={color=black},
ytick={0.2, 0.3, 0.4, 0.5, 0.6, 0.7, 0.8, 0.9, 1.0},
yticklabels={0.2, ,0.4, ,0.6, , 0.8, , 1.0},
]

\input{tikz/success_prob_vs_iters_aux_action}

\end{groupplot}

\end{tikzpicture}
    \caption{(Top) Total correlation value of the minimum-dependency policy \(\policy_{MD}\) as a function of the number of elapsed iterations of the convex-concave optimization procedure. (Bottom) Probability of task success for \(\policy_{MD}\). 
    For comparison, we plot the success probability resulting from both imaginary play execution (no communication) and centralized execution (full communication).
    To estimate the probability of task success, we perform rollouts of the joint policy and compute the empirical rate at which the team accomplishes its objective.}
    \label{fig:multiagent_navigation_results}
\end{figure}

\subsection{Fully Imaginary Play}
Figure \ref{fig:multiagent_navigation_results} compares the results of the minimum-dependency policy \(\policy_{MD}\) and the baseline policy \(\policy_{base}\) in two scenarios: when communication is either fully available or never available.

We observe from the top figure that the proposed policy synthesis algorithm is effective at reducing the total correlation of the induced stochastic state-action process; the total correlation value of \(\policy_{MD}\) is three orders of magnitude smaller than that of \(\policy_{base}\).

The bottom figure shows the strong performance of \(\policy_{MD}\) when no communication is available between the agents.
In particular, we observe that \(\policy_{MD}\) achieves a probability of task success of \(0.97\), regardless of whether the agents are able to communicate.
That is, by minimizing the total correlation of the policy, \(\policy_{MD}\) ensures the agents may successfully execute the policy without communicating during execution.
Conversely, while \(\policy_{base}\) achieves a \(0.99\) probability of task success when communication is available, this value falls to \(0.82\) if the agents lose the ability to communicate.
This experiment empirically demonstrates the intuition of Theorem \ref{theorem:imaginary}.

In addition to the quantitative results illustrated by Figure \ref{fig:multiagent_navigation_results}, we observe an interesting quantitative change in behavior between \(\policy_{base}\) and \(\policy_{MD}\).
In particular, \(\policy_{base}\) results in both of the agents navigating through the lower valley in order to arrive at their targets.
This route relies heavily on teammate coordination; the agents must communicate at each timestep in order to safely take turns passing through the valley without colliding. 
By contrast, \(\policy_{MD}\) results in agent \(\agent_1\) navigating through the top valley while \(\agent_0\) takes the bottom valley.
Intuitively, by navigating through separate valleys, this team behavior is much less likely to result in collisions even if the agent's don't share their locations with each other.
As a result, teammate coordination is much less important to successfully execute the behavior of \(\policy_{MD}\) than it is to execute that of \(\policy_{base}\).

\subsection{Intermittent Communication}
While the previous discussion focused on the empirical performance of \(\policy_{MD}\) in the setting where the agents cannot communicate at all, we now examine the setting in which random intermittent communication is available.
More specifically, we assume that at each timestep communication fails with probability \(q\), independently of whether or not communication is available during the other timesteps.
In this setting, the agents execute the joint policy according to Algorithm \ref{alg:intermittent}.
That is, if communication is available at a given timestep, all agents collectively share their local states and decide on a joint action.
Conversely, when communication is not available, the agents execute the policy using imaginary play.

Figure \ref{fig:intermittent_play_success_prob} plots the team's probability of task success when they execute either \(\policy_{MD}\) or \(\policy_{base}\) using Algorithm \ref{alg:intermittent}, as a function of the probability of communication failure \(\probabilityFailureOneStep\).
We observe that the probability of task success of the baseline policy \(\policy_{base}\) is very high when \(\probabilityFailureOneStep = 0\), however, it begins to significantly decrease as \(\probabilityFailureOneStep\) increases beyond \(0.4\).
Conversely, the proposed minimum-dependency policy \(\policy_{MD}\) does not suffer such a drop in performance; as \(\probabilityFailureOneStep\) increases and communication becomes more sparse the task success probability of policy \(\policy_{MD}\) remains constant.

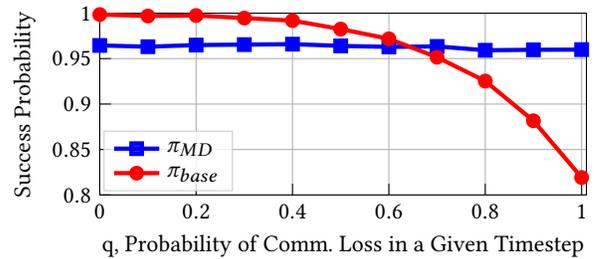
\begin{figure}
    \centering
\begin{tikzpicture}

\begin{axis}[
tick align=outside,
tick pos=left,
width=0.95\columnwidth, 
height=4cm,
x grid style={white!69.0196078431373!black},
legend cell align={left},
legend columns = 1,
legend style={
  fill opacity=0.8,
  draw opacity=1,
  text opacity=1,
  at={(0.05cm, 0.05cm)},
  anchor=south west,
  draw=white!80!black
},
xlabel={q, Probability of Comm. Loss in a Given Timestep},
xmajorgrids,
xmin=0, xmax=1.01,
xtick style={color=black},
xtick align=inside,
y grid style={white!69.0196078431373!black},
ylabel={Success Probability},
ymajorgrids,
ymin=0.80, ymax=1,
ytick style={color=black},
ytick align=inside,
]
\addplot [ultra thick, blue, mark=square*]
table {%
0 0.9645
0.1 0.9631
0.2 0.9649
0.3 0.9654
0.4 0.9659
0.5 0.964
0.6 0.9629
0.7 0.9634
0.8 0.9592
0.9 0.9597
1 0.9599
};\addlegendentry{\(\policy_{MD}\)}
\addplot [ultra thick, red, mark=*]
table {%
0 0.9985
0.1 0.9969
0.2 0.9974
0.3 0.9946
0.4 0.9918
0.5 0.9825
0.6 0.9718
0.7 0.9515
0.8 0.9253
0.9 0.8816
1 0.8192
}; \addlegendentry{\(\policy_{base}\)}
\end{axis}

\end{tikzpicture}
    \caption{Success probability of intermittent communication for different values of \(\probabilityFailureOneStep\), which represents the probability of communication unavailability during any given timestep. 
    When \(\probabilityFailureOneStep = 0\) communication is available at every timestep, and when \(\probabilityFailureOneStep = 1\) communication is never available.}
    \label{fig:intermittent_play_success_prob}
\end{figure}

\section{Conclusions}

\label{sec:conclusions}

In this work, we develop multiagent systems that are robust to communication loss.
We provide algorithms for decentralized policy execution when communication is lost and relate the performance of these algorithms to the performance achieved by the same policy under full communication.
Using these theoretical results, we propose an optimization algorithm for the synthesis of joint policies that are robust to potential losses in communication.
While the policy synthesis algorithm directly operates on the joint state space, future work will aim to use abstractions of the joint state-action space to scale the policy synthesis to larger problems.



\begin{acks}
This work was supported in part by AFRL FA9550-19-1-0169, ARL ACC-APG-RTP W911NF1920333, and ARO W911NF-20-1-0140.
\end{acks}


\newpage
\newpage

\balance

\bibliographystyle{ACM-Reference-Format} 
\bibliography{sample}


\newpage

\onecolumn
\appendix

\begin{center} 
    \begin{LARGE}
        \textbf{Planning Not to Talk: Multiagent Systems that are Robust to Communication Loss}  \\  \end{LARGE}
    \begin{Large}
        \textbf{Supplementary Material}
    \end{Large}
\end{center}

\section{Proofs for technical results}
We first define some notation to be used in the notation and provide different expressions of total correlation.

\paragraph{Notation.} Under the joint policy \(\jointPolicy\) with full communication, let \(\gameStateRandomVar_{t}\) be a random variable denoting the joint state of the agents at time \(t\), \(\gameActionRandomVar_{t}\) be a random variable denoting the joint action of the agents at time \(t\), \(\mdpStateRandomVar^{i}_{t}\) be a random variable denoting the state of Agent \(i\) at time \(t\), and \(\mdpActionRandomVar^{i}_{t}\) be a random variable denoting the action of Agent \(i\) at time \(t\). 

We use \(\probabilityMeasure^{\fullcommunication}\) to denote the probability measure over the (finite or infinite) state-action process under the joint policy with full communication. \(\probabilityMeasure^{\imaginary}_{t_{loss}}\) denotes the probability measure over the (finite or infinite) state-action process under the imaginary play (under Algorithm \ref{alg:imaginaryplay}) where the first communication loss happens at time \(t_{loss}\). \(\probabilityMeasure^{\imaginary}_{\genericFunction}\) denotes the probability measure over the (finite or infinite) state-action process under the imaginary play (under Algorithm \ref{alg:imaginaryplay}) where \(\genericFunction:(\gameStateSpace\times\gameActionSpace)^{*}\to \lbrace 0, 1\rbrace\) determines the communication availability based on the team's joint history. \(\probabilityMeasure^{\intermittent}_{\sequenceCommAvailibility}\) denotes the probability measure over the (finite or infinite) state-action process under the intermittent communication (under Algorithm \ref{alg:intermittent}) with a sequence \(\sequenceCommAvailibility\) of communication availability.

The Kleene star applied to a set \(\genericSet\) of symbols is the set \(\genericSet^{*} = \bigcup_{i \geq 0} \genericSet^{i}\) of all finite-length words where  \(\genericSet^{0} = \lbrace \emptyString \rbrace\) and \(\emptyString\) is the empty string. The set of all infinite-length words is denoted by \(\genericSet^{\omega}\).

\paragraph{Different expressions of total correlation.} 
The total correlation~\cite{watanabe1960information} of joint policy \(\jointPolicy\) is \[\totalCorrelation_{\jointPolicy}  =\kl(\gamePathDist^{\fullcommunication} || \gamePathDist^{\imaginary}_{0}) =  \left[ \sum_{i=1}^{N} \entropy(\mdpStateActionProcess^{i}) \right]  - \entropy(\gameStateActionProcess).\] By the chain rule of entropy~\cite{cover1991elements} and the fact that \(\gameInitialState\) is a common knowledge, we have
\[\totalCorrelation_{\jointPolicy}= \left[ \sum_{i=1}^{N} \entropy(\mdpStateRandomVar^{i}_{0}\mdpActionRandomVar^{i}_{0}|\gameStateRandomVar_{0}) \right]  - \entropy(\gameStateRandomVar_{0}\gameActionRandomVar_{0}|\gameStateRandomVar_{0}) + \sum_{t=0}^{\infty} \left[ \left[ \sum_{i=1}^{N} \entropy(\mdpStateRandomVar^{i}_{t}\mdpActionRandomVar^{i}_{t}|\gameStateRandomVar_{0}\mdpActionRandomVar^{i}_{0}  \ldots \mdpStateRandomVar^{i}_{t-1}\mdpActionRandomVar^{i}_{t-1}) \right]  - \entropy(\gameStateRandomVar_{t}\gameActionRandomVar_{t}|\gameStateRandomVar_{0}\gameActionRandomVar_{0} \ldots \gameStateRandomVar_{t-1}\gameActionRandomVar_{t-1} ) \right].\] We note that for all \(t=1,2,\ldots\)
\begin{subequations}\label{stagecorrelationgeqzero}
\begin{align}
     \left[ \sum_{i=1}^{N} \entropy(\mdpStateRandomVar^{i}_{t}\mdpActionRandomVar^{i}_{t}|\gameStateRandomVar_{0}\mdpActionRandomVar^{i}_{0}  \ldots \mdpStateRandomVar^{i}_{t-1}\mdpActionRandomVar^{i}_{t-1}) \right]  - \entropy(\gameStateRandomVar_{t}\gameActionRandomVar_{t}|\gameStateRandomVar_{0}\gameActionRandomVar_{0} \ldots \gameStateRandomVar_{t-1}\gameActionRandomVar_{t-1} ) &\geq  \left[ \sum_{i=1}^{N} \entropy(\mdpStateRandomVar^{i}_{t}\mdpActionRandomVar^{i}_{t}|\gameStateRandomVar_{0}\gameActionRandomVar_{0} \ldots \gameStateRandomVar_{t-1}\gameActionRandomVar_{t-1}) \right]  - \entropy(\gameStateRandomVar_{t}\gameActionRandomVar_{t}|\gameStateRandomVar_{0}\gameActionRandomVar_{0} \ldots \gameStateRandomVar_{t-1}\gameActionRandomVar_{t-1}) \label{conditioningreducesentropy0}
     \\
     &\geq 0 \label{subaddtivityofentropy}
\end{align}
\end{subequations} where \eqref{conditioningreducesentropy0} is because conditioning (extra information) reduces entropy and \eqref{subaddtivityofentropy} is due to the subadditivity of entropy. Similarly, 
\begin{equation}  \label{stagecorrelation0geqzero}
    \left[ \sum_{i=1}^{N} \entropy(\mdpStateRandomVar^{i}_{0}\mdpActionRandomVar^{i}_{0}|\gameStateRandomVar_{0}) \right]  - \entropy(\gameStateRandomVar_{0}\gameActionRandomVar_{0}|\gameStateRandomVar_{0}) \leq 0
\end{equation}

\newpage
\begin{proof}[Proof of Lemma \ref{lemma:imaginary}]
We consider three cases of \(t_{loss}\) to prove the lemma: \(t_{loss} = 1,2, \ldots\), \(t_{loss} = 0\), and \(t_{loss} = \infty\). 

If \(t_{loss} = 0\), the statement trivially holds since \(\gamePathDist^{\imaginary}_{0} = \gamePathDist^{\imaginary}_{t_{loss}}\). In this case,  \(\kl(\gamePathDist^{\fullcommunication} || \gamePathDist^{\imaginary}_{0}) = \kl(\gamePathDist^{\fullcommunication} || \gamePathDist^{\imaginary}_{t_{loss}})\).

If \(t_{loss} = \infty\), the statement holds since there is always communication and \(\gamePathDist^{\fullcommunication} = \gamePathDist^{\imaginary}_{\infty}\). In this case,  \(\kl(\gamePathDist^{\fullcommunication} || \gamePathDist^{\imaginary}_{\infty}) = 0 \leq \kl(\gamePathDist^{\fullcommunication} || \gamePathDist^{\imaginary}_{0})\).

Let \(t_{loss}\geq 1\) be an arbitrary integer. We have 

\begin{subequations}
\begin{align}
&\kl(\gamePathDist^{\fullcommunication} || \gamePathDist^{\imaginary}_{t_{loss}}) \nonumber
\\
&= \sum_{\gameState_{0}\gameAction_{0}\gameState_{1}\gameAction_{1}\ldots \in (\gameStateSpace \times \gameActionSpace)^{\omega}} \probabilityMeasure^{\fullcommunication}(\gameState_{0}\gameAction_{0}\gameState_{1}\gameAction_{1}\ldots) \log\left(\frac{\probabilityMeasure^{\fullcommunication}(\gameState_{0}\gameAction_{0}\gameState_{1}\gameAction_{1}\ldots)}{\probabilityMeasure^{\imaginary}_{t_{loss}}(\gameState_{0}\gameAction_{0}\gameState_{1}\gameAction_{1}\ldots)}\right)
\\
&= \sum_{\genericString = \gameState_{0}\gameAction_{0}\ldots \gameState_{t_{loss}-1}\gameAction_{t_{loss}-1} \in (\gameStateSpace \times \gameActionSpace)^{t_{loss}}} \probabilityMeasure^{\fullcommunication}(\genericString) \log\left(\frac{\probabilityMeasure^{\fullcommunication}(\genericString)}{\probabilityMeasure^{\imaginary}_{t_{loss}}(\genericString)}\right) \nonumber
\\
&\quad+  \sum_{\genericString = \gameState_{0}\gameAction_{0}\ldots \gameState_{t_{loss}-1}\gameAction_{t_{loss}-1} \in (\gameStateSpace \times \gameActionSpace)^{t_{loss}}} \quad  \sum_{\genericString' =\gameState_{t_{loss}}\gameAction_{t_{loss}}\gameState_{t_{loss}}\gameAction_{t_{loss}}\ldots \in (\gameStateSpace \times \gameActionSpace)^{\omega}} \probabilityMeasure^{\fullcommunication}(\genericString\genericString') \log\left(\frac{\probabilityMeasure^{\fullcommunication}(\genericString'| \genericString)}{\probabilityMeasure^{\imaginary}_{t_{loss}}(\genericString'| \genericString)}\right) 
\\
&=\sum_{\genericString = \gameState_{0}\gameAction_{0}\ldots \gameState_{t_{loss}-1}\gameAction_{t_{loss}-1} \in (\gameStateSpace \times \gameActionSpace)^{t_{loss}}} \quad \sum_{\genericString' =\gameState_{t_{loss}}\gameAction_{t_{loss}}\gameState_{t_{loss}+1}\gameAction_{t_{loss}+1}\ldots \in (\gameStateSpace \times \gameActionSpace)^{\omega}} \probabilityMeasure^{\fullcommunication}(\genericString \genericString') \log\left(\frac{\probabilityMeasure^{\fullcommunication}(\genericString'| \genericString)}{\probabilityMeasure^{\imaginary}_{t_{loss}}(\genericString'| \genericString)}\right) 
\label{eq:jointimgeqaulbeforetloss}
\\
&=\sum_{\genericString = \gameState_{0}\gameAction_{0}\ldots \gameState_{t_{loss}-1}\gameAction_{t_{loss}-1} \in (\gameStateSpace \times \gameActionSpace)^{t_{loss}}} \quad  \sum_{\genericString' =\gameState_{t_{loss}}\gameAction_{t_{loss}}\gameState_{t_{loss}+1}\gameAction_{t_{loss}+1}\ldots \in (\gameStateSpace \times \gameActionSpace)^{\omega}} \probabilityMeasure^{\fullcommunication}(\genericString \genericString') \log\left(\frac{\probabilityMeasure^{\fullcommunication}(\genericString'| \genericString)}{\prod_{i=1}^{\numAgents}\probabilityMeasure^{\fullcommunication}_{t_{loss}}(\mdpState^{i}_{t_{loss}}\mdpAction^{i}_{t_{loss}}\mdpState^{i}_{t_{loss}+1}\mdpAction^{i}_{t_{loss}+1}\ldots|\genericString)}\right) \label{eq:imgaftertloss}
\end{align}
\end{subequations}
where \eqref{eq:jointimgeqaulbeforetloss} is because the imaginary play is the same with the joint policy for \(t = 0, \ldots, t_{loss}-1\) and \eqref{eq:imgaftertloss} is because under the imaginary play, the agents are fully independent for \(t \geq t_{loss}\).

By the definition of conditional entropy,
\begin{subequations}
\begin{align}
    &\kl(\gamePathDist^{\fullcommunication} || \gamePathDist^{\imaginary}_{t_{loss}}) \nonumber
    \\
    &=\sum_{\genericString = \gameState_{0}\gameAction_{0}\ldots \gameState_{t_{loss}-1}\gameAction_{t_{loss}-1} \in (\gameStateSpace \times \gameActionSpace)^{t_{loss}}} \quad  \sum_{\genericString' =\gameState_{t_{loss}}\gameAction_{t_{loss}}\gameState_{t_{loss}+1}\gameAction_{t_{loss}+1}\ldots \in (\gameStateSpace \times \gameActionSpace)^{\omega}} \probabilityMeasure^{\fullcommunication}(\genericString \genericString') \log\left(\frac{\probabilityMeasure^{\fullcommunication}(\genericString'| \genericString)}{\prod_{i=1}^{\numAgents}\probabilityMeasure^{\fullcommunication}_{t_{loss}}(\mdpState^{i}_{t_{loss}}\mdpAction^{i}_{t_{loss}}\mdpState^{i}_{t_{loss}+1}\mdpAction^{i}_{t_{loss}+1}\ldots|\genericString)}\right)
    \\
    &= \left[ \sum_{i=1}^{\numAgents} \entropy(\mdpStateRandomVar^{i}_{t_{loss}}\mdpActionRandomVar^{i}_{t_{loss} }\mdpStateRandomVar^{i}_{t_{loss}+1}\mdpActionRandomVar^{i}_{t_{loss}+1} \ldots| \gameStateRandomVar_{0}\gameActionRandomVar_{0}\ldots\gameStateRandomVar_{t_{loss}-1}\gameActionRandomVar_{t_{loss}-1}) \right]    \nonumber 
    \\
    &\quad -  \entropy(\gameStateRandomVar_{t_{loss}}\gameActionRandomVar_{t_{loss} }\gameStateRandomVar_{t_{loss}+1}\gameActionRandomVar_{t_{loss}+1} \ldots| \gameStateRandomVar_{0}\gameActionRandomVar_{0}\ldots\gameStateRandomVar_{t_{loss}-1}\gameActionRandomVar_{t_{loss}-1}) 
    \\
    &\leq \left[ \sum_{i=1}^{\numAgents} \entropy(\mdpStateRandomVar^{i}_{t_{loss}}\mdpActionRandomVar^{i}_{t_{loss} }\mdpStateRandomVar^{i}_{t_{loss}+1}\mdpActionRandomVar^{i}_{t_{loss}+1} \ldots| \gameStateRandomVar_{0}\mdpActionRandomVar^{i}_{0}\ldots\mdpStateRandomVar^{i}_{t_{loss}-1}\mdpActionRandomVar^{i}_{t_{loss}-1}) \right] \nonumber 
    \\
    &\quad -  \entropy(\gameStateRandomVar_{t_{loss}}\gameActionRandomVar_{t_{loss} }\gameStateRandomVar_{t_{loss}+1}\gameActionRandomVar_{t_{loss}+1} \ldots| \gameStateRandomVar_{0}\gameActionRandomVar_{0}\ldots\gameStateRandomVar_{t_{loss}-1}\gameActionRandomVar_{t_{loss}-1}) \label{conditioningreducesentropy}
    \\
    &=  \sum_{t=t_{loss}}^{\infty} \left[ \left[ \sum_{i=1}^{N} \entropy(\mdpStateRandomVar^{i}_{t}\mdpActionRandomVar^{i}_{t}|\gameStateRandomVar_{0}\mdpActionRandomVar^{i}_{0}  \ldots \mdpStateRandomVar^{i}_{t-1}\mdpActionRandomVar^{i}_{t-1}) \right]  - \entropy(\gameStateRandomVar_{t}\gameActionRandomVar_{t}|\gameStateRandomVar_{0}\gameActionRandomVar_{0} \ldots \gameStateRandomVar_{t-1}\gameActionRandomVar_{t-1} ) \right] \label{chainruleofentropy}
\end{align}
\end{subequations} where \eqref{conditioningreducesentropy} is because conditioning reduces entropy and \eqref{chainruleofentropy} is due to the chain rule of entropy. Finally, combining \eqref{stagecorrelationgeqzero},\eqref{stagecorrelation0geqzero}, and the definition of \(\totalCorrelation_{\jointPolicy}\), we have
\begin{subequations}
\begin{align}
    \kl(\gamePathDist^{\fullcommunication} || \gamePathDist^{\imaginary}_{t_{loss}}) &\leq   \sum_{t=t_{loss}}^{\infty} \left[ \left[ \sum_{i=1}^{N} \entropy(\mdpStateRandomVar^{i}_{t}\mdpActionRandomVar^{i}_{t}|\gameStateRandomVar_{0}\mdpActionRandomVar^{i}_{0}  \ldots \mdpStateRandomVar^{i}_{t-1}\mdpActionRandomVar^{i}_{t-1}) \right] - \entropy(\gameStateRandomVar_{t}\gameActionRandomVar_{t}|\gameStateRandomVar_{0}\gameActionRandomVar_{0} \ldots \gameStateRandomVar_{t-1}\gameActionRandomVar_{t-1} ) \right] 
    \\
    &\leq \left[ \sum_{i=1}^{N} \entropy(\mdpStateRandomVar^{i}_{0}\mdpActionRandomVar^{i}_{0}|\gameStateRandomVar_{0}) \right] - \entropy(\gameStateRandomVar_{0}\gameActionRandomVar_{0}|\gameStateRandomVar_{0}) + \sum_{t=0}^{\infty} \left[ \left[ \sum_{i=1}^{N} \entropy(\mdpStateRandomVar^{i}_{t}\mdpActionRandomVar^{i}_{t}|\gameStateRandomVar_{0}\mdpActionRandomVar^{i}_{0}  \ldots \mdpStateRandomVar^{i}_{t-1}\mdpActionRandomVar^{i}_{t-1}) \right]  - \entropy(\gameStateRandomVar_{t}\gameActionRandomVar_{t}|\gameStateRandomVar_{0}\gameActionRandomVar_{0} \ldots \gameStateRandomVar_{t-1}\gameActionRandomVar_{t-1} ) \right]
    \\
    &=\totalCorrelation_{\jointPolicy}
    \\
    &=\kl(\gamePathDist^{\fullcommunication} || \gamePathDist^{\imaginary}_{0}).
\end{align}
\end{subequations}

Hence, for every \(t_{loss} \in \lbrace 0, 1, \ldots \rbrace \cup \lbrace \infty \rbrace\) in Algorithm \ref{alg:imaginaryplay}, \[\kl(\gamePathDist^{\fullcommunication} || \gamePathDist^{\imaginary}_{0}) \geq  \kl(\gamePathDist^{\fullcommunication} || \gamePathDist^{\imaginary}_{t_{loss}}).\]
\end{proof}

\newpage
\begin{proof}[Proof of Lemma \ref{lemma:intermittent}]
We first show that \(\kl(\gamePathDist^{\fullcommunication} || \gamePathDist^{\imaginary}_{0}) \geq  \kl(\gamePathDist^{\fullcommunication} || \gamePathDist^{\intermittent}_{\sequenceCommAvailibility}) \) for an arbitrary sequence of \(\sequenceCommAvailibility = \oneStepCommAvailibility_{0}, \oneStepCommAvailibility_{1},\ldots\) communication availability. 

Define \(\oneStepCommAvailibility_{-1} = 1\). Let \(l_{j}\) denote the starting time index of \(j\)-th period that communication is not available. Formally, \(l_{1} = \min\lbrace i|\oneStepCommAvailibility_{i}=0, \oneStepCommAvailibility_{i-1}=1, i\geq0\rbrace\) and \(l_{j} = \min\lbrace i|\oneStepCommAvailibility_{i}=0, \oneStepCommAvailibility_{i-1}=1, i> l_{j-1}\rbrace\) for all \(j\geq2\). Similarly, let \(r_{j}\) denote the starting time index of \(j\)-th period that communication is available again. Formally, \(r_{1} = \min\lbrace i|\oneStepCommAvailibility_{i}=1, \oneStepCommAvailibility_{i-1}=0, i\geq0\rbrace\) and \(r_{j} = \min\lbrace i|\oneStepCommAvailibility_{i}=1, \oneStepCommAvailibility_{i-1}=0, i> r_{j-1}\rbrace\) for all \(j\geq2\). For example, for \(\sequenceCommAvailibility = \oneStepCommAvailibility_{0}, \oneStepCommAvailibility_{1},\ldots = 0,1,1,0,0,0,1,1, \ldots\), we have \(l_{1} = 0\), \(r_{1} = 1\), \(l_{2} = 3\), and \(r_{2} = 6\). For \(\sequenceCommAvailibility = \oneStepCommAvailibility_{0}, \oneStepCommAvailibility_{1},\ldots = 1,1,0,0,0,1, \ldots\), we have \(l_{1} = 2\) and \(r_{1} = 5\).

We consider two different cases of \(\oneStepCommAvailibility_{0}\) separately. First, assume that \(\oneStepCommAvailibility_{0}= 1\), i.e., the communication is available at time \(0\).
\begin{subequations}
\begin{align}
    \kl(\gamePathDist^{\fullcommunication} || \gamePathDist^{\intermittent}_{\sequenceCommAvailibility}) &= \sum_{\gameState_{0}\gameAction_{0}\gameState_{1}\gameAction_{1}\ldots \in (\gameStateSpace \times \gameActionSpace)^{\omega}} \probabilityMeasure^{\fullcommunication}(\gameState_{0}\gameAction_{0}\gameState_{1}\gameAction_{1}\ldots) \log\left(\frac{\probabilityMeasure^{\fullcommunication}(\gameState_{0}\gameAction_{0}\gameState_{1}\gameAction_{1}\ldots)}{\probabilityMeasure^{\intermittent}_{\sequenceCommAvailibility}(\gameState_{0}\gameAction_{0}\gameState_{1}\gameAction_{1}\ldots)}\right)
    \\
    &= \sum_{\gameState_{0}\gameAction_{0}\gameState_{1}\gameAction_{1}\ldots \in (\gameStateSpace \times \gameActionSpace)^{\omega}}  \probabilityMeasure^{\fullcommunication}(\gameState_{0}\gameAction_{0}\gameState_{1}\gameAction_{1}\ldots) \log\left(\frac{\probabilityMeasure^{\fullcommunication}(\gameState_{0}\gameAction_{0}\gameState_{1}\gameAction_{1}\ldots\gameState_{l_{1}-1}\gameAction_{l_{1}-1})}{\probabilityMeasure^{\intermittent}_{\sequenceCommAvailibility}(\gameState_{0}\gameAction_{0}\gameState_{1}\gameAction_{1}\ldots\gameState_{l_{1}-1}\gameAction_{l_{1}-1})}\right) \nonumber 
    \\
    &\quad +\sum_{j=1}^{\infty} \quad \sum_{\gameState_{0}\gameAction_{0}\gameState_{1}\gameAction_{1}\ldots \in (\gameStateSpace \times \gameActionSpace)^{\omega}}  \probabilityMeasure^{\fullcommunication}(\gameState_{0}\gameAction_{0}\gameState_{1}\gameAction_{1}\ldots) \log\left(\frac{\probabilityMeasure^{\fullcommunication}(\gameState_{l_{j}}\gameAction_{l_{j}}\gameState_{l_{j}+1}\gameAction_{l_{j}+1}\ldots\gameState_{l_{j+1}-1}\gameAction_{l_{j+1}-1}|\gameState_{0}\gameAction_{0}\gameState_{1}\gameAction_{1}\ldots\gameState_{l_{j}-1}\gameAction_{l_{j}-1})}{\probabilityMeasure^{\intermittent}_{\sequenceCommAvailibility}(\gameState_{l_{j}}\gameAction_{l_{j}}\gameState_{l_{j}+1}\gameAction_{l_{j}+1}\ldots\gameState_{l_{j+1}-1}\gameAction_{l_{j+1}-1}|\gameState_{0}\gameAction_{0}\gameState_{1}\gameAction_{1}\ldots\gameState_{l_{j}-1}\gameAction_{l_{j}-1})}\right) 
    \\
    &= \sum_{\gameState_{0}\gameAction_{0}\gameState_{1}\gameAction_{1}\ldots \in (\gameStateSpace \times \gameActionSpace)^{\omega}}  \probabilityMeasure^{\fullcommunication}(\gameState_{0}\gameAction_{0}\gameState_{1}\gameAction_{1}\ldots) \log\left(\frac{\probabilityMeasure^{\fullcommunication}(\gameState_{0}\gameAction_{0}\gameState_{1}\gameAction_{1}\ldots\gameState_{l_{1}-1}\gameAction_{l_{1}-1})}{\probabilityMeasure^{\intermittent}_{\sequenceCommAvailibility}(\gameState_{0}\gameAction_{0}\gameState_{1}\gameAction_{1}\ldots\gameState_{l_{1}-1}\gameAction_{l_{1}-1})}\right) \nonumber 
    \\
    &\quad +\sum_{j=1}^{\infty} \quad \sum_{\gameState_{0}\gameAction_{0}\gameState_{1}\gameAction_{1}\ldots \in (\gameStateSpace \times \gameActionSpace)^{\omega}}  \probabilityMeasure^{\fullcommunication}(\gameState_{0}\gameAction_{0}\gameState_{1}\gameAction_{1}\ldots) \log\left(\frac{\probabilityMeasure^{\fullcommunication}(\gameState_{l_{j}}\gameAction_{l_{j}}\gameState_{l_{j}+1}\gameAction_{l_{j}+1}\ldots\gameState_{r_{j}-1}\gameAction_{r_{j}-1}|\gameState_{0}\gameAction_{0}\gameState_{1}\gameAction_{1}\ldots\gameState_{l_{j}-1}\gameAction_{l_{j}-1})}{\probabilityMeasure^{\intermittent}_{\sequenceCommAvailibility}(\gameState_{l_{j}}\gameAction_{l_{j}}\gameState_{l_{j}+1}\gameAction_{l_{j}+1}\ldots\gameState_{r_{j}-1}\gameAction_{r_{j}-1}|\gameState_{0}\gameAction_{0}\gameState_{1}\gameAction_{1}\ldots\gameState_{l_{j}-1}\gameAction_{l_{j}-1})}\right) \nonumber 
    \\
    &\quad +\sum_{j=1}^{\infty} \quad \sum_{\gameState_{0}\gameAction_{0}\gameState_{1}\gameAction_{1}\ldots \in (\gameStateSpace \times \gameActionSpace)^{\omega}}  \probabilityMeasure^{\fullcommunication}(\gameState_{0}\gameAction_{0}\gameState_{1}\gameAction_{1}\ldots) \log\left(\frac{\probabilityMeasure^{\fullcommunication}(\gameState_{r_{j}}\gameAction_{r_{j}}\gameState_{r_{j}+1}\gameAction_{r_{j}+1}\ldots\gameState_{l_{j+1}-1}\gameAction_{l_{j+1}-1}|\gameState_{0}\gameAction_{0}\gameState_{1}\gameAction_{1}\ldots\gameState_{r_{j}-1}\gameAction_{r_{j}-1})}{\probabilityMeasure^{\intermittent}_{\sequenceCommAvailibility}(\gameState_{r_{j}}\gameAction_{r_{j}}\gameState_{r_{j}+1}\gameAction_{r_{j}+1}\ldots\gameState_{l_{j+1}-1}\gameAction_{l_{j+1}-1}|\gameState_{0}\gameAction_{0}\gameState_{1}\gameAction_{1}\ldots\gameState_{r_{j}-1}\gameAction_{r_{j}-1})}\right).
\end{align}
\end{subequations}
We note that when the communication is available the state-action process under the intermittent communication and  the state-action process under the joint policy with full communication follow the same Markov chain. Also note that communication is available between \([0,l_{1}]\) and \([r_{j},l_{j+1}-1]\) for all \(j\geq 1\). Consequently, 
\begin{subequations}
\begin{align}
    \kl(\gamePathDist^{\fullcommunication} || \gamePathDist^{\intermittent}_{\sequenceCommAvailibility})
    &=\sum_{\gameState_{0}\gameAction_{0}\gameState_{1}\gameAction_{1}\ldots \in (\gameStateSpace \times \gameActionSpace)^{\omega}}  \probabilityMeasure^{\fullcommunication}(\gameState_{0}\gameAction_{0}\gameState_{1}\gameAction_{1}\ldots) \log\left(\frac{\probabilityMeasure^{\fullcommunication}(\gameState_{0}\gameAction_{0}\gameState_{1}\gameAction_{1}\ldots\gameState_{l_{1}-1}\gameAction_{l_{1}-1})}{\probabilityMeasure^{\fullcommunication}(\gameState_{0}\gameAction_{0}\gameState_{1}\gameAction_{1}\ldots\gameState_{l_{1}-1}\gameAction_{l_{1}-1})}\right) \nonumber 
    \\
    &\quad +\sum_{j=1}^{\infty} \quad \sum_{\gameState_{0}\gameAction_{0}\gameState_{1}\gameAction_{1}\ldots \in (\gameStateSpace \times \gameActionSpace)^{\omega}}  \probabilityMeasure^{\fullcommunication}(\gameState_{0}\gameAction_{0}\gameState_{1}\gameAction_{1}\ldots) \log\left(\frac{\probabilityMeasure^{\fullcommunication}(\gameState_{l_{j}}\gameAction_{l_{j}}\gameState_{l_{j}+1}\gameAction_{l_{j}+1}\ldots\gameState_{r_{j}-1}\gameAction_{r_{j}-1}|\gameState_{0}\gameAction_{0}\gameState_{1}\gameAction_{1}\ldots\gameState_{l_{j}-1}\gameAction_{l_{j}-1})}{\probabilityMeasure^{\intermittent}_{\sequenceCommAvailibility}(\gameState_{l_{j}}\gameAction_{l_{j}}\gameState_{l_{j}+1}\gameAction_{l_{j}+1}\ldots\gameState_{r_{j}-1}\gameAction_{r_{j}-1}|\gameState_{0}\gameAction_{0}\gameState_{1}\gameAction_{1}\ldots\gameState_{l_{j}-1}\gameAction_{l_{j}-1})}\right) \nonumber 
    \\
    &\quad +\sum_{j=1}^{\infty} \quad \sum_{\gameState_{0}\gameAction_{0}\gameState_{1}\gameAction_{1}\ldots \in (\gameStateSpace \times \gameActionSpace)^{\omega}}  \probabilityMeasure^{\fullcommunication}(\gameState_{0}\gameAction_{0}\gameState_{1}\gameAction_{1}\ldots) \log\left(\frac{\probabilityMeasure^{\fullcommunication}(\gameState_{r_{j}}\gameAction_{r_{j}}\gameState_{r_{j}+1}\gameAction_{r_{j}+1}\ldots\gameState_{l_{j+1}-1}\gameAction_{l_{j+1}-1}|\gameState_{0}\gameAction_{0}\gameState_{1}\gameAction_{1}\ldots\gameState_{r_{j}-1}\gameAction_{r_{j}-1})}{\probabilityMeasure^{\fullcommunication}(\gameState_{r_{j}}\gameAction_{r_{j}}\gameState_{r_{j}+1}\gameAction_{r_{j}+1}\ldots\gameState_{l_{j+1}-1}\gameAction_{l_{j+1}-1}|\gameState_{0}\gameAction_{0}\gameState_{1}\gameAction_{1}\ldots\gameState_{r_{j}-1}\gameAction_{r_{j}-1})}\right)
    \\
    &=\sum_{j=1}^{\infty} \quad \sum_{\gameState_{0}\gameAction_{0}\gameState_{1}\gameAction_{1}\ldots \in (\gameStateSpace \times \gameActionSpace)^{\omega}}  \probabilityMeasure^{\fullcommunication}(\gameState_{0}\gameAction_{0}\gameState_{1}\gameAction_{1}\ldots) \log\left(\frac{\probabilityMeasure^{\fullcommunication}(\gameState_{l_{j}}\gameAction_{l_{j}}\gameState_{l_{j}+1}\gameAction_{l_{j}+1}\ldots\gameState_{r_{j}-1}\gameAction_{r_{j}-1}|\gameState_{0}\gameAction_{0}\gameState_{1}\gameAction_{1}\ldots\gameState_{l_{j}-1}\gameAction_{l_{j}-1})}{\probabilityMeasure^{\intermittent}_{\sequenceCommAvailibility}(\gameState_{l_{j}}\gameAction_{l_{j}}\gameState_{l_{j}+1}\gameAction_{l_{j}+1}\ldots\gameState_{r_{j}-1}\gameAction_{r_{j}-1}|\gameState_{0}\gameAction_{0}\gameState_{1}\gameAction_{1}\ldots\gameState_{l_{j}-1}\gameAction_{l_{j}-1})}\right).
\end{align}
\end{subequations}

By the same arguments,  when \(\oneStepCommAvailibility_{0}= 0\), i.e., the communication is not available at time \(0\), 
\begin{subequations}
\begin{align}
    \kl(\gamePathDist^{\fullcommunication} || \gamePathDist^{\intermittent}_{\sequenceCommAvailibility}) &= \sum_{\gameState_{0}\gameAction_{0}\gameState_{1}\gameAction_{1}\ldots \in (\gameStateSpace \times \gameActionSpace)^{\omega}} \probabilityMeasure^{\fullcommunication}(\gameState_{0}\gameAction_{0}\gameState_{1}\gameAction_{1}\ldots) \log\left(\frac{\probabilityMeasure^{\fullcommunication}(\gameState_{0}\gameAction_{0}\gameState_{1}\gameAction_{1}\ldots)}{\probabilityMeasure^{\intermittent}_{\sequenceCommAvailibility}(\gameState_{0}\gameAction_{0}\gameState_{1}\gameAction_{1}\ldots)}\right)
    \\
    &=\sum_{j=1}^{\infty} \quad \sum_{\gameState_{0}\gameAction_{0}\gameState_{1}\gameAction_{1}\ldots \in (\gameStateSpace \times \gameActionSpace)^{\omega}}  \probabilityMeasure^{\fullcommunication}(\gameState_{0}\gameAction_{0}\gameState_{1}\gameAction_{1}\ldots) \log\left(\frac{\probabilityMeasure^{\fullcommunication}(\gameState_{l_{j}}\gameAction_{l_{j}}\gameState_{l_{j}+1}\gameAction_{l_{j}+1}\ldots\gameState_{r_{j}-1}\gameAction_{r_{j}-1}|\gameState_{0}\gameAction_{0}\gameState_{1}\gameAction_{1}\ldots\gameState_{l_{j}-1}\gameAction_{l_{j}-1})}{\probabilityMeasure^{\intermittent}_{\sequenceCommAvailibility}(\gameState_{l_{j}}\gameAction_{l_{j}}\gameState_{l_{j}+1}\gameAction_{l_{j}+1}\ldots\gameState_{r_{j}-1}\gameAction_{r_{j}-1}|\gameState_{0}\gameAction_{0}\gameState_{1}\gameAction_{1}\ldots\gameState_{l_{j}-1}\gameAction_{l_{j}-1})}\right) \nonumber 
    \\
    &\quad +\sum_{j=1}^{\infty} \quad \sum_{\gameState_{0}\gameAction_{0}\gameState_{1}\gameAction_{1}\ldots \in (\gameStateSpace \times \gameActionSpace)^{\omega}}  \probabilityMeasure^{\fullcommunication}(\gameState_{0}\gameAction_{0}\gameState_{1}\gameAction_{1}\ldots) \log\left(\frac{\probabilityMeasure^{\fullcommunication}(\gameState_{r_{j}}\gameAction_{r_{j}}\gameState_{r_{j}+1}\gameAction_{r_{j}+1}\ldots\gameState_{l_{j+1}-1}\gameAction_{l_{j+1}-1}|\gameState_{0}\gameAction_{0}\gameState_{1}\gameAction_{1}\ldots\gameState_{r_{j}-1}\gameAction_{r_{j}-1})}{\probabilityMeasure^{\intermittent}_{\sequenceCommAvailibility}(\gameState_{r_{j}}\gameAction_{r_{j}}\gameState_{r_{j}+1}\gameAction_{r_{j}+1}\ldots\gameState_{l_{j+1}-1}\gameAction_{l_{j+1}-1}|\gameState_{0}\gameAction_{0}\gameState_{1}\gameAction_{1}\ldots\gameState_{r_{j}-1}\gameAction_{r_{j}-1})}\right)
    \\
    &=\sum_{j=1}^{\infty} \quad \sum_{\gameState_{0}\gameAction_{0}\gameState_{1}\gameAction_{1}\ldots \in (\gameStateSpace \times \gameActionSpace)^{\omega}}  \probabilityMeasure^{\fullcommunication}(\gameState_{0}\gameAction_{0}\gameState_{1}\gameAction_{1}\ldots) \log\left(\frac{\probabilityMeasure^{\fullcommunication}(\gameState_{l_{j}}\gameAction_{l_{j}}\gameState_{l_{j}+1}\gameAction_{l_{j}+1}\ldots\gameState_{r_{j}-1}\gameAction_{r_{j}-1}|\gameState_{0}\gameAction_{0}\gameState_{1}\gameAction_{1}\ldots\gameState_{l_{j}-1}\gameAction_{l_{j}-1})}{\probabilityMeasure^{\intermittent}_{\sequenceCommAvailibility}(\gameState_{l_{j}}\gameAction_{l_{j}}\gameState_{l_{j}+1}\gameAction_{l_{j}+1}\ldots\gameState_{r_{j}-1}\gameAction_{r_{j}-1}|\gameState_{0}\gameAction_{0}\gameState_{1}\gameAction_{1}\ldots\gameState_{l_{j}-1}\gameAction_{l_{j}-1})}\right).
\end{align}
\end{subequations}

Hence, for every value of \(\oneStepCommAvailibility_{0}\),
\[   \kl(\gamePathDist^{\fullcommunication} || \gamePathDist^{\intermittent}_{\sequenceCommAvailibility})=\sum_{j=1}^{\infty} \quad \sum_{\gameState_{0}\gameAction_{0}\gameState_{1}\gameAction_{1}\ldots \in (\gameStateSpace \times \gameActionSpace)^{\omega}}  \probabilityMeasure^{\fullcommunication}(\gameState_{0}\gameAction_{0}\gameState_{1}\gameAction_{1}\ldots) \log\left(\frac{\probabilityMeasure^{\fullcommunication}(\gameState_{l_{j}}\gameAction_{l_{j}}\gameState_{l_{j}+1}\gameAction_{l_{j}+1}\ldots\gameState_{r_{j}-1}\gameAction_{r_{j}-1}|\gameState_{0}\gameAction_{0}\gameState_{1}\gameAction_{1}\ldots\gameState_{l_{j}-1}\gameAction_{l_{j}-1})}{\probabilityMeasure^{\intermittent}_{\sequenceCommAvailibility}(\gameState_{l_{j}}\gameAction_{l_{j}}\gameState_{l_{j}+1}\gameAction_{l_{j}+1}\ldots\gameState_{r_{j}-1}\gameAction_{r_{j}-1}|\gameState_{0}\gameAction_{0}\gameState_{1}\gameAction_{1}\ldots\gameState_{l_{j}-1}\gameAction_{l_{j}-1})}\right).\]

Since the policy is stationary and the agents are fully independent between \([l_{j}, r_{j}-1]\) for all \(j \geq 0\), we have 
\begin{subequations}
\begin{align}
    \kl(\gamePathDist^{\fullcommunication} || \gamePathDist^{\intermittent}_{\sequenceCommAvailibility})&=\sum_{j=1}^{\infty} \quad \sum_{\gameState_{0}\gameAction_{0}\gameState_{1}\gameAction_{1}\ldots \in (\gameStateSpace \times \gameActionSpace)^{\omega}}  \probabilityMeasure^{\fullcommunication}(\gameState_{0}\gameAction_{0}\gameState_{1}\gameAction_{1}\ldots) \log\left(\frac{\probabilityMeasure^{\fullcommunication}(\gameState_{l_{j}}\gameAction_{l_{j}}\gameState_{l_{j}+1}\gameAction_{l_{j}+1}\ldots\gameState_{r_{j}-1}\gameAction_{r_{j}-1}|\gameState_{0}\gameAction_{0}\gameState_{1}\gameAction_{1}\ldots\gameState_{l_{j}-1}\gameAction_{l_{j}-1})}{\probabilityMeasure^{\intermittent}_{\sequenceCommAvailibility}(\gameState_{l_{j}}\gameAction_{l_{j}}\gameState_{l_{j}+1}\gameAction_{l_{j}+1}\ldots\gameState_{r_{j}-1}\gameAction_{r_{j}-1}|\gameState_{l_{j}-1}\gameAction_{l_{j}-1})}\right)
    \\
    &=\sum_{j=1}^{\infty} \quad \sum_{\gameState_{0}\gameAction_{0}\gameState_{1}\gameAction_{1}\ldots \in (\gameStateSpace \times \gameActionSpace)^{\omega}}  \probabilityMeasure^{\fullcommunication}(\gameState_{0}\gameAction_{0}\gameState_{1}\gameAction_{1}\ldots) \log\left(\frac{\probabilityMeasure^{\fullcommunication}(\gameState_{l_{j}}\gameAction_{l_{j}}\gameState_{l_{j}+1}\gameAction_{l_{j}+1}\ldots\gameState_{r_{j}-1}\gameAction_{r_{j}-1}|\gameState_{0}\gameAction_{0}\gameState_{1}\gameAction_{1}\ldots\gameState_{l_{j}-1}\gameAction_{l_{j}-1})}{\prod_{i=1}^{\numAgents}\probabilityMeasure^{\fullcommunication}(\mdpState^{i}_{l_{j}}\mdpAction^{i}_{l_{j}}\mdpState^{i}_{l_{j}+1}\mdpAction^{i}_{l_{j}+1}\ldots\mdpState^{i}_{r_{j}-1}\mdpAction^{i}_{r_{j}-1}|\gameState_{l_{j}-1}\gameAction_{l_{j}-1})}\right).
\end{align}
\end{subequations}

By the definition of conditional entropy, we have
\begin{subequations}
\begin{align}
    \kl(\gamePathDist^{\fullcommunication} || \gamePathDist^{\intermittent}_{\sequenceCommAvailibility})&=\sum_{j=1}^{\infty}  \left[ \left[ \sum_{i=1}^{N} \entropy(\mdpStateRandomVar^{i}_{l_{j}}\mdpActionRandomVar^{i}_{l_{j}}\mdpStateRandomVar^{i}_{l_{j}+1}\mdpActionRandomVar^{i}_{l_{j}+1}\ldots\mdpStateRandomVar^{i}_{r_{j}-1}\mdpActionRandomVar^{i}_{r_{j}-1}|\gameStateRandomVar_{l_{j}-1}\gameActionRandomVar_{l_{j}-1}) \right]  - \entropy(\gameStateRandomVar_{l_{j}}\gameActionRandomVar_{l_{j}}\gameStateRandomVar_{l_{j}+1}\gameActionRandomVar_{l_{j}+1}\ldots\gameStateRandomVar_{r_{j}-1}\gameActionRandomVar_{r_{j}-1}|\gameStateRandomVar_{0}\gameActionRandomVar_{0} \ldots \gameStateRandomVar_{l_{j}-1}\gameActionRandomVar_{l_{j}-1} ) \right]
    \\
    &=\sum_{j=1}^{\infty}  \left[ \left[ \sum_{i=1}^{N} \entropy(\mdpStateRandomVar^{i}_{l_{j}}\mdpActionRandomVar^{i}_{l_{j}}\mdpStateRandomVar^{i}_{l_{j}+1}\mdpActionRandomVar^{i}_{l_{j}+1}\ldots\mdpStateRandomVar^{i}_{r_{j}-1}\mdpActionRandomVar^{i}_{r_{j}-1}|\gameStateRandomVar_{0}\mdpActionRandomVar^{i}_{0}\mdpStateRandomVar^{i}_{1}\mdpActionRandomVar^{i}_{1}\ldots\mdpStateRandomVar^{i}_{l_{j}-2}\mdpActionRandomVar^{i}_{l_{j}-2} \gameStateRandomVar_{l_{j}-1}\gameActionRandomVar_{l_{j}-1}) \right] \right. \nonumber 
    \\
    &\quad - \left. \entropy(\gameStateRandomVar_{l_{j}}\gameActionRandomVar_{l_{j}}\gameStateRandomVar_{l_{j}+1}\gameActionRandomVar_{l_{j}+1}\ldots\gameStateRandomVar_{r_{j}-1}\gameActionRandomVar_{r_{j}-1}|\gameStateRandomVar_{0}\gameActionRandomVar_{0} \ldots \gameStateRandomVar_{l_{j}-1}\gameActionRandomVar_{l_{j}-1} )\right].
\end{align}
\end{subequations} where the second equality is due to the stationarity of \(\jointPolicy\), i.e., \(\mdpStateRandomVar^{i}_{l_{j}}\mdpActionRandomVar^{i}_{l_{j}}\mdpStateRandomVar^{i}_{l_{j}+1}\mdpActionRandomVar^{i}_{l_{j}+1}\ldots\mdpStateRandomVar^{i}_{r_{j}-1}\mdpActionRandomVar^{i}_{r_{j}-1}\) is independent of \(\gameStateRandomVar_{0}\mdpActionRandomVar^{i}_{0}\mdpStateRandomVar^{i}_{1}\mdpActionRandomVar^{i}_{1}\ldots\mdpStateRandomVar^{i}_{l_{j}-2}\mdpActionRandomVar^{i}_{l_{j}-2}\) given \(\gameStateRandomVar_{l_{j}-1}\gameActionRandomVar_{l_{j}-1}\).

Since conditioning reduces entropy,
\begin{subequations}
\begin{align}
    \kl(\gamePathDist^{\fullcommunication} || \gamePathDist^{\intermittent}_{\sequenceCommAvailibility}) &=\sum_{j=1}^{\infty}  \left[ \left[ \sum_{i=1}^{N} \entropy(\mdpStateRandomVar^{i}_{l_{j}}\mdpActionRandomVar^{i}_{l_{j}}\mdpStateRandomVar^{i}_{l_{j}+1}\mdpActionRandomVar^{i}_{l_{j}+1}\ldots\mdpStateRandomVar^{i}_{r_{j}-1}\mdpActionRandomVar^{i}_{r_{j}-1}|\gameStateRandomVar_{0}\mdpActionRandomVar^{i}_{0}\mdpStateRandomVar^{i}_{1}\mdpActionRandomVar^{i}_{1}\ldots \gameStateRandomVar_{l_{j}-1}\gameActionRandomVar_{l_{j}-1}) \right] \right. \nonumber 
    \\
    &\quad - \left. \entropy(\gameStateRandomVar_{l_{j}}\gameActionRandomVar_{l_{j}}\gameStateRandomVar_{l_{j}+1}\gameActionRandomVar_{l_{j}+1}\ldots\gameStateRandomVar_{r_{j}-1}\gameActionRandomVar_{r_{j}-1}|\gameStateRandomVar_{0}\gameActionRandomVar_{0} \ldots \gameStateRandomVar_{l_{j}-1}\gameActionRandomVar_{l_{j}-1} )\right]
    \\
     &\leq\sum_{j=1}^{\infty}  \left[ \left[ \sum_{i=1}^{N} \entropy(\mdpStateRandomVar^{i}_{l_{j}}\mdpActionRandomVar^{i}_{l_{j}}\mdpStateRandomVar^{i}_{l_{j}+1}\mdpActionRandomVar^{i}_{l_{j}+1}\ldots\mdpStateRandomVar^{i}_{r_{j}-1}\mdpActionRandomVar^{i}_{r_{j}-1}|\gameStateRandomVar_{0}\mdpActionRandomVar^{i}_{0}\mdpStateRandomVar^{i}_{1}\mdpActionRandomVar^{i}_{1}\ldots \mdpStateRandomVar^{i}_{l_{j}-1}\mdpActionRandomVar^{i}_{l_{j}-1}) \right] \right. \nonumber 
    \\
    &\quad - \left. \entropy(\gameStateRandomVar_{l_{j}}\gameActionRandomVar_{l_{j}}\gameStateRandomVar_{l_{j}+1}\gameActionRandomVar_{l_{j}+1}\ldots\gameStateRandomVar_{r_{j}-1}\gameActionRandomVar_{r_{j}-1}|\gameStateRandomVar_{0}\gameActionRandomVar_{0} \ldots \gameStateRandomVar_{l_{j}-1}\gameActionRandomVar_{l_{j}-1} )\right]
    \\
    &=\sum_{j=1}^{\infty} \sum_{t=l_{j}}^{r_{j}-1} \left[ \left[ \sum_{i=1}^{N} \entropy(\mdpStateRandomVar^{i}_{t}\mdpActionRandomVar^{i}_{t}|\gameStateRandomVar_{0}\mdpActionRandomVar^{i}_{0}  \ldots \mdpStateRandomVar^{i}_{t-1}\mdpActionRandomVar^{i}_{t-1}) \right]  - \entropy(\gameStateRandomVar_{t}\gameActionRandomVar_{t}|\gameStateRandomVar_{0}\gameActionRandomVar_{0} \ldots \gameStateRandomVar_{t-1}\gameActionRandomVar_{t-1} )\right]
\end{align}
\end{subequations}
where the last equality is due to the definition of conditional entropy.

We note that 
\begin{subequations}
\begin{align}
    \kl(\gamePathDist^{\fullcommunication} || \gamePathDist^{\intermittent}_{\sequenceCommAvailibility}) &\leq \sum_{j=1}^{\infty} \sum_{t=l_{j}}^{r_{j}-1} \left[ \left[ \sum_{i=1}^{N} \entropy(\mdpStateRandomVar^{i}_{t}\mdpActionRandomVar^{i}_{t}|\gameStateRandomVar_{0}\mdpActionRandomVar^{i}_{0}  \ldots \mdpStateRandomVar^{i}_{t-1}\mdpActionRandomVar^{i}_{t-1}) \right]  - \entropy(\gameStateRandomVar_{t}\gameActionRandomVar_{t}|\gameStateRandomVar_{0}\gameActionRandomVar_{0} \ldots \gameStateRandomVar_{t-1}\gameActionRandomVar_{t-1} )\right]
    \\
    &= \sum_{t=0}^{\infty}(1-\oneStepCommAvailibility_{t}) \left[ \left[ \sum_{i=1}^{N} \entropy(\mdpStateRandomVar^{i}_{t}\mdpActionRandomVar^{i}_{t}|\gameStateRandomVar_{0}\mdpActionRandomVar^{i}_{0}  \ldots \mdpStateRandomVar^{i}_{t-1}\mdpActionRandomVar^{i}_{t-1}) \right]  - \entropy(\gameStateRandomVar_{t}\gameActionRandomVar_{t}|\gameStateRandomVar_{0}\gameActionRandomVar_{0} \ldots \gameStateRandomVar_{t-1}\gameActionRandomVar_{t-1} )\right].
\end{align}
\end{subequations}

Combining \eqref{stagecorrelationgeqzero},\eqref{stagecorrelation0geqzero}, and the definition of \(\totalCorrelation_{\jointPolicy}\), we have
\begin{subequations}
\begin{align}
    \kl(\gamePathDist^{\fullcommunication} || \gamePathDist^{\intermittent}_{\sequenceCommAvailibility}) &\leq \sum_{t=0}^{\infty}(1-\oneStepCommAvailibility_{t}) \left[ \left[ \sum_{i=1}^{N} \entropy(\mdpStateRandomVar^{i}_{t}\mdpActionRandomVar^{i}_{t}|\gameStateRandomVar_{0}\mdpActionRandomVar^{i}_{0}  \ldots \mdpStateRandomVar^{i}_{t-1}\mdpActionRandomVar^{i}_{t-1}) \right]  - \entropy(\gameStateRandomVar_{t}\gameActionRandomVar_{t}|\gameStateRandomVar_{0}\gameActionRandomVar_{0} \ldots \gameStateRandomVar_{t-1}\gameActionRandomVar_{t-1} )\right] \label{klasasumofstepcorr}
    \\
    &\leq \kl(\gamePathDist^{\fullcommunication} || \gamePathDist^{\imaginary}_{0}).
\end{align}
\end{subequations}

We now show that if every \(\oneStepCommAvailibility_{t}\) is independently sampled from a Bernoulli random variable with parameter \(\probabilityFailureOneStep\),  \[\kl(\gamePathDist^{\fullcommunication} || \gamePathDist^{\imaginary}) \geq  \kl(\gamePathDist^{\fullcommunication} || \gamePathDist^{\intermittent})/\probabilityFailureOneStep \] where  \(\gamePathDist^{\intermittent} = \expectation_{\sequenceCommAvailibility}\left[\gamePathDist^{\intermittent}_{\sequenceCommAvailibility}\right]\). 

By the convexity of KL divergence~\cite{boyd2004convex} and \eqref{klasasumofstepcorr}, Jensen's inequality~\cite{boyd2004convex} yields
\begin{subequations}
\begin{align}
    \kl(\gamePathDist^{\fullcommunication} || \gamePathDist^{\intermittent}) &= \kl(\gamePathDist^{\fullcommunication} || \expectation_{\sequenceCommAvailibility}\left[\gamePathDist^{\intermittent}_{\sequenceCommAvailibility}\right]) 
    \\
    &\leq \expectation_{\sequenceCommAvailibility}\left[\kl(\gamePathDist^{\fullcommunication} || \gamePathDist^{\intermittent}_{\sequenceCommAvailibility}]) \right]
    \\
    &\leq \expectation_{\sequenceCommAvailibility}\left[  \sum_{t=0}^{\infty}(1-\oneStepCommAvailibility_{t}) \left[ \left[ \sum_{i=1}^{N} \entropy(\mdpStateRandomVar^{i}_{t}\mdpActionRandomVar^{i}_{t}|\gameStateRandomVar_{0}\mdpActionRandomVar^{i}_{0}  \ldots \mdpStateRandomVar^{i}_{t-1}\mdpActionRandomVar^{i}_{t-1}) \right]  - \entropy(\gameStateRandomVar_{t}\gameActionRandomVar_{t}|\gameStateRandomVar_{0}\gameActionRandomVar_{0} \ldots \gameStateRandomVar_{t-1}\gameActionRandomVar_{t-1} )\right] \right]
        \\
    &=  q \sum_{t=0}^{\infty} \left[ \left[ \sum_{i=1}^{N} \entropy(\mdpStateRandomVar^{i}_{t}\mdpActionRandomVar^{i}_{t}|\gameStateRandomVar_{0}\mdpActionRandomVar^{i}_{0}  \ldots \mdpStateRandomVar^{i}_{t-1}\mdpActionRandomVar^{i}_{t-1}) \right]  - \entropy(\gameStateRandomVar_{t}\gameActionRandomVar_{t}|\gameStateRandomVar_{0}\gameActionRandomVar_{0} \ldots \gameStateRandomVar_{t-1}\gameActionRandomVar_{t-1} )\right]
    \\
    &= q \kl(\gamePathDist^{\fullcommunication} || \gamePathDist^{\imaginary}_{0})
\end{align}
\end{subequations}
where the last equalities are due to the linearity of expectation and the independence of \(\oneStepCommAvailibility_{t}\) values from the state-action processes. Rearranging the terms yields \[\kl(\gamePathDist^{\fullcommunication} || \gamePathDist^{\imaginary}_{0}) \geq  \kl(\gamePathDist^{\fullcommunication} || \gamePathDist^{\intermittent})/\probabilityFailureOneStep. \]
\end{proof}

\newpage
\begin{proof}[Proof of Lemma \ref{lemma:imaginarystronger}] 
The proof is similar to the proof of Lemma 7.1.

If \(\genericFunction(\emptyString) = 0\), i.e., communication is not available at time \(0\), then the agents use the imaginary play for the whole path, i.e., \(t_{loss} = 0\), and the distribution \(\gamePathDist^{\imaginary}_{\genericFunction}\) of paths is the same as \(\gamePathDist^{\imaginary}_{0}\). Then, \[\kl(\gamePathDist^{\fullcommunication} || \gamePathDist^{\imaginary}_{0}) = \kl(\gamePathDist^{\fullcommunication} || \gamePathDist^{\imaginary}_{\genericFunction}).\] Without loss of generality, we will assume \(\genericFunction(\emptyString) \not= 0\) for the rest of the proof.

We first define some sets for ease of notation. Let  \(\genericString = \gameState_{0}\gameAction_{0}\gameState_{1}\gameAction_{1}\ldots \gameState_{t}\gameAction_{t}\) be a finite state-action sequence. Define \(Pref(\genericString)\) as the set of all strict prefixes of \(\genericString\) such that \(Pref(\genericString =  \gameState_{0}\gameAction_{0}\gameState_{1}\gameAction_{1}\ldots \gameState_{t}\gameAction_{t}) = \lbrace  \gameState_{0}\gameAction_{0}\gameState_{1}\gameAction_{1}\ldots \gameState_{t'}\gameAction_{t'}| t' = 0,\ldots, t-1 \rbrace \). Define \(Strings(\genericString)\) as the set of all finite state-action sequences that start with \(\genericString\) such that \(Strings(\genericString) = \lbrace \genericString w' | w' \in (\gameStateSpace \times \gameActionSpace)^{*}\rbrace \).

Let \(W_{loss}\) be the set of finite state-action sequences that lead to a communication loss for the first time. Formally, \[W_{loss} = \left\lbrace \genericString \in (\gameStateSpace \times \gameActionSpace)^{*} |\genericFunction(\genericString) =0 \text{, and } (\forall \genericString' \in Pref(W_{loss}), \genericFunction(\genericString') = 1 ) \right\rbrace.\] Note that there do not exist \(w,w' \in W_{loss}\) and \(w \neq w'\) such that \(w \in Pref(w')\) or \(w' \in Pref(w)\).

Let \(W_{\neg loss}\) be the set of finite shortest state-action sequences that guarantees the agents will not ever experience a communication loss. Formally, \[W_{\neg loss} = \left\lbrace w \in (\gameStateSpace \times \gameActionSpace)^{*} |(\forall w' \in Strings(w), \genericFunction(w') =1 )\text{, and } (\forall w' \in Pref(\genericString), \exists \bar{w} \in Strings(w'), \genericFunction(\bar{w}) =0)  \right\rbrace.\] Note that there do not exist \(w,w' \in W_{\neg loss}\) and \(w \neq w'\) such that \(w \in Pref(w')\) or \(w' \in Pref(w)\).

Note that \(W_{loss} \cap W_{\neg loss} = \emptyset\). Also, note that \[\bigcup_{\genericString \in W_{loss} \cup W_{\neg loss}} \left\lbrace \genericString \genericString' | \genericString' \in (\gameStateSpace \times \gameActionSpace)^{\omega} \right\rbrace  = (\gameStateSpace \times \gameActionSpace)^{\omega},\] i.e., every path starts with a finite state-action sequence from \(W_{loss}\) or \(W_{\neg loss}\). Let \(\tau\) denote the random hitting time to set \((W_{loss} \cup W_{\neg loss}\), i.e., \(\gameState_{0}\gameAction_{0}\gameState_{1}\gameAction_{1}\ldots \gameState_{\tau}\gameAction_{\tau} \in (W_{loss} \cup W_{\neg loss})\).

We have
\begin{subequations}
\begin{align}
\kl(\gamePathDist^{\fullcommunication} || \gamePathDist^{\imaginary}_{\genericFunction}) &= \sum_{\gameState_{0}\gameAction_{0}\gameState_{1}\gameAction_{1}\ldots \in (\gameStateSpace \times \gameActionSpace)^{\omega}} \probabilityMeasure^{\fullcommunication}(\gameState_{0}\gameAction_{0}\gameState_{1}\gameAction_{1}\ldots) \log\left(\frac{\probabilityMeasure^{\fullcommunication}(\gameState_{0}\gameAction_{0}\gameState_{1}\gameAction_{1}\ldots)}{\probabilityMeasure^{\imaginary}_{\genericFunction}(\gameState_{0}\gameAction_{0}\gameState_{1}\gameAction_{1}\ldots)}\right)
\\
&= \sum_{\genericString = \gameState_{0}\gameAction_{0}\gameState_{1}\gameAction_{1}\ldots \gameState_{\tau}\gameAction_{\tau} \in (W_{loss} \cup W_{\neg loss})} \probabilityMeasure^{\fullcommunication}(\genericString) \log\left(\frac{\probabilityMeasure^{\fullcommunication}(\genericString)}{\probabilityMeasure^{\imaginary}_{\genericFunction}(\genericString)}\right) \nonumber 
\\
&\quad +\sum_{\genericString = \gameState_{0}\gameAction_{0}\gameState_{1}\gameAction_{1}\ldots \gameState_{\tau}\gameAction_{\tau} \in (W_{loss} \cup W_{\neg loss})} \probabilityMeasure^{\fullcommunication}(\genericString) \sum_{\genericString'= \gameState_{\tau+1}\gameAction_{\tau+1}\gameState_{\tau+2}\gameAction_{\tau+2}\ldots \in (\gameStateSpace \times \gameActionSpace)^{\omega}} \probabilityMeasure^{\fullcommunication}(\genericString' | \genericString) \log\left(\frac{\probabilityMeasure^{\fullcommunication}(\genericString'|\genericString)}{\probabilityMeasure^{\imaginary}_{\genericFunction}(\genericString'|\genericString)}\right) 
\\
&= \sum_{\genericString = \gameState_{0}\gameAction_{0}\gameState_{1}\gameAction_{1}\ldots \gameState_{\tau}\gameAction_{\tau} \in (W_{loss} \cup W_{\neg loss})} \probabilityMeasure^{\fullcommunication}(\genericString) \log\left(\frac{\probabilityMeasure^{\fullcommunication}(\genericString)}{\probabilityMeasure^{\fullcommunication}(\genericString)}\right) \nonumber 
\\
&\quad +\sum_{\genericString = \gameState_{0}\gameAction_{0}\gameState_{1}\gameAction_{1}\ldots \gameState_{\tau}\gameAction_{\tau} \in (W_{loss} \cup W_{\neg loss})} \probabilityMeasure^{\fullcommunication}(\genericString) \sum_{\genericString'= \gameState_{\tau+1}\gameAction_{\tau+1}\gameState_{\tau+2}\gameAction_{\tau+2}\ldots \in (\gameStateSpace \times \gameActionSpace)^{\omega}} \probabilityMeasure^{\fullcommunication}(\genericString' | \genericString) \log\left(\frac{\probabilityMeasure^{\fullcommunication}(\genericString'|\genericString)}{\probabilityMeasure^{\imaginary}_{\genericFunction}(\genericString'|\genericString)}\right)  \label{eq:jointimgeqaulbeforehitting}
\\
&=\sum_{\genericString = \gameState_{0}\gameAction_{0}\gameState_{1}\gameAction_{1}\ldots \gameState_{\tau}\gameAction_{\tau} \in (W_{loss} \cup W_{\neg loss})} \probabilityMeasure^{\fullcommunication}(\genericString) \sum_{\genericString'= \gameState_{\tau+1}\gameAction_{\tau+1}\gameState_{\tau+2}\gameAction_{\tau+2}\ldots \in (\gameStateSpace \times \gameActionSpace)^{\omega}} \probabilityMeasure^{\fullcommunication}(\genericString' | \genericString) \log\left(\frac{\probabilityMeasure^{\fullcommunication}(\genericString'|\genericString)}{\probabilityMeasure^{\imaginary}_{\genericFunction}(\genericString'|\genericString)}\right) 
\end{align}
\end{subequations}
where \eqref{eq:jointimgeqaulbeforehitting} is because the imaginary play is the same with the joint policy for \(t = 0, \ldots, \tau\).

We have 
\begin{subequations}
\begin{align}
\kl(\gamePathDist^{\fullcommunication} || \gamePathDist^{\imaginary}_{\genericFunction}) 
&= \sum_{\genericString = \gameState_{0}\gameAction_{0}\gameState_{1}\gameAction_{1}\ldots \gameState_{\tau}\gameAction_{\tau} \in W_{loss}} \probabilityMeasure^{\fullcommunication}(\genericString) \sum_{\genericString'= \gameState_{\tau+1}\gameAction_{\tau+1}\gameState_{\tau+2}\gameAction_{\tau+2}\ldots \in (\gameStateSpace \times \gameActionSpace)^{\omega}} \probabilityMeasure^{\fullcommunication}(\genericString' | \genericString) \log\left(\frac{\probabilityMeasure^{\fullcommunication}(\genericString'|\genericString)}{\prod_{i=1}^{\numAgents}\probabilityMeasure^{\fullcommunication}(\mdpState^{i}_{\tau+1}\mdpAction^{i}_{\tau+1}\mdpState^{i}_{\tau+2}\mdpAction^{i}_{\tau+2}\ldots|\genericString)}\right)
\\
&\leq \sum_{\genericString = \gameState_{0}\gameAction_{0}\gameState_{1}\gameAction_{1}\ldots \gameState_{\tau}\gameAction_{\tau} \in W_{loss}} \probabilityMeasure^{\fullcommunication}(\genericString) \sum_{\genericString'= \gameState_{\tau+1}\gameAction_{\tau+1}\gameState_{\tau+2}\gameAction_{\tau+2}\ldots \in (\gameStateSpace \times \gameActionSpace)^{\omega}} \probabilityMeasure^{\fullcommunication}(\genericString' | \genericString) \log\left(\frac{\probabilityMeasure^{\fullcommunication}(\genericString'|\genericString)}{\prod_{i=1}^{\numAgents}\probabilityMeasure^{\fullcommunication}(\mdpState^{i}_{\tau+1}\mdpAction^{i}_{\tau+1}\mdpState^{i}_{\tau+2}\mdpAction^{i}_{\tau+2}\ldots|\genericString)}\right) \nonumber
\\
&\quad  + \sum_{\genericString = \gameState_{0}\gameAction_{0}\gameState_{1}\gameAction_{1}\ldots \gameState_{\tau}\gameAction_{\tau} \in W_{\neg loss}} \probabilityMeasure^{\fullcommunication}(\genericString) \sum_{\genericString'= \gameState_{\tau+1}\gameAction_{\tau+1}\gameState_{\tau+2}\gameAction_{\tau+2}\ldots \in (\gameStateSpace \times \gameActionSpace)^{\omega}} \probabilityMeasure^{\fullcommunication}(\genericString' | \genericString) \log\left(\frac{\probabilityMeasure^{\fullcommunication}(\genericString'|\genericString)}{\prod_{i=1}^{\numAgents}\probabilityMeasure^{\fullcommunication}(\mdpState^{i}_{\tau+1}\mdpAction^{i}_{\tau+1}\mdpState^{i}_{\tau+2}\mdpAction^{i}_{\tau+2}\ldots|\genericString)}\right) \nonumber
\\
&\quad  + \sum_{\genericString = \gameState_{0}\gameAction_{0}\gameState_{1}\gameAction_{1}\ldots \gameState_{\tau}\gameAction_{\tau} \in (W_{ loss} \cup W_{\neg loss})} \probabilityMeasure^{\fullcommunication}(\genericString)  \log\left(\frac{\probabilityMeasure^{\fullcommunication}(\genericString)}{\prod_{i=1}^{\numAgents}\probabilityMeasure^{\fullcommunication}(\mdpState^{i}_{0}\mdpAction^{i}_{0}\mdpState^{i}_{1}\mdpAction^{i}_{1}\ldots \mdpState^{i}_{\tau}\mdpAction^{i}_{\tau} )}\right) \label{ineq:klincreased}
\end{align}
\end{subequations}
since the additional terms in \eqref{ineq:klincreased} are KL divergences, which are always nonnegative.

By the definition of conditional entropy,
\begin{subequations}
\begin{align}
    \kl(\gamePathDist^{\fullcommunication} || \gamePathDist^{\imaginary}_{\genericFunction}) &\leq \left[ \sum_{i=1}^{\numAgents} \entropy(\mdpStateRandomVar^{i}_{\tau+1}\mdpActionRandomVar^{i}_{\tau+1 }\mdpStateRandomVar^{i}_{\tau+2}\mdpActionRandomVar^{i}_{\tau+2} \ldots| \gameStateRandomVar_{0}\gameActionRandomVar_{0}\ldots\gameStateRandomVar_{\tau}\gameActionRandomVar_{\tau}) +\entropy(\mdpStateRandomVar^{i}_{0}\mdpActionRandomVar^{i}_{0 } \ldots \mdpActionRandomVar^{i}_{\tau} \mdpActionRandomVar^{i}_{\tau}|\gameStateRandomVar_{0}) \right] \nonumber
    \\
    &\quad  -  \entropy(\gameStateRandomVar_{\tau+1}\gameActionRandomVar_{\tau+1 }\gameStateRandomVar_{\tau+2}\gameActionRandomVar_{\tau+2} \ldots| \gameStateRandomVar_{0}\gameActionRandomVar_{0}\ldots\gameStateRandomVar_{\tau}\gameActionRandomVar_{\tau}) - \entropy(\gameStateRandomVar_{0}\gameActionRandomVar_{0}\ldots\gameStateRandomVar_{\tau}\gameActionRandomVar_{\tau}|\gameStateRandomVar_{0})
    \\
    &\leq \left[ \sum_{i=1}^{\numAgents} \entropy(\mdpStateRandomVar^{i}_{\tau+1}\mdpActionRandomVar^{i}_{\tau+1 }\mdpStateRandomVar^{i}_{\tau+2}\mdpActionRandomVar^{i}_{\tau+2} \ldots| \gameStateRandomVar_{0}\mdpActionRandomVar^{i}_{0}\ldots\mdpStateRandomVar^{i}_{\tau}\mdpActionRandomVar^{i}_{\tau}) +\entropy(\mdpStateRandomVar^{i}_{0}\mdpActionRandomVar^{i}_{0 } \ldots \mdpActionRandomVar^{i}_{\tau} \mdpActionRandomVar^{i}_{\tau}|\gameStateRandomVar_{0}) \right] \nonumber \label{conditioningreducesentropy2}
    \\
    &\quad  -  \entropy(\gameStateRandomVar_{\tau+1}\gameActionRandomVar_{\tau+1 }\gameStateRandomVar_{\tau+2}\gameActionRandomVar_{\tau+2} \ldots| \gameStateRandomVar_{0}\gameActionRandomVar_{0}\ldots\gameStateRandomVar_{\tau}\gameActionRandomVar_{\tau}) - \entropy(\gameStateRandomVar_{0}\gameActionRandomVar_{0}\ldots\gameStateRandomVar_{\tau}\gameActionRandomVar_{\tau}|\gameStateRandomVar_{0})
\end{align}
\end{subequations} where \eqref{conditioningreducesentropy2} is because conditioning reduces entropy. Finally, we have
\begin{subequations}
\begin{align}
    \kl(\gamePathDist^{\fullcommunication} || \gamePathDist^{\imaginary}_{\genericFunction}) &\leq   \left[ \sum_{i=1}^{\numAgents} \entropy(\mdpStateRandomVar^{i}_{\tau+1}\mdpActionRandomVar^{i}_{\tau+1 }\mdpStateRandomVar^{i}_{\tau+2}\mdpActionRandomVar^{i}_{\tau+2} \ldots| \gameStateRandomVar_{0}\mdpActionRandomVar^{i}_{0}\ldots\mdpStateRandomVar^{i}_{\tau}\mdpActionRandomVar^{i}_{\tau}) +\entropy(\mdpStateRandomVar^{i}_{0}\mdpActionRandomVar^{i}_{0 } \ldots \mdpActionRandomVar^{i}_{\tau} \mdpActionRandomVar^{i}_{\tau}|\gameStateRandomVar_{0}) \right] \nonumber
    \\
    &\quad  -  \entropy(\gameStateRandomVar_{\tau+1}\gameActionRandomVar_{\tau+1 }\gameStateRandomVar_{\tau+2}\gameActionRandomVar_{\tau+2} \ldots| \gameStateRandomVar_{0}\gameActionRandomVar_{0}\ldots\gameStateRandomVar_{\tau}\gameActionRandomVar_{\tau}) - \entropy(\gameStateRandomVar_{0}\gameActionRandomVar_{0}\ldots\gameStateRandomVar_{\tau}\gameActionRandomVar_{\tau}|\gameStateRandomVar_{0}) 
    \\
    &= \left[ \sum_{i=1}^{N} \entropy(\mdpStateRandomVar^{i}_{0}\mdpActionRandomVar^{i}_{0}|\gameStateRandomVar_{0}) \right]  - \entropy(\gameStateRandomVar_{0}\gameActionRandomVar_{0}|\gameStateRandomVar_{0}) + \sum_{t=0}^{\infty} \left[ \left[ \sum_{i=1}^{N} \entropy(\mdpStateRandomVar^{i}_{t}\mdpActionRandomVar^{i}_{t}|\gameStateRandomVar_{0}\mdpActionRandomVar^{i}_{0}  \ldots \mdpStateRandomVar^{i}_{t-1}\mdpActionRandomVar^{i}_{t-1}) \right]  - \entropy(\gameStateRandomVar_{t}\gameActionRandomVar_{t}|\gameStateRandomVar_{0}\gameActionRandomVar_{0} \ldots \gameStateRandomVar_{t-1}\gameActionRandomVar_{t-1} ) \right] \label{chainruleofentropy2}
    \\
    &=\totalCorrelation_{\jointPolicy}
    \\
    &=\kl(\gamePathDist^{\fullcommunication} || \gamePathDist^{\imaginary}_{0})
\end{align}
\end{subequations}
where  \eqref{chainruleofentropy2} is due to the chain rule of entropy.

\end{proof}

\newpage
\begin{proof}[Proof of Theorem \ref{theorem:imaginarystonger}]
 Let \(R\) be the set of paths that reach \(\targetSet\). A path \(\gameState_{0}\gameAction_{0}\gameState_{1}\gameAction_{1}\ldots \in R\) if and only if there exists \(t \geq 0\) such that  \(\gameState_{t} \in R\). Also let \(R'\) be an arbitrary set of paths.
\begin{subequations}
\begin{align}
    \gameValue^{\fullcommunication} - \gameValue^{\imaginary}&=  \sum_{\gameState_{0}\gameAction_{0}\gameState_{1}\gameAction_{1}\ldots \in R} \probabilityMeasure^{\fullcommunication}(\gameState_{0}\gameAction_{0}\gameState_{1}\gameAction_{1}\ldots ) - \probabilityMeasure^{\imaginary}_{\genericFunction}(\gameState_{0}\gameAction_{0}\gameState_{1}\gameAction_{1}\ldots )
    \\
    &\leq \left | \sum_{\gameState_{0}\gameAction_{0}\gameState_{1}\gameAction_{1}\ldots \in R} \probabilityMeasure^{\fullcommunication}(\gameState_{0}\gameAction_{0}\gameState_{1}\gameAction_{1}\ldots ) - \probabilityMeasure^{\imaginary}_{\genericFunction}(\gameState_{0}\gameAction_{0}\gameState_{1}\gameAction_{1}\ldots )\right |
    \\
    &\leq \sup_{R'} \left |\sum_{\gameState_{0}\gameAction_{0}\gameState_{1}\gameAction_{1}\ldots \in R'} \probabilityMeasure^{\fullcommunication}(\gameState_{0}\gameAction_{0}\gameState_{1}\gameAction_{1}\ldots ) - \probabilityMeasure^{\imaginary}_{\genericFunction}(\gameState_{0}\gameAction_{0}\gameState_{1}\gameAction_{1}\ldots )\right |
    \\
    &\leq \sqrt{1-\exp(-\kl(\gamePathDist^{\fullcommunication} || \gamePathDist^{\imaginary}_{\genericFunction}))} \label{bretagnollehuber}
        \\
    &\leq \sqrt{1-\exp(-\totalCorrelation_{\jointPolicy})} \label{klislowerthanC}
\end{align}
\end{subequations}
where \eqref{bretagnollehuber} is due to Bretagnolle-Huber inequality~\cite{bretagnolle1979estimation} and \eqref{klislowerthanC} is due to Lemma \ref{lemma:imaginarystronger}. Rearranging the terms of \eqref{klislowerthanC} yields to the desired result.
\end{proof}

\newpage
\begin{proof}[Proof of Theorem \ref{theorem:imaginary}]
We first show that \[\gameValue^{\imaginary} \geq \gameValue^{\fullcommunication}  (1-\probabilityFailureForever)^{\frac{\expectedLength^{\fullcommunication}}{\gameValue^{\fullcommunication}}}.\] 

Remember that  \(\len(\gamePath = \gameState_0 \gameAction_0\ldots) = \min\lbrace t+1 | \gameState_{t} \in \targetSet \cup \deadSetPrime \rbrace \) and \(\expectedLength^{\fullcommunication} = \expectation[\len(\gamePath) | \gamePath \sim \gamePathDist^{\fullcommunication}]\). Let \(Success\) be an event that the path satisfies the reach-avoid specification. Define \(\expectedLength^{\fullcommunication}_{+} = \expectation[\len(\gamePath) | \gamePath \sim \gamePathDist^{\fullcommunication}, Success]\) and \(\expectedLength^{\fullcommunication}_{-} = \expectation[\len(\gamePath) | \gamePath \sim \gamePathDist^{\fullcommunication}, \neg Success]\). Note that \[\expectedLength^{\fullcommunication} = \expectedLength^{\fullcommunication}_{+} \gameValue^{\fullcommunication} + \expectedLength^{\fullcommunication}_{-} (1 - \gameValue^{full}) \geq \expectedLength^{\fullcommunication}_{+} \gameValue^{\fullcommunication}.  \] Also note that \[\gameValue^{\fullcommunication} = \sum_{t=0}^{\infty} \quad  \sum_{\substack{\gameState_{0}\gameAction_{0}\gameState_{1}\gameAction_{1}\ldots \gameState_{t} \in (\gameStateSpace \times \gameActionSpace)^{t} \times \gameStateSpace\\\gameState_{0},\ldots, \gameState_{t-1} \not \in \targetSet \cup \deadSetPrime\\ \gameState_{t} \in \targetSet }} \probabilityMeasure^{\fullcommunication}(\gameState_{0}\gameAction_{0}\gameState_{1}\gameAction_{1}\ldots\gameState_{t}), \] and \[\expectedLength^{\fullcommunication}_{+} = \frac{1}{{\gameValue^{\fullcommunication}}}\left(\sum_{t=0}^{\infty} (t+1) \sum_{\substack{\gameState_{0}\gameAction_{0}\gameState_{1}\gameAction_{1}\ldots \gameState_{t} \in (\gameStateSpace \times \gameActionSpace)^{t} \times \gameStateSpace\\\gameState_{0},\ldots, \gameState_{t-1} \not \in \targetSet \cup \deadSetPrime\\ \gameState_{t} \in \targetSet }} \probabilityMeasure^{\fullcommunication}(\gameState_{0}\gameAction_{0}\gameState_{1}\gameAction_{1}\ldots\gameState_{t})  \right).\]

Let \(L\) be the event that the agents experience a communication loss before they reach a state in \(\targetSet \cup \deadSetPrime\). Also let \(\probabilityMeasure^{\imaginary}\) denote the probability measure over the (finite or infinite) state-action process under the imaginary play (under Algorithm \ref{alg:imaginaryplay}) where \(\Pr(t_{loss} = t) = (1-\probabilityFailureForever)^{t}\probabilityFailureForever\). Since  \(L\) and  \(\neg L\) are disjoint events, \[
    \gameValue^{\imaginary} = {\Pr}^{\probabilityMeasure^{\imaginary}}(Success \; \& \; L) + {\Pr}^{\probabilityMeasure^{\imaginary}}(Success \; \& \; \neg L) \geq {\Pr}^{\probabilityMeasure^{\imaginary}}(Success \; \& \; \neg L).
\]

We have 
\begin{subequations}
\begin{align}
{\Pr}^{ \probabilityMeasure^{\imaginary}}(Success \; \& \; \neg L) &= \sum_{t=0}^{\infty} \quad  \sum_{\substack{\gameState_{0}\gameAction_{0}\gameState_{1}\gameAction_{1}\ldots \gameState_{t} \in (\gameStateSpace \times \gameActionSpace)^{t} \times \gameStateSpace \\\gameState_{0},\ldots, \gameState_{t-1} \not \in \targetSet \cup \deadSetPrime\\ \gameState_{t} \in \targetSet }} \probabilityMeasure^{\imaginary}(\gameState_{0}\gameAction_{0}\gameState_{1}\gameAction_{1}\ldots\gameState_{t} | \oneStepCommAvailibility_{0}=1, \ldots, \oneStepCommAvailibility_{t} = 1) \Pr(\oneStepCommAvailibility_{0}=1, \ldots, \oneStepCommAvailibility_{t} = 1)
\\
&= \sum_{t=0}^{\infty} \quad \sum_{\substack{\gameState_{0}\gameAction_{0}\gameState_{1}\gameAction_{1}\ldots \gameState_{t} \in (\gameStateSpace \times \gameActionSpace)^{t} \times \gameStateSpace \\\gameState_{0},\ldots, \gameState_{t-1} \not \in \targetSet \cup \deadSetPrime\\ \gameState_{t} \in \targetSet }} \probabilityMeasure^{\fullcommunication}(\gameState_{0}\gameAction_{0}\gameState_{1}\gameAction_{1}\ldots\gameState_{t}) (1-\probabilityFailureForever)^{t+1}
\end{align}
\end{subequations} since \(\probabilityMeasure^{\imaginary} = \probabilityMeasure^{\fullcommunication}\) if there is not a communication loss.

Let \(\genericFunctionAlt(t) = (1-\probabilityFailureForever)^{t+1}\) for \(t\geq 0\). We note that \(\genericFunctionAlt(t)\) is a convex function of \(t\). Also, let \(\genericDistribution\) be a probability distribution over \(0, 1\ldots\) such that \[\genericDistribution(t) = \frac{1}{{\gameValue^{\fullcommunication}}}\left( \sum_{\substack{\gameState_{0}\gameAction_{0}\gameState_{1}\gameAction_{1}\ldots \gameState_{t} \in (\gameStateSpace \times \gameActionSpace)^{t} \times \gameStateSpace\\\gameState_{0},\ldots, \gameState_{t-1} \not \in \targetSet \cup \deadSetPrime\\ \gameState_{t} \in \targetSet }} \probabilityMeasure^{\fullcommunication}(\gameState_{0}\gameAction_{0}\gameState_{1}\gameAction_{1}\ldots\gameState_{t})\right).\] Note that \(\expectation_{t \sim \genericDistribution}[t] = \expectedLength^{\fullcommunication}_{+} -1\)

We have \[{\Pr}^{ \probabilityMeasure^{\imaginary}}(Success \; \& \; \neg L) = \sum_{t=0}^{\infty} \quad  \sum_{\substack{\gameState_{0}\gameAction_{0}\gameState_{1}\gameAction_{1}\ldots \gameState_{t} \in (\gameStateSpace \times \gameActionSpace)^{t} \times \gameStateSpace \\\gameState_{0},\ldots, \gameState_{t-1} \not \in \targetSet \cup \deadSetPrime\\ \gameState_{t} \in \targetSet }} \probabilityMeasure^{\fullcommunication}(\gameState_{0}\gameAction_{0}\gameState_{1}\gameAction_{1}\ldots\gameState_{t}) (1-\probabilityFailureForever)^{t+1} =  \gameValue^{\fullcommunication} \expectation_{t \sim Q} \left[ \genericFunctionAlt(t) \right].\] Since \(\genericFunctionAlt(t)\) is a convex function of \(t\), we get  \[{\Pr}^{ \probabilityMeasure^{\imaginary}}(Success \; \& \; \neg L)  =  \gameValue^{\fullcommunication} \expectation_{t \sim Q} \left[ \genericFunctionAlt(t) \right] \geq \gameValue^{\fullcommunication}  \genericFunctionAlt(\expectation_{t \sim Q} \left[ t \right]) = \gameValue^{\fullcommunication} (1-\probabilityFailureForever)^{\expectation_{t \sim Q} \left[ t \right] + 1}  =  \gameValue^{\fullcommunication}(1-p)^{\expectedLength^{\fullcommunication}_{+}} .\]  by Jensen's inequality~\cite{boyd2004convex}.

Finally, using \(\gameValue^{\imaginary} \geq {\Pr}^{ \probabilityMeasure^{\imaginary}}(Success \; \& \; \neg L)\) and \(\expectedLength^{\fullcommunication} \geq \expectedLength^{\fullcommunication}_{+} \gameValue^{\fullcommunication}\), we get \[\gameValue^{\imaginary}  \geq \gameValue^{\fullcommunication}(1-\probabilityFailureForever)^{\frac{\expectedLength^{\fullcommunication}}{\gameValue^{\fullcommunication}}}.\]

The proof for \(\gameValue^{\imaginary} \geq \gameValue^{\fullcommunication} - \sqrt{1-\exp(-\totalCorrelation_{\jointPolicy})}\) follows the same structure with the proof of Theorem \ref{theorem:imaginarystonger} and have slight differences. We give the full proof for completeness.
 Let \(R\) be the set of paths that reach \(\targetSet\). A path \(\gameState_{0}\gameAction_{0}\gameState_{1}\gameAction_{1}\ldots \in R\) if and only if there exists \(t \geq 0\) such that  \(\gameState_{t} \in R\). Also let \(R'\) be an arbitrary set of paths. Define \(\gamePathDist^{\imaginary} = \expectation_{t_{loss}}[ \gamePathDist^{\imaginary}_{t_{loss}}]\).
\begin{subequations}
\begin{align}
    \gameValue^{\fullcommunication} - \gameValue^{\imaginary}&=  \sum_{\gameState_{0}\gameAction_{0}\gameState_{1}\gameAction_{1}\ldots \in R} \probabilityMeasure^{\fullcommunication}(\gameState_{0}\gameAction_{0}\gameState_{1}\gameAction_{1}\ldots ) - \probabilityMeasure^{\imaginary}(\gameState_{0}\gameAction_{0}\gameState_{1}\gameAction_{1}\ldots )
    \\
    &\leq \left |\sum_{\gameState_{0}\gameAction_{0}\gameState_{1}\gameAction_{1}\ldots \in R} \probabilityMeasure^{\fullcommunication}(\gameState_{0}\gameAction_{0}\gameState_{1}\gameAction_{1}\ldots ) - \probabilityMeasure^{\imaginary}(\gameState_{0}\gameAction_{0}\gameState_{1}\gameAction_{1}\ldots )\right |
    \\
    &\leq \sup_{R'} \left |\sum_{\gameState_{0}\gameAction_{0}\gameState_{1}\gameAction_{1}\ldots \in R'} \probabilityMeasure^{\fullcommunication}(\gameState_{0}\gameAction_{0}\gameState_{1}\gameAction_{1}\ldots ) - \probabilityMeasure^{\imaginary}(\gameState_{0}\gameAction_{0}\gameState_{1}\gameAction_{1}\ldots )\right |
    \\
    &\leq \sqrt{1-\exp(-\kl(\gamePathDist^{\fullcommunication} || \gamePathDist^{\imaginary}))} \label{bretagnollehuber2}
        \\
    &\leq \sqrt{1-\exp(-\expectation_{t_{loss}}[\kl(\gamePathDist^{\fullcommunication} || \gamePathDist^{\imaginary}_{t_{loss}}))])} \label{convexityofkl2}
        \\
    &\leq \sqrt{1-\exp(-\expectation_{t_{loss}}[\totalCorrelation_{\jointPolicy}])} \label{klislowerthanC2}
            \\
    &= \sqrt{1-\exp(-\totalCorrelation_{\jointPolicy})}\label{valuebound2}
\end{align}
\end{subequations}
where \eqref{bretagnollehuber2} is due to Bretagnolle-Huber inequality~\cite{bretagnolle1979estimation}, \eqref{convexityofkl2} is due to the convexity of the KL divergence, and \eqref{klislowerthanC2} is due to Lemma \ref{lemma:imaginary}. Rearranging the terms of \eqref{valuebound2} yields to the desired result.
\end{proof}

\newpage
\begin{proof}[Proof of Theorem \ref{theorem:intermittentstructured}]
The proof of Theorem \ref{theorem:intermittentstructured} follows the same structure with the proof of Theorem \ref{theorem:imaginary} and have slight differences. We give the full proof for completeness.

We first show that \[\gameValue^{\intermittent} \geq \gameValue^{\fullcommunication}  (1-\probabilityFailureOneStep)^{\frac{\expectedLength^{\fullcommunication}}{\gameValue^{\fullcommunication}}}.\] 

Remember that  \(\len(\gamePath = \gameState_0 \gameAction_0\ldots) = \min\lbrace t+1 | \gameState_{t} \in \targetSet \cup \deadSetPrime \rbrace \) and \(\expectedLength^{\fullcommunication} = \expectation[\len(\gamePath) | \gamePath \sim \gamePathDist^{\fullcommunication}]\). Let \(Success\) be an event that the path satisfies the reach-avoid specification. Define \(\expectedLength^{\fullcommunication}_{+} = \expectation[\len(\gamePath) | \gamePath \sim \gamePathDist^{\fullcommunication}, Success]\) and \(\expectedLength^{\fullcommunication}_{-} = \expectation[\len(\gamePath) | \gamePath \sim \gamePathDist^{\fullcommunication}, \neg Success]\). Note that \[\expectedLength^{\fullcommunication} = \expectedLength^{\fullcommunication}_{+} \gameValue^{\fullcommunication} + \expectedLength^{\fullcommunication}_{-} (1 - \gameValue^{full}) \geq \expectedLength^{\fullcommunication}_{+} \gameValue^{\fullcommunication}.  \] Also note that \[\gameValue^{\fullcommunication} = \sum_{t=0}^{\infty}  \quad  \sum_{\substack{\gameState_{0}\gameAction_{0}\gameState_{1}\gameAction_{1}\ldots \gameState_{t} \in (\gameStateSpace \times \gameActionSpace)^{t} \times \gameStateSpace\\\gameState_{0},\ldots, \gameState_{t-1} \not \in \targetSet \cup \deadSetPrime\\ \gameState_{t} \in \targetSet }} \probabilityMeasure^{\fullcommunication}(\gameState_{0}\gameAction_{0}\gameState_{1}\gameAction_{1}\ldots\gameState_{t}), \] and \[\expectedLength^{\fullcommunication}_{+} = \frac{1}{{\gameValue^{\fullcommunication}}}\left(\sum_{t=0}^{\infty} (t+1) \sum_{\substack{\gameState_{0}\gameAction_{0}\gameState_{1}\gameAction_{1}\ldots \gameState_{t} \in (\gameStateSpace \times \gameActionSpace)^{t} \times \gameStateSpace\\\gameState_{0},\ldots, \gameState_{t-1} \not \in \targetSet \cup \deadSetPrime\\ \gameState_{t} \in \targetSet }} \probabilityMeasure^{\fullcommunication}(\gameState_{0}\gameAction_{0}\gameState_{1}\gameAction_{1}\ldots\gameState_{t})  \right).\]

Let \(L\) be the event that the agents experience a communication loss before they reach a state in \(\targetSet \cup \deadSet\). Also let \(\probabilityMeasure^{\intermittent}\) denote the probability measure over the (finite or infinite) state-action process under intermittent communication (under Algorithm \ref{alg:intermittent}) where \(\sequenceCommAvailibility = \lambda_0, \lambda_1, \ldots\) is a random sequence of binary values such that every \(\oneStepCommAvailibility_{t}\) is independently sampled from a Bernoulli random variable with parameter \(1-\probabilityFailureOneStep\).  Since  \(L\) and  \(\neg L\) are disjoint events, \[
    \gameValue^{\intermittent} = {\Pr}^{\probabilityMeasure^{\intermittent}}(Success \; \& \; L) + {\Pr}^{\probabilityMeasure^{\intermittent}}(Success \; \& \; \neg L) \geq {\Pr}^{\probabilityMeasure^{\intermittent}}(Success \; \& \; \neg L).
\]

We have 
\begin{subequations}
\begin{align}
{\Pr}^{\probabilityMeasure^{\intermittent}}(Success \; \& \; \neg L) &= \sum_{t=0}^{\infty} \quad \sum_{\substack{\gameState_{0}\gameAction_{0}\gameState_{1}\gameAction_{1}\ldots \gameState_{t} \in (\gameStateSpace \times \gameActionSpace)^{t} \times \gameStateSpace \\\gameState_{0},\ldots, \gameState_{t-1} \not \in \targetSet \cup \deadSetPrime\\ \gameState_{t} \in \targetSet }} \probabilityMeasure^{\intermittent}(\gameState_{0}\gameAction_{0}\gameState_{1}\gameAction_{1}\ldots\gameState_{t} | \oneStepCommAvailibility_{0}=1, \ldots, \oneStepCommAvailibility_{t} = 1) \Pr(\oneStepCommAvailibility_{0}=1, \ldots, \oneStepCommAvailibility_{t} = 1)
\\
&= \sum_{t=0}^{\infty} \quad \sum_{\substack{\gameState_{0}\gameAction_{0}\gameState_{1}\gameAction_{1}\ldots \gameState_{t} \in (\gameStateSpace \times \gameActionSpace)^{t} \times \gameStateSpace \\\gameState_{0},\ldots, \gameState_{t-1} \not \in \targetSet \cup \deadSetPrime\\ \gameState_{t} \in \targetSet }} \probabilityMeasure^{\fullcommunication}(\gameState_{0}\gameAction_{0}\gameState_{1}\gameAction_{1}\ldots\gameState_{t}) (1-\probabilityFailureOneStep)^{t+1}
\end{align}
\end{subequations} since \(\probabilityMeasure^{\intermittent} = \probabilityMeasure^{\fullcommunication}\) if there is not a communication loss.

Let \(\genericFunctionAlt(t) = (1-\probabilityFailureOneStep)^{t+1}\) for \(t\geq 0\). We note that \(\genericFunctionAlt(t)\) is a convex function of \(t\). Also, let \(\genericDistribution\) be a probability distribution over \(0, 1\ldots\) such that \[\genericDistribution(t) = \frac{1}{{\gameValue^{\fullcommunication}}}\left( \sum_{\substack{\gameState_{0}\gameAction_{0}\gameState_{1}\gameAction_{1}\ldots \gameState_{t} \in (\gameStateSpace \times \gameActionSpace)^{t} \times \gameStateSpace\\\gameState_{0},\ldots, \gameState_{t-1} \not \in \targetSet \cup \deadSetPrime\\ \gameState_{t} \in \targetSet }} \probabilityMeasure^{\fullcommunication}(\gameState_{0}\gameAction_{0}\gameState_{1}\gameAction_{1}\ldots\gameState_{t})\right).\] Note that \(\expectation_{t \sim \genericDistribution}[t] = \expectedLength^{\fullcommunication}_{+} -1\)

We have \[{\Pr}^{\probabilityMeasure^{\intermittent}}(Success \; \& \; \neg L) = \sum_{t=0}^{\infty} \quad \sum_{\substack{\gameState_{0}\gameAction_{0}\gameState_{1}\gameAction_{1}\ldots \gameState_{t} \in (\gameStateSpace \times \gameActionSpace)^{t} \times \gameStateSpace \\\gameState_{0},\ldots, \gameState_{t-1} \not \in \targetSet \cup \deadSetPrime\\ \gameState_{t} \in \targetSet }} \probabilityMeasure^{\fullcommunication}(\gameState_{0}\gameAction_{0}\gameState_{1}\gameAction_{1}\ldots\gameState_{t}) (1-\probabilityFailureOneStep)^{t+1} =  \gameValue^{\fullcommunication} \expectation_{t \sim Q} \left[ \genericFunctionAlt(t) \right].\] Since \(\genericFunctionAlt(t)\) is a convex function of \(t\), we get  \[{\Pr}^{\probabilityMeasure^{\intermittent}}(Success \; \& \; \neg L)  =  \gameValue^{\fullcommunication} \expectation_{t \sim Q} \left[ \genericFunctionAlt(t) \right] \geq \gameValue^{\fullcommunication}  \genericFunctionAlt(\expectation_{t \sim Q} \left[ t \right]) = \gameValue^{\fullcommunication} (1-\probabilityFailureOneStep)^{\expectation_{t \sim Q} \left[ t \right] + 1}  =  \gameValue^{\fullcommunication}(1-p)^{\expectedLength^{\fullcommunication}_{+}} .\]  by Jensen's inequality~\cite{boyd2004convex}.

Finally, using \(\gameValue^{\intermittent} \geq {\Pr}^{\probabilityMeasure^{\intermittent}}(Success \; \& \; \neg L)\) and \(\expectedLength^{\fullcommunication} \geq \expectedLength^{\fullcommunication}_{+} \gameValue^{\fullcommunication}\), we get \[\gameValue^{\intermittent}  \geq \gameValue^{\fullcommunication}(1-\probabilityFailureOneStep)^{\frac{\expectedLength^{\fullcommunication}}{\gameValue^{\fullcommunication}}}.\]

We now show \(\gameValue^{\intermittent} \geq \gameValue^{\fullcommunication} - \sqrt{1-\exp(-\probabilityFailureOneStep\totalCorrelation_{\jointPolicy})}\). Let \(R\) be the set of paths that reach \(\targetSet\). A path \(\gameState_{0}\gameAction_{0}\gameState_{1}\gameAction_{1}\ldots \in R\) if and only if there exists \(t \geq 0\) such that  \(\gameState_{t} \in R\). Also let \(R'\) be an arbitrary set of paths. 
\begin{subequations}
\begin{align}
    \gameValue^{\fullcommunication} - \gameValue^{\intermittent} &=  \sum_{\gameState_{0}\gameAction_{0}\gameState_{1}\gameAction_{1}\ldots \in R} \probabilityMeasure^{\fullcommunication}(\gameState_{0}\gameAction_{0}\gameState_{1}\gameAction_{1}\ldots ) - \probabilityMeasure^{\intermittent}(\gameState_{0}\gameAction_{0}\gameState_{1}\gameAction_{1}\ldots )
    \\
    &\leq \left| \sum_{\gameState_{0}\gameAction_{0}\gameState_{1}\gameAction_{1}\ldots \in R} \probabilityMeasure^{\fullcommunication}(\gameState_{0}\gameAction_{0}\gameState_{1}\gameAction_{1}\ldots ) - \probabilityMeasure^{\intermittent}(\gameState_{0}\gameAction_{0}\gameState_{1}\gameAction_{1}\ldots )\right |
    \\
    &\leq \sup_{R'} \left |\sum_{\gameState_{0}\gameAction_{0}\gameState_{1}\gameAction_{1}\ldots \in R'} \probabilityMeasure^{\fullcommunication}(\gameState_{0}\gameAction_{0}\gameState_{1}\gameAction_{1}\ldots ) - \probabilityMeasure^{\intermittent}(\gameState_{0}\gameAction_{0}\gameState_{1}\gameAction_{1}\ldots )\right |
        \\
    &\leq \sqrt{1-\exp(-\kl(\gamePathDist^{\fullcommunication} || \gamePathDist^{\intermittent}))} \label{bretagnollehuber3}
        \\
    &\leq  \sqrt{1-\exp(-\probabilityFailureOneStep\totalCorrelation_{\jointPolicy})}\label{klislowerthanqC}
\end{align}
\end{subequations}
where \eqref{bretagnollehuber3} is due to Bretagnolle-Huber inequality~\cite{bretagnolle1979estimation}, and \eqref{klislowerthanqC} is due to Lemma \ref{lemma:intermittent}. Rearranging the terms of \eqref{klislowerthanqC} yields to the desired result.
\end{proof}

\newpage
\begin{proof}[Proof of Proposition \ref{proposition:stationaryissufficient}]
We first show that for every policy \(\jointPolicy'\) there exists a stationary policy \(\jointPolicy^{st}\) such that the value of \eqref{optproblemgeneric} in the main body of the paper for \(\jointPolicy^{st}\) is lower than equal to the value for \(\jointPolicy'\). 

Since every \(\gameState \in \gameStateSpace \) has a finite occupancy measure and \( \gameProcessEndState \) is absorbing, there exists a stationary policy \(\jointPolicy^{st}\) such that the occupancy measures of \(\jointPolicy'\) and \(\jointPolicy^{st}\) are equal for all \(\gameState \in \gameStateSpace, \gameAction \in \gameActionSpace \cup \lbrace \epsilonTransition \rbrace\)~\cite{altman1999constrained}. 

We note that \[\gameValue^{\fullcommunication} = \sum_{\substack{\gameState \in \gameStateSpace \setminus (\targetSet \cup \deadSetPrime) \\ \gameAction \in \gameActionSpace \\ \gameStateAlt \in \targetSet}} \occupancyVar_{\gameState, \gameAction} \gameTransition(\gameState, \gameAction, \gameStateAlt)\] and \[\expectedLength^{\fullcommunication} = \sum_{\substack{\gameState \in \gameStateSpace  \\ \gameAction \in \gameActionSpace}} \occupancyVar_{\gameState, \gameAction}.\] 
Hence the value of \(\gameValue^{\fullcommunication}\) is the same for \(\jointPolicy'\) and \(\jointPolicy^{st}\).
Similarly, the value of \(\expectedLength^{\fullcommunication}\) is the same for \(\jointPolicy'\) and \(\jointPolicy^{st}\).

The entropy \(\entropy(\mdpMixedStateActionProcess^{i})\) of the stationary state-action process\cite{savas2019entropy} \(\mdpMixedStateActionProcess^{i}\) is 
\[
      \sum_{\substack{\mdpState^{i} \in \mdpStateSpace^{i}  \\ \mdpAction^{i} \in \mdpActionSpace^{i} \cup \lbrace \epsilonTransition^{i} \rbrace }}  \occupancyVar_{\mdpState^{i}, \mdpAction^{i}}  \log\left(\frac{\underset{\mdpActionAlt^{i} \in \mdpActionSpace^{i}}{\sum} \occupancyVar_{\mdpState^{i}, \mdpActionAlt^{i}}}{\occupancyVar_{\mdpState^{i}, \mdpAction^{i}}}\right)  + \sum_{\substack{\mdpState^{i} \in \mdpStateSpace^{i} \\ \mdpAction^{i} \in \mdpActionSpace^{i} \cup \lbrace \epsilonTransition^{i} \rbrace }} \occupancyVar_{\mdpState^{i}, \mdpAction^{i}} \sum_{\mdpStateAlt^i \in \mdpStateSpace^{i} \cup \lbrace\mdpProcessEndState^{i}\rbrace} \mdpTransition^{i}(\mdpState^{i}, \mdpAction^{i}, \mdpStateAlt^{i}) \log\left( \frac{1}{\mdpTransition^{i}(\mdpState^{i}, \mdpAction^{i}, \mdpStateAlt^{i})} \right).,
\] which is the same for for both \(\jointPolicy'\) and \(\jointPolicy^{st}\). 

Given a set of policies with the same occupancy measures, the stationary policy achieves the highest entropy~\cite{savas2019entropy}. Consequently, the value of \(\entropy(\gameStateActionProcess) \) for \(\jointPolicy^{st}\) is greater than or equal to the value for\(\jointPolicy'\). 

Since \(\jointPolicy^{st}\) achieves a higher value \(\entropy(\gameStateActionProcess) \) and the other terms have equal values for both \(\jointPolicy'\) and \(\jointPolicy^{st}\), the value of \eqref{optproblemgeneric} in the main body of the paper for \(\jointPolicy^{st}\) is lower than equal to the value for \(\jointPolicy'\).

Given that the stationary policies suffice, \eqref{optproblemgeneric} in the main body of the paper can be rewritten in terms of the occupancy measures:
\begin{subequations}
\begin{align}
    \max_{\occupancyVar} \quad & \gameValue^{\fullcommunication} - \expectedLengthCoef \expectedLength^{\fullcommunication} - \totalCorrelationCoef \left( \sum_{i=1}^{\numAgents} \entropy(\mdpMixedStateActionProcess^{i}) - \entropy(\gameStateActionProcess) \right) \label{eq:policy_synthesis_objective}  \\
    \textrm{s.t.} \quad & \gameValue^{\fullcommunication} = \sum_{\substack{\gameState \in \gameStateSpace \setminus (\targetSet \cup \deadSetPrime)\\ \gameAction \in \gameActionSpace \\ \gameStateAlt \in \targetSet}} \occupancyVar_{\gameState, \gameAction} \gameTransition(\gameState, \gameAction, \gameStateAlt)
    \\
    &\expectedLength^{\fullcommunication} = \sum_{\substack{\gameState \in \gameStateSpace \\ \gameAction \in \gameActionSpace \cup \lbrace \epsilonTransition \rbrace}} \occupancyVar_{\gameState, \gameAction}
    \\& \entropy(\gameStateActionProcess) =      \sum_{\substack{\gameState \in \gameStateSpace  \\ \gameAction \in \gameActionSpace}} \occupancyVar_{\gameState, \gameAction} \log\left(\frac{\underset{\gameActionAlt \in \gameActionSpace}{\sum} \occupancyVar_{\gameState, \gameActionAlt}}{\occupancyVar_{\gameState, \gameAction}}\right) + \sum_{\substack{\gameState \in \gameStateSpace  \\ \gameAction \in \gameActionSpace}} \occupancyVar_{\gameState, \gameAction}\sum_{\gameStateAlt \in \gameStateSpace}  \gameTransition(\gameState, \gameAction, \gameStateAlt) \log\left(\frac{1}{\gameTransition(\gameState, \gameAction, \gameStateAlt)}  \right) 
    \\
    &\entropy(\mdpMixedStateActionProcess^{i}) = \sum_{\substack{\mdpState^{i} \in \mdpStateSpace^{i}  \\ \mdpAction^{i} \in \mdpActionSpace^{i} \cup \lbrace \epsilonTransition^{i} \rbrace }}  \occupancyVar_{\mdpState^{i}, \mdpAction^{i}}  \log\left(\frac{\underset{\mdpActionAlt^{i} \in \mdpActionSpace^{i}}{\sum} \occupancyVar_{\mdpState^{i}, \mdpActionAlt^{i}}}{\occupancyVar_{\mdpState^{i}, \mdpAction^{i}}}\right)  + \sum_{\substack{\mdpState^{i} \in \mdpStateSpace^{i} \\ \mdpAction^{i} \in \mdpActionSpace^{i} \cup \lbrace \epsilonTransition^{i} \rbrace }} \occupancyVar_{\mdpState^{i}, \mdpAction^{i}} \sum_{\mdpStateAlt^i \in \mdpStateSpace^{i} \cup \lbrace\mdpProcessEndState^{i}\rbrace} \mdpTransition^{i}(\mdpState^{i}, \mdpAction^{i}, \mdpStateAlt^{i}) \log\left( \frac{1}{\mdpTransition^{i}(\mdpState^{i}, \mdpAction^{i}, \mdpStateAlt^{i})} \right).
    \\
    &\sum_{\gameAction \in \gameActionSpace \cup \lbrace \epsilonTransition \rbrace} \occupancyVar_{\gameState, \gameAction} = \sum_{\substack{\gameStateAlt \in \gameStateSpace \\ \gameActionAlt \in \gameActionSpace \cup \lbrace \epsilonTransition \rbrace}} \occupancyVar_{\gameStateAlt, \gameActionAlt}\gameTransition(\gameStateAlt, \gameActionAlt, \gameState) + \mathbbm{1}_{\{\gameInitialState = \gameState\}}, \; \forall \gameState \in \gameStateSpace \\
    & \occupancyVar_{\gameState, \gameAction} \geq 0, \; \forall \gameState \in \gameStateSpace, \gameAction \in \gameActionSpace \cup \lbrace \epsilonTransition \rbrace
    \\
    & \occupancyVar_{\gameProcessEndState, \gameAction} = 0, \; \forall  \gameAction \in \gameActionSpace.
\end{align}
\label{optproblemnumeric}
\end{subequations}

Since the occupancy measures are bounded and closed, i.e.,\(\sum_{\gameAction \in \gameActionSpace} \occupancyVar(\gameState,\gameAction) \leq \constantNumber \) for all \(\gameState \in \gameStateSpace\), the feasible space is compact. Since the feasible space is compact and the objective function is continuous, there exists a solution to \eqref{optproblemnumeric}. Hence there exists a stationary policy that is a solution to \eqref{optproblemgeneric} in the main body of the paper.
\end{proof}
\newpage
\section{Details on the Minimum-Dependency Policy Synthesis Algorithm}

In this section, we give the full form of the optimization problem for policy synthesis and explain the preprocessing of \(\game\) in order to ensure that \(\totalCorrelationUpperBound_{\jointPolicy}\) is well-defined. 

For synthesis purposes, we consider the state-action processes until reaching an state in \(\targetSet \cup \deadSetPrime\) and assume that the states in \(\targetSet \cup \deadSetPrime\) are absorbing. In the actual game, the states in \(\targetSet \cup \deadSetPrime\) are not necessarily absorbing and the state-action processes do not terminate at a finite time. Despite this difference, considering the total correlation of the joint state-action process until reaching an state in \(\targetSet \cup \deadSetPrime\) is sufficient to ensure the bounds given in Section 7 of the paper. 

Let \(\mdpStateActionProcessAlt^{i}\) be the infinite length state-action process of  \(\mdpState^i_0 \mdpAction^i_0 \mdpState^i_1 \mdpAction^i_1\ldots\) of Agent \(i\) and \(\gameStateActionProcessAlt\) be the actual joint state-action process  \(\gameState_0 \gameAction_0 \gameState_1 \gameAction_1 \ldots\) of the team. Let \(\mdpStateActionProcess^{i}\) be the modified state-action process of  \(\mdpState^i_0 \mdpAction^i_0 \mdpState^i_1 \mdpAction^i_1 \ldots \mdpState^i_{\tau^{i}-1} \mdpAction^i_{\tau^{i}-1} \mdpState^{i}_{\tau^{i}} \epsilonTransition^{i} \mdpProcessEndState^{i}\epsilonTransition^{i} \mdpProcessEndState^{i} \ldots \) of Agent \(i\) where \(\tau^{i}\) is the first time-step that Agent \(i\) thinks that the team reached a state in \( \targetSet \cup \deadSetPrime \). Formally, \( (\hat{\mdpState}^{1}_{\tau^{i},i}, \ldots, \hat{\mdpState}^{i-1}_{\tau^{i},i}, \mdpState^{i}_{\tau^{i}}, \hat{\mdpState}^{i+1}_{\tau^{i},i}, \ldots, \hat{\mdpState}^{N}_{\tau^{i},i}) \in \targetSet \cup \deadSetPrime\) and  \( (\hat{\mdpState}^{1}_{t,i}, \ldots, \hat{\mdpState}^{i-1}_{t,i}, \mdpState^{i}_{t}, \hat{\mdpState}^{i+1}_{t,i}, \ldots, \hat{\mdpState}^{N}_{t,i}) \not \in \targetSet \cup \deadSetPrime \) for all \(t < \tau\) in Algorithms \ref{alg:imaginaryplay} and \ref{alg:intermittent}. Let \(\gameStateActionProcess\) be the joint modified state-action process of the team, i.e., \( \mdpStateActionProcess^{1}, \ldots, \mdpStateActionProcess^{\numAgents}\). 

Let \(\gameValue^{loss}\) be the reach-avoid probability under communication loss (the reach-avoid probability under Algorithm \ref{alg:imaginaryplay} or \ref{alg:intermittent}). Define a reward function such that the reward is \(1\) if \(\gameState_{t} \not \in \deadSetPrime\) for all \(t < \tau = \min_{i} \tau^{i}\) and \(\gameState_{\tau} \not \in \deadSetPrime\) for \(\tau = \min_{i} \tau^{i}\), and \(0\) otherwise. Let \(\gameValue^{loss}_{modified}\) denote the expected reward with this reward function under communication loss (the reach-avoid probability under Algorithm \ref{alg:imaginaryplay} or \ref{alg:intermittent}). Also, let \(\gameValue^{full}_{modified}\) denote the expected reward with this reward function under joint policy with full communication. 

We note that \(\gameStateActionProcess\) is equal to \(\gameStateActionProcessAlt\) for \(t = 0, \ldots, \tau\). Consequently, if \(\gameStateActionProcess\) collects a reward of \(1\) in the new reward function, then \(\gameStateActionProcessAlt\) satisfies the reach-avoid specification. Hence, we have \( \gameValue^{loss}_{modified} \leq \gameValue^{loss}\). Also note that under full communication \(\gameStateActionProcess\) is equal to \(\gameStateActionProcessAlt\) for \(t = 0, \ldots, \tau\) where \(\tau = \tau^{1}= \ldots = \tau^{\numAgents}\), i.e., the modified state-action processes of the agents are always aligned with each other under full communication. Consequently, \(\gameStateActionProcess\) collects a reward of \(1\) in the new reward function if and only if \(\gameStateActionProcessAlt\) satisfies the reach-avoid specification. Hence, we have \( \gameValue^{\fullcommunication}_{modified} = \gameValue^{\fullcommunication}\). 

By using the modified state-action processes, the bounds given in Section 7 of the paper can be derived in terms of \(\gameValue^{loss}_{modified}\), \(\gameValue^{full}_{modified}\) and the total correlation of the joint state-action process until reaching state  \(\gameProcessEndState = (\mdpProcessEndState^{1}, \ldots, \mdpProcessEndState^{\numAgents})\). Since \( \gameValue^{loss}_{modified} \leq \gameValue^{loss}\) and \( \gameValue^{\fullcommunication}_{modified} = \gameValue^{\fullcommunication}\), we can recover the bounds derived in in Section 7 using the total correlation of the joint state-action process until reaching state \(\gameProcessEndState = (\mdpProcessEndState^{1}, \ldots, \mdpProcessEndState^{\numAgents})\).

Given that the stationary policies suffice, \eqref{optproblemgeneric} from the main body of the paper can be rewritten in terms of the occupancy measures using the closed-form expression for the entropy of a stationary state-action process given in \cite{savas2019entropy}:
\begin{subequations}
\begin{align}
    \max_{\occupancyVar} \quad & \gameValue^{\fullcommunication} - \expectedLengthCoef \expectedLength^{\fullcommunication} - \totalCorrelationCoef \left( \sum_{i=1}^{\numAgents} \entropy(\mdpMixedStateActionProcess^{i}) - \entropy(\gameStateActionProcess) \right) \label{eq:policy_synthesis_objective}  \\
    \textrm{s.t.} \quad & \gameValue^{\fullcommunication} = \sum_{\substack{\gameState \in \gameStateSpace \setminus (\targetSet \cup \deadSetPrime)\\ \gameAction \in \gameActionSpace \\ \gameStateAlt \in \targetSet}} \occupancyVar_{\gameState, \gameAction} \gameTransition(\gameState, \gameAction, \gameStateAlt)
    \\
    &\expectedLength^{\fullcommunication} = \sum_{\substack{\gameState \in \gameStateSpace \\ \gameAction \in \gameActionSpace \cup \lbrace \epsilonTransition \rbrace}} \occupancyVar_{\gameState, \gameAction}
    \\& \entropy(\gameStateActionProcess) =      \sum_{\substack{\gameState \in \gameStateSpace  \\ \gameAction \in \gameActionSpace}} \occupancyVar_{\gameState, \gameAction} \log\left(\frac{\underset{\gameActionAlt \in \gameActionSpace}{\sum} \occupancyVar_{\gameState, \gameActionAlt}}{\occupancyVar_{\gameState, \gameAction}}\right) + \sum_{\substack{\gameState \in \gameStateSpace  \\ \gameAction \in \gameActionSpace}} \occupancyVar_{\gameState, \gameAction}\sum_{\gameStateAlt \in \gameStateSpace}  \gameTransition(\gameState, \gameAction, \gameStateAlt) \log\left(\frac{1}{\gameTransition(\gameState, \gameAction, \gameStateAlt)}  \right) 
    \\
    &\entropy(\mdpMixedStateActionProcess^{i}) = \sum_{\substack{\mdpState^{i} \in \mdpStateSpace^{i}  \\ \mdpAction^{i} \in \mdpActionSpace^{i} \cup \lbrace \epsilonTransition^{i} \rbrace }}  \occupancyVar_{\mdpState^{i}, \mdpAction^{i}}  \log\left(\frac{\underset{\mdpActionAlt^{i} \in \mdpActionSpace^{i}}{\sum} \occupancyVar_{\mdpState^{i}, \mdpActionAlt^{i}}}{\occupancyVar_{\mdpState^{i}, \mdpAction^{i}}}\right)  + \sum_{\substack{\mdpState^{i} \in \mdpStateSpace^{i} \\ \mdpAction^{i} \in \mdpActionSpace^{i} \cup \lbrace \epsilonTransition^{i} \rbrace }} \occupancyVar_{\mdpState^{i}, \mdpAction^{i}} \sum_{\mdpStateAlt^i \in \mdpStateSpace^{i} \cup \lbrace\mdpProcessEndState^{i}\rbrace} \mdpTransition^{i}(\mdpState^{i}, \mdpAction^{i}, \mdpStateAlt^{i}) \log\left( \frac{1}{\mdpTransition^{i}(\mdpState^{i}, \mdpAction^{i}, \mdpStateAlt^{i})} \right).
    \\
    &\sum_{\gameAction \in \gameActionSpace \cup \lbrace \epsilonTransition \rbrace} \occupancyVar_{\gameState, \gameAction} = \sum_{\substack{\gameStateAlt \in \gameStateSpace \\ \gameActionAlt \in \gameActionSpace \cup \lbrace \epsilonTransition \rbrace}} \occupancyVar_{\gameStateAlt, \gameActionAlt}\gameTransition(\gameStateAlt, \gameActionAlt, \gameState) + \mathbbm{1}_{\{\gameInitialState = \gameState\}}, \; \forall \gameState \in \gameStateSpace \\
    & \occupancyVar_{\gameState, \gameAction} \geq 0, \; \forall \gameState \in \gameStateSpace, \gameAction \in \gameActionSpace \cup \lbrace \epsilonTransition \rbrace
    \\
    & \occupancyVar_{\gameProcessEndState, \gameAction} = 0, \; \forall  \gameAction \in \gameActionSpace.
    \label{eq:policy_synthesis_absorb_state_zero}
\end{align}
\label{optproblemnumeric2}
\end{subequations}

\newpage
\section{Supplementary Experimental Details}
\label{sec:app_experiment_details}

\paragraph{Solving for the baseline joint policy.}
The baseline joint policy maximizes the reach-avoid probability \(\gameValue^{\fullcommunication}\). We solve the optimization problem given in \eqref{optproblembaseline} to synthesize the baseline joint policy. The optimization problem is a linear program with occupancy measures as the variables. 
\begin{subequations}
\begin{align}
    \max_{\occupancyVar} \quad & \gameValue^{\fullcommunication}  \\
    \textrm{s.t.} \quad & \gameValue^{\fullcommunication} = \sum_{\substack{\gameState \in \gameStateSpace \setminus (\targetSet \cup \deadSetPrime)\\ \gameAction \in \gameActionSpace \\ \gameState \in \targetSet}} \occupancyVar_{\gameState, \gameAction} \gameTransition(\gameState, \gameAction, \gameStateAlt)
    \\
    &\sum_{\gameAction \in \gameActionSpace} \occupancyVar_{\gameState, \gameAction} = \sum_{\substack{\gameStateAlt \in \gameStateSpace \\ \gameActionAlt \in \gameActionSpace}} \occupancyVar_{\gameStateAlt, \gameActionAlt}\gameTransition(\gameStateAlt, \gameActionAlt, \gameState) + \mathbbm{1}_{\{\gameInitialState = \gameState\}}, \; \forall \gameState \in \gameStateSpace \setminus (\targetSet \cup \deadSetPrime)\\
    & \occupancyVar_{\gameState, \gameAction} \geq 0, \; \forall \gameState \in \gameStateSpace \setminus (\targetSet \cup \deadSetPrime), \gameAction \in \gameActionSpace
    \\
    & \occupancyVar_{\gameState, \gameAction} = 0, \; \forall \gameState \in (\targetSet \cup \deadSetPrime), \gameAction \in \gameActionSpace.
\end{align}
\label{optproblembaseline}
\end{subequations}

\paragraph{Linearizing the policy synthesis objective function.}

To solve for a local optimum of the policy synthesis optimization problem \eqref{eq:policy_synthesis_objective}-\eqref{eq:policy_synthesis_absorb_state_zero}, we apply the convex-concave procedure \cite{yuille2002concave}. 
This algorithm requires that we linearize the convex term (i.e. \(- \sum_{i=1}^{\numAgents} \entropy(\mdpMixedStateActionProcess^{i})\)) in the objective function and iteratively solve the resulting exponential cone program.
At each iteration of the algorithm, we linearize the convex term about the solution from the exponential cone program of the previous iteration.

\paragraph{Software implementation.}
All project code is implemented in Python and is available at \href{https://github.com/cyrusneary/multi-agent-comms}{github.com/cyrusneary/multi-agent-comms}. We model the optimization problems using CVXPY~\cite{diamond2016cvxpy} and use MOSEK~\cite{aps2020mosek} to solve these optimization problems. 

\paragraph{Discretizing the multiagent navigation task.}
In order to implement the multiagent task environment illustrated in Figure 2 of the main paper, we discretize the space into a \(5 \times 5\) grid of states.
This discretization is visualized in Figure \ref{fig:gridworld}.

\begin{figure}[h!]
    \centering
    \definecolor{grass}{HTML}{A5A871}
\definecolor{water}{HTML}{586494}
\definecolor{mountain}{HTML}{8c8188}
\definecolor{target}{HTML}{4472C4}

\definecolor{r1Path}{HTML}{002060}
\definecolor{r2Path}{HTML}{ff0000}

\newcommand{\pathwidth}{0.8mm}

\begin{tikzpicture}[scale=1.0]

\filldraw[fill=grass, draw=black] (0,0) rectangle (5,5);

\filldraw[fill=blue!20!white, draw=black] (1,0) rectangle (2,1);
\filldraw[fill=blue!20!white, draw=black] (3,0) rectangle (4,1);

\filldraw[fill=mountain, draw=black] (2,0) rectangle (3,1);
\filldraw[fill=mountain, draw=black] (2,2) rectangle (3,3);
\filldraw[fill=mountain, draw=black] (2,4) rectangle (3,5);

\filldraw[fill=water, draw=black] (0,4) rectangle (1,5);
\filldraw[fill=water, draw=black] (1,4) rectangle (2,5);
\filldraw[fill=water, draw=black] (3,4) rectangle (4,5);

\node[anchor=center] at (0.5, 0.5) {\huge \(\robot_2\)};
\node[anchor=center] at (4.5, 0.5) {\huge \(\robot_1\)};
\node[anchor=center] at (1.5, 0.5) {\huge \(T_1\)};
\node[anchor=center] at (3.5, 0.5) {\huge \(T_2\)};

\draw[draw=black] (0,0) rectangle (5, 5);
\draw[draw=black] (0, 0) grid (5, 5); 


\draw[->, color=r1Path, line width=\pathwidth] (0.5, 0.8) -- (0.5, 1.3);
\draw[->, color=r1Path, line width=\pathwidth] (0.5, 1.3) -- (3.5, 1.3);
\draw[->, color=r1Path, line width=\pathwidth] (3.5, 1.3) -- (3.5, 0.8);

\draw[->, color=r2Path, line width=\pathwidth] (4.3, 0.8) -- (4.3, 1.7);
\draw[->, color=r2Path, line width=\pathwidth] (4.3, 1.7) -- (1.7, 1.7);
\draw[->, color=r2Path, line width=\pathwidth] (1.7, 1.7) -- (1.7, 0.8);

\draw[->, dashed, color=r2Path, line width=\pathwidth] (4.7, 0.8) -- (4.7, 3.5);
\draw[->, dashed, color=r2Path, line width=\pathwidth] (4.7, 3.5) -- (1.3, 3.5);
\draw[->, dashed, color=r2Path, line width=\pathwidth] (1.3, 3.5) -- (1.3, 0.8);


\end{tikzpicture}
    \caption{
    The discretized two-agent navigation task environment.
    The robots begin from the states marked \(R_1\) and \(R_2\).
    They must navigate to states \(T_1\) and \(T_2\) respectively.
    The blue and grey states mark the dangerous water and the impassable mountains, as described in Section 4 of the paper.
    }
    \label{fig:gridworld}
\end{figure}

\paragraph{Supplementary visualizations of \(\policy_{MD}\) and \(\policy_{base}\).}
Figure \ref{fig:occupancy_heatmap_total_corr} shows heatmaps of the occupancy measures of the individual agents under the synthesized joint policy \(\policy_{MD}\). Similarly, Figure \ref{fig:occupancy_heatmap_reachability} shows heatmaps of the occupancy measures of the individual agents under the baseline policy \(\policy_{base}\).
Specifically, each heatmap visualizes the values of the variables \(\occupancyVar_{s^i}\) for some agent \(i\) under one of these joint policies.
These occupancy measures for the individual agents are defined as \(\occupancyVar_{\mdpState^i} =\sum_{\mdpAction^i \in \mdpActionSpace}  \occupancyVar_{\mdpState^i, \mdpAction^{i}}= \sum_{\mdpAction^i \in \mdpActionSpace} \sum_{\gameState^{-i} \in \gameStateSpace^{-i}} \sum_{\gameAction^{-i}\in\gameActionSpace^{-i}} \occupancyVar_{\mdpState^i, \gameState^{-i}, \mdpAction^{i}, \gameAction^{-i}}\).

Intuitively, we may think of the value of \(\occupancyVar_{\mdpState^i}\) as being a measure of the frequency at which agent \(i\) visits local state \(\mdpState^i\) if the joint policy is repeatedly followed from the initial state.
We observe from the figures that the minimum-dependency policy \(\policy_{MD}\) results in robot \(R_1\) navigating through the top valley, while robot \(R_2\) navigates through the bottom valley.
Conversely, under the baseline joint policy (which does not take potential losses in communication into account), both robots navigate through the bottom valley to reach their targets.
This provides empirical confirmation of the qualitative behavior that one might expect from a joint policy that is robust to losses in communication, as discussed in Section 4 and in Section 9 of the paper.

\begin{figure}[h!]
    \centering
     \begin{subfigure}[b]{0.49\textwidth}
         \centering
         \includegraphics[width=\textwidth]{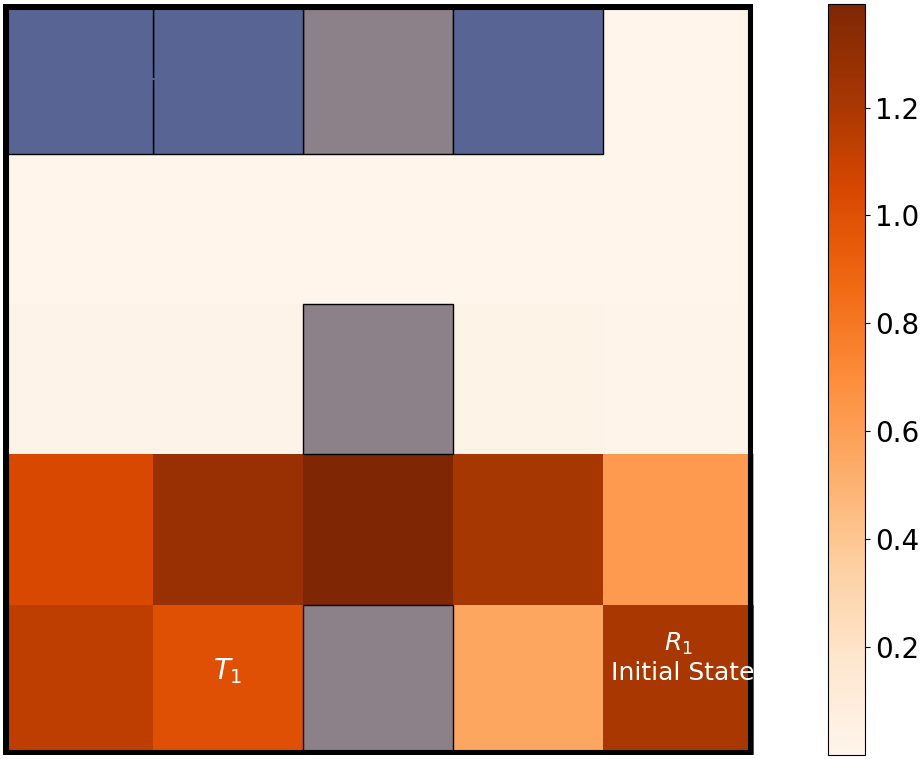}
         \caption{Occupancy measures \(\occupancyVar_{\mdpState^1}\) of Robot \(R_1\)}
         \label{fig:r1_total_corr_heatmap}
     \end{subfigure}
     \hfill
     \begin{subfigure}[b]{0.49\textwidth}
         \centering
         \includegraphics[width=\textwidth]{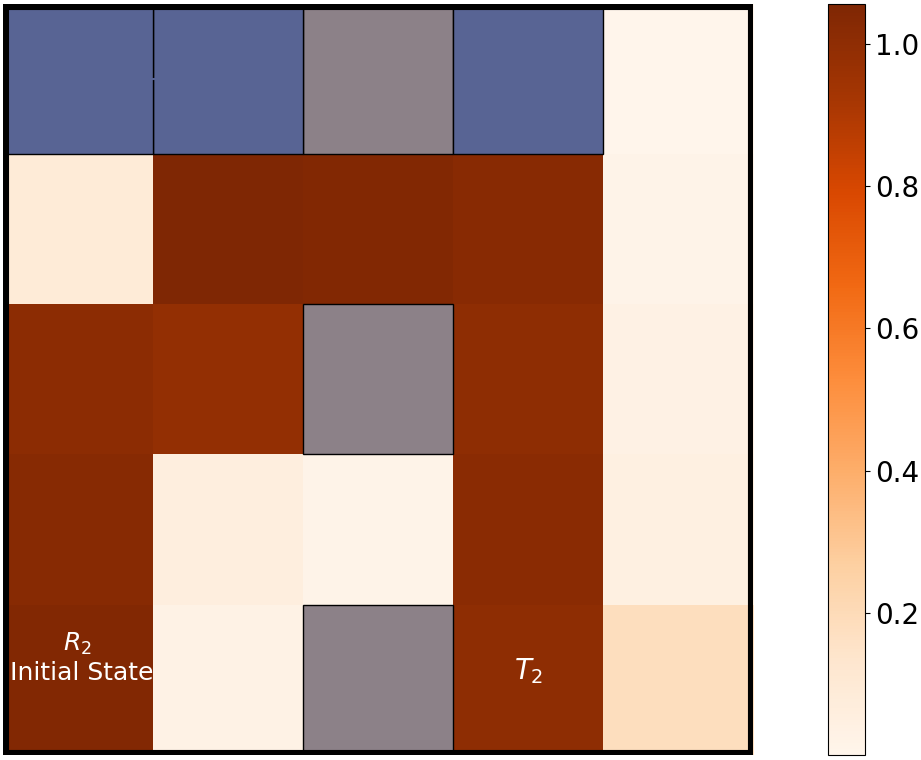}
         \caption{Occupancy measures \(\occupancyVar_{\mdpState^2}\) of Robot \(R_2\)}
         \label{fig:r2_total_corr_heatmap}
     \end{subfigure}
    \caption{Occupancy measures for the individual agents under joint policy \(\policy_{MD}\) (the proposed policy).}
    \label{fig:occupancy_heatmap_total_corr}
\end{figure}

\begin{figure}[h!]
    \centering
     \begin{subfigure}[b]{0.49\textwidth}
         \centering
         \includegraphics[width=\textwidth]{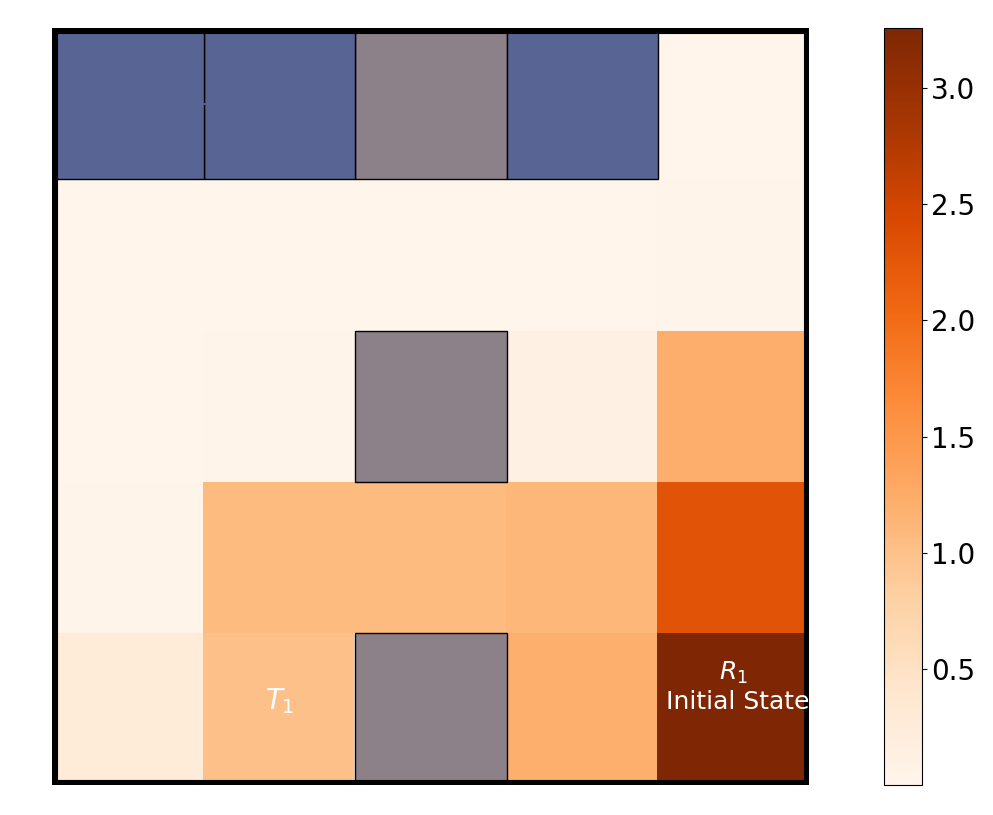}
         \caption{Occupancy measures \(\occupancyVar_{\mdpState^1}\) of Robot \(R_1\)}
         \label{fig:r1_reachability_heatmap}
     \end{subfigure}
     \hfill
     \begin{subfigure}[b]{0.49\textwidth}
         \centering
         \includegraphics[width=\textwidth]{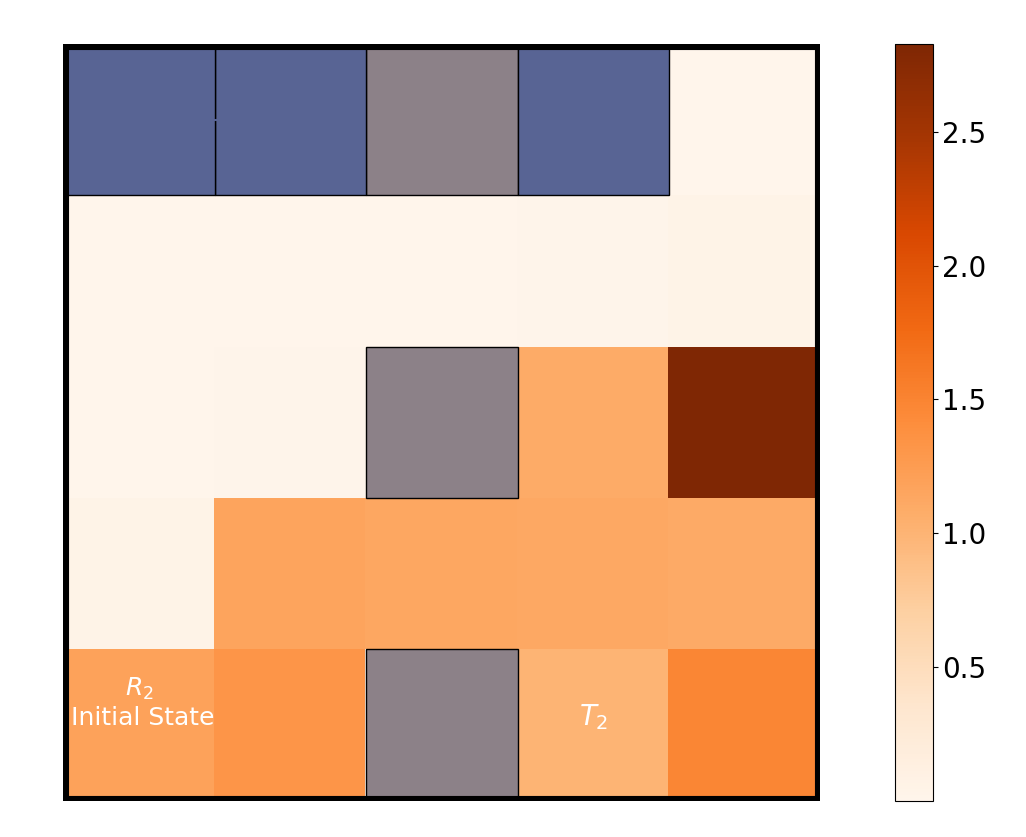}
         \caption{Occupancy measures \(\occupancyVar_{\mdpState^2}\) of Robot \(R_2\)}
         \label{fig:r2_reachability_heatmap}
     \end{subfigure}
    \caption{Occupancy measures for the individual agents under joint policy \(\policy_{base}\).}
    \label{fig:occupancy_heatmap_reachability}
\end{figure}
\clearpage
\section{Additional Three-Agent Experiment}
\label{sec:app_three_agent_example}

In this appendix, we present an three-agent experiment, which demonstrates the ability of the proposed approach to generalize to multiagent systems including more than two agents.
Figure \ref{fig:three_agent_gridworld} illustrates the three-agent task. 
Robots \(\robot_1\), \(\robot_2\), and \(\robot_3\) start in the locations marked in the figure.
Each robot must navigate to its respective target location \(T_1\), \(T_2\), or \(T_3\), which are located in the opposite corner of the environment.
Meanwhile the robots must avoid collisions with each other. The actions of the agents and the slip probabilities associated with these actions are the same as the numerical example given the main body.

\begin{figure}[h!]
    \centering
    \newcommand{\pathwidth}{0.8mm}

\begin{tikzpicture}[scale=1.3]

\filldraw[fill=white, draw=black] (0,0) rectangle (3,3);

\node[anchor=center] at (0.5, 2.5) {\huge \(\robot_2, T_{1}\)};
\node[anchor=center] at (0.5, 0.5) {\huge \(T_{3}\)};
\node[anchor=center] at (2.5, 2.5) {\huge \(\robot_3\)};
\node[anchor=center] at (2.5, 0.5) {\huge \(\robot_{1}, T_{2}\)};

\draw[draw=black] (0, 0) rectangle (3, 3);
\draw[draw=black] (0, 0) grid (3, 3); 

\end{tikzpicture}
    \caption{
    The discretized three-agent navigation task environment.
    The robots begin from the states marked \(R_1\),\(R_2\), and \(R_3\).
    They must navigate to states \(T_1\), \(T_2\), and  \(T_3\), respectively, without crashing into each other.
    }
    \label{fig:three_agent_gridworld}
\end{figure}
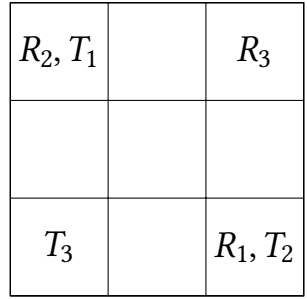

Figure \ref{fig:three_agent_intermittent_play_success_prob} compares the total correlation and the success probabilities of the synthesised minimum-dependency policy \(\policy_{MD}\) and the baseline policy \(\policy_{base}\). The baseline joint policy maximizes the reach-avoid probability \(\gameValue^{\fullcommunication}\). The details on the synthesis of this policy are given in Appendix \ref{sec:app_experiment_details}.
We note that in this example, even when there is no communication available between the agents, \(\policy_{MD}\) has a success probability of \(70\) percent, while the success probability of \(\policy_{base}\) drops to \(17\) percent.

Figure \ref{fig:occupancy_heatmap_total_corr_three_agents} and Figure \ref{fig:occupancy_heatmap_reachability_three_agents} illustrate the occupancy measures of the individual agents under \(\policy_{MD}\) and \(\policy_{base}\), as discussed in Appendix \ref{sec:app_experiment_details}.
We observe that \(\policy_{base}\) induces behavior in which the agents also move around the edges of the environment, but they do so in a way that depends on their teammates.
While \(\policy_{base}\) results in a very high success probability when the communication is available, this type of behavior becomes far too unreliable when communication is lost.
On the other hand, \(\policy_{MD}\) induces behavior in which the agents always move in a counter-clockwise pattern around the edge of the environment.
This pre-agreed behavior is much more likely to succeed when communication is lost.

\begin{figure}
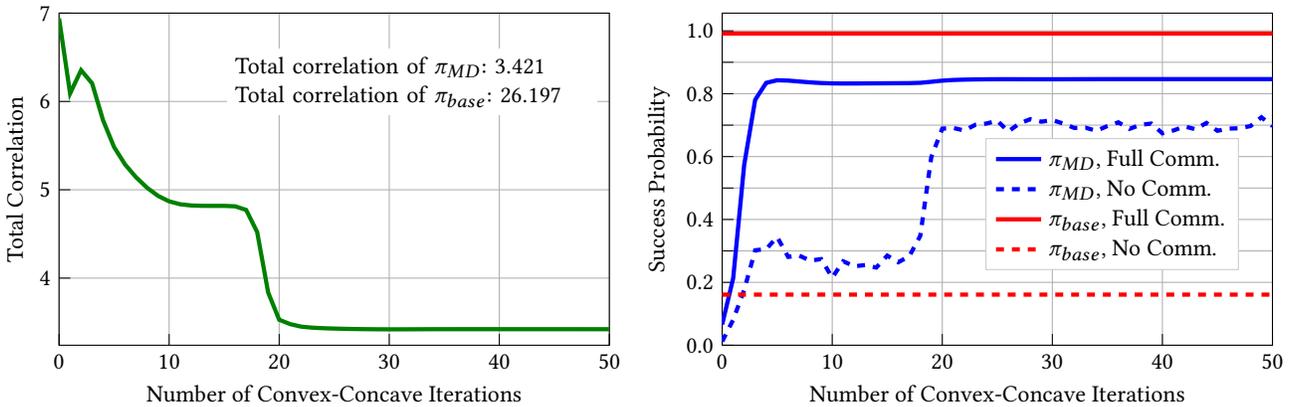

    \centering
\begin{tikzpicture}

\begin{groupplot}[group style={group name = plots, group size=2 by 1, vertical sep=0.0cm, horizontal sep=1.5cm}]

\nextgroupplot[
width=0.5\columnwidth, 
height=6cm,
legend cell align={left},
legend style={
  fill opacity=0.8,
  draw opacity=1,
  text opacity=1,
  at={(4.5cm, 0.4cm)},
  anchor=south west,
  draw=white!80!black
},
tick align=inside,
tick pos=left,
x grid style={white!69.0196078431373!black},
xlabel={Number of Convex-Concave Iterations},
xmajorgrids,
xmin=0.0, xmax=50.0,
xtick style={color=black},
y grid style={white!69.0196078431373!black},
ylabel={Total Correlation},
ymajorgrids,
ymin=3.24, ymax=7.0,
ytick style={color=black},
]

\input{tikz/total_corr_vs_iters_three_agent_aux_action}

\node[fill=white] at (4.7cm, 3.5cm) [text width=4.7cm] {Total correlation of \(\policy_{MD}\): \(3.421\) \\ Total correlation of \(\policy_{base}\): \(26.197\)};

\nextgroupplot[
width=0.5*\columnwidth, 
height=6cm,
legend cell align={left},
legend columns = 1,
legend style={
  fill opacity=1,
  draw opacity=1,
  text opacity=1,
  at={(3.5cm, 1.0cm)},
  anchor=south west,
  draw=white!80!black
},
tick align=inside,
tick pos=left,
x grid style={white!69.0196078431373!black},
xlabel={Number of Convex-Concave Iterations},
xmajorgrids,
xmin=0.0, xmax=50.0,
xtick style={color=black},
y grid style={white!69.0196078431373!black},
ylabel={Success Probability},
ymajorgrids,
ymin=0.0, ymax=1.0562822829432,
ytick style={color=black},
ytick={0.0, 0.2, 0.3, 0.4, 0.5, 0.6, 0.7, 0.8, 0.9, 1.0},
yticklabels={0.0, 0.2, ,0.4, ,0.6, , 0.8, , 1.0},
]

\input{tikz/success_prob_vs_iters_three_agent_aux_action}

\end{groupplot}

\end{tikzpicture}
    \caption{Total correlation and success probability values of the minimum-dependency policy \(\policy_{MD}\) during policy synthesis on the three-agent navigation experiment. (Left) Total correlation value of the policy as a function of the number of elapsed iterations of the convex-concave optimization procedure. (Right) Probability of task success. 
    For comparison, we plot the success probability resulting from both imaginary play execution (no communication) and centralized execution (full communication).
    To estimate the probability of task success, we perform rollouts of the joint policy and compute the empirical rate at which the team accomplishes its objective.}
    \label{fig:three_agent_navigation_results}
\end{figure}

\begin{figure}
    \centering
\begin{tikzpicture}

\begin{axis}[
tick pos=left,
width=0.6\columnwidth, 
height=6cm,
tick align=inside,
tick pos=left,
x grid style={white!69.0196078431373!black},
legend cell align={left},
legend columns = 1,
legend style={
  fill opacity=1,
  draw opacity=1,
  text opacity=1,
  at={(0.2cm, 0.2cm)},
  anchor=south west,
  draw=white!80!black
},
xlabel={q, Probability of Comm. Loss in a Given Timestep},
xmajorgrids,
xmin=-0.02, xmax=1.02,
xtick style={color=black},
y grid style={white!69.0196078431373!black},
ylabel={Success Probability},
ymajorgrids,
ymin=0, ymax=1.0323,
ytick style={color=black}
]
\addplot [ultra thick, blue, mark=*]
table {%
0 0.8461
0.1 0.8423
0.2 0.8332
0.3 0.8319
0.4 0.8172
0.5 0.8085
0.6 0.8072
0.7 0.7895
0.8 0.7703
0.9 0.7422
1 0.6938
};\addlegendentry{\(\policy_{MD}\)}
\addplot [ultra thick, red, mark=*]
table {%
0 0.9911
0.1 0.9792
0.2 0.9518
0.3 0.907
0.4 0.8539
0.5 0.7722
0.6 0.681
0.7 0.5753
0.8 0.4481
0.9 0.3189
1 0.1698
}; \addlegendentry{\(\policy_{base}\)}

\end{axis}

\end{tikzpicture}
    \caption{Success probability of intermittent communication for different values of \(\probabilityFailureOneStep\) on the three-agent navigation experiment. 
    When \(\probabilityFailureOneStep = 0\) communication is available at every timestep, and when \(\probabilityFailureOneStep = 1\) communication is never available.}
    \label{fig:three_agent_intermittent_play_success_prob}
\end{figure}
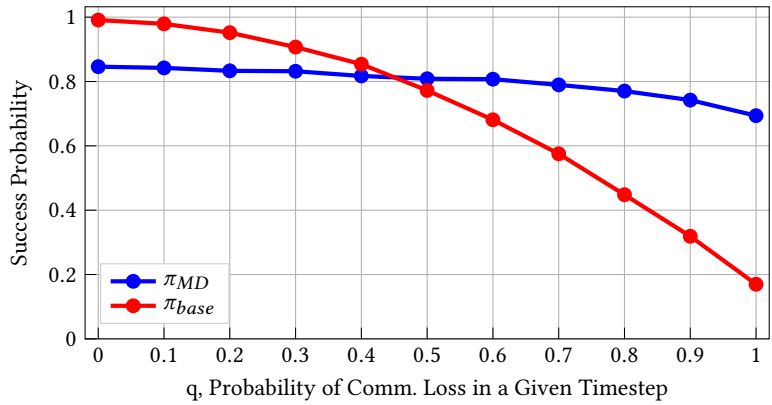

\begin{figure}[h!]
    \centering
     \begin{subfigure}[b]{0.32\textwidth}
         \centering
         \includegraphics[width=\textwidth]{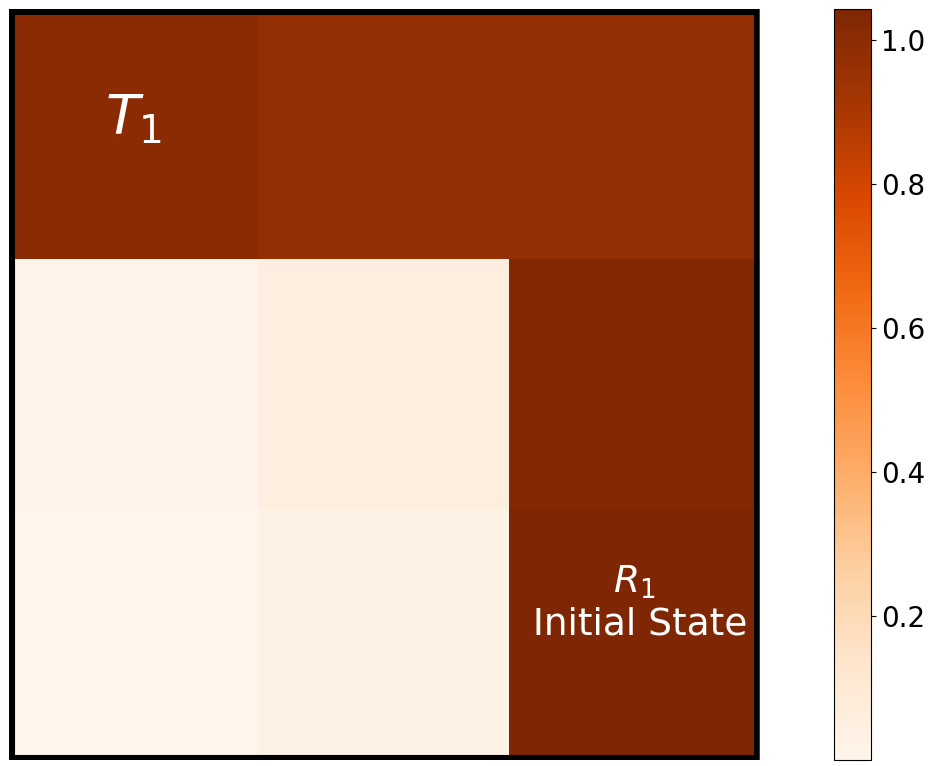}
         \caption{Occupancy measures \(\occupancyVar_{\mdpState^1}\) of Robot \(R_1\)}
         \label{fig:r1_total_corr_heatmap_three_agents}
     \end{subfigure}
     \hfill
     \begin{subfigure}[b]{0.32\textwidth}
         \centering
         \includegraphics[width=\textwidth]{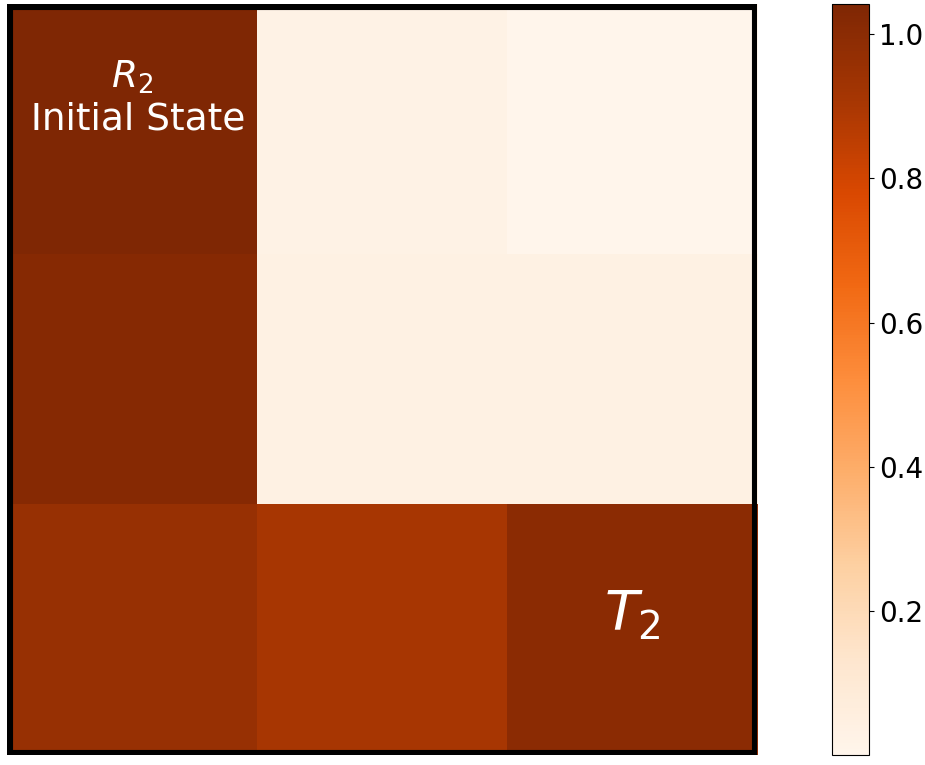}
         \caption{Occupancy measures \(\occupancyVar_{\mdpState^2}\) of Robot \(R_2\)}
         \label{fig:r2_total_corr_heatmap_three_agents}
     \end{subfigure}
     \hfill     
     \begin{subfigure}[b]{0.32\textwidth}
         \centering
         \includegraphics[width=\textwidth]{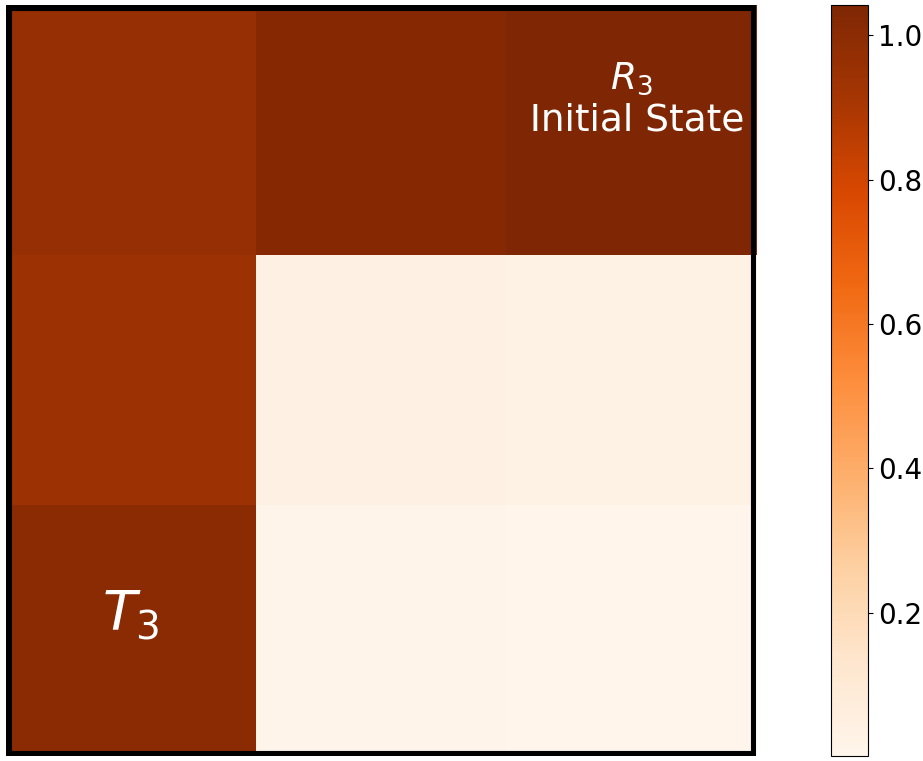}
         \caption{Occupancy measures \(\occupancyVar_{\mdpState^3}\) of Robot \(R_3\)}
         \label{fig:r1_total_corr_heatmap_three_agents}
     \end{subfigure}
    \caption{Occupancy measures for the individual agents under joint policy \(\policy_{MD}\) (the proposed policy).}
    \label{fig:occupancy_heatmap_total_corr_three_agents}
\end{figure}

\begin{figure}[h!]
    \centering
     \begin{subfigure}[b]{0.32\textwidth}
         \centering
         \includegraphics[width=\textwidth]{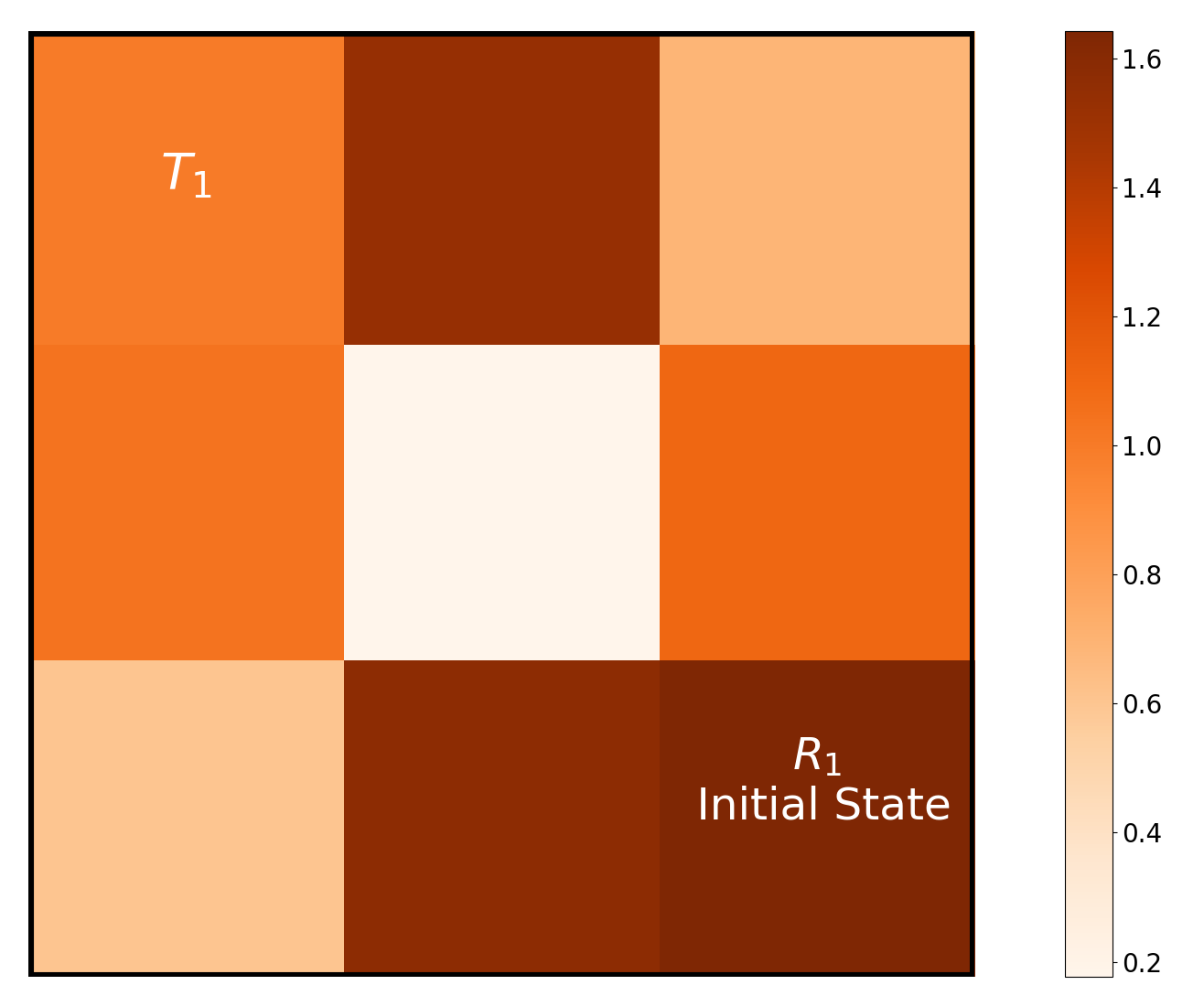}
         \caption{Occupancy measures \(\occupancyVar_{\mdpState^1}\) of Robot \(R_1\)}
         \label{fig:r1_reachability_heatmap_three_agents}
     \end{subfigure}
     \hfill
     \begin{subfigure}[b]{0.32\textwidth}
         \centering
         \includegraphics[width=\textwidth]{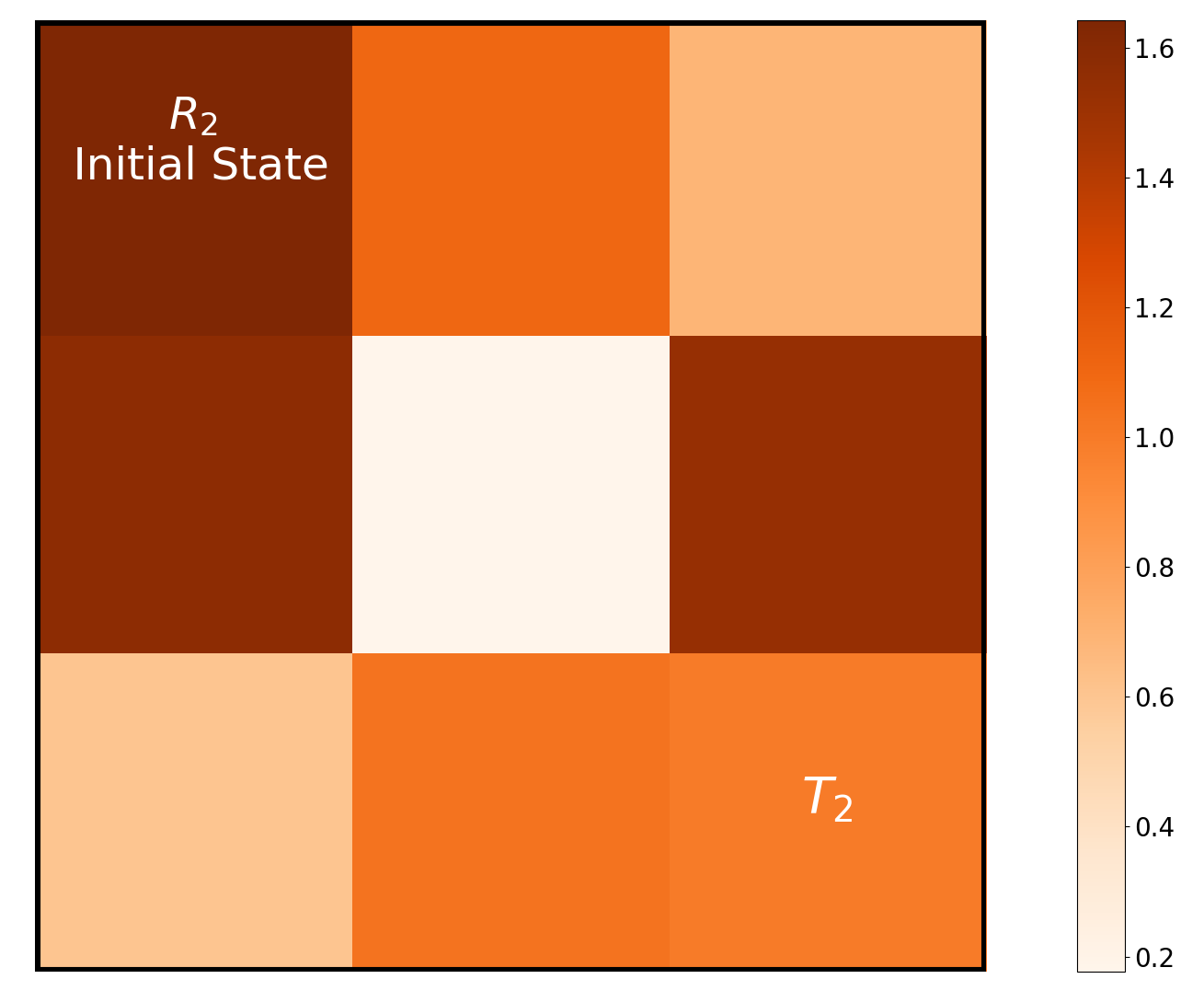}
         \caption{Occupancy measures \(\occupancyVar_{\mdpState^2}\) of Robot \(R_2\)}
         \label{fig:r2_reachability_heatmap_three_agents}
     \end{subfigure}
     \hfill
          \begin{subfigure}[b]{0.32\textwidth}
         \centering
         \includegraphics[width=\textwidth]{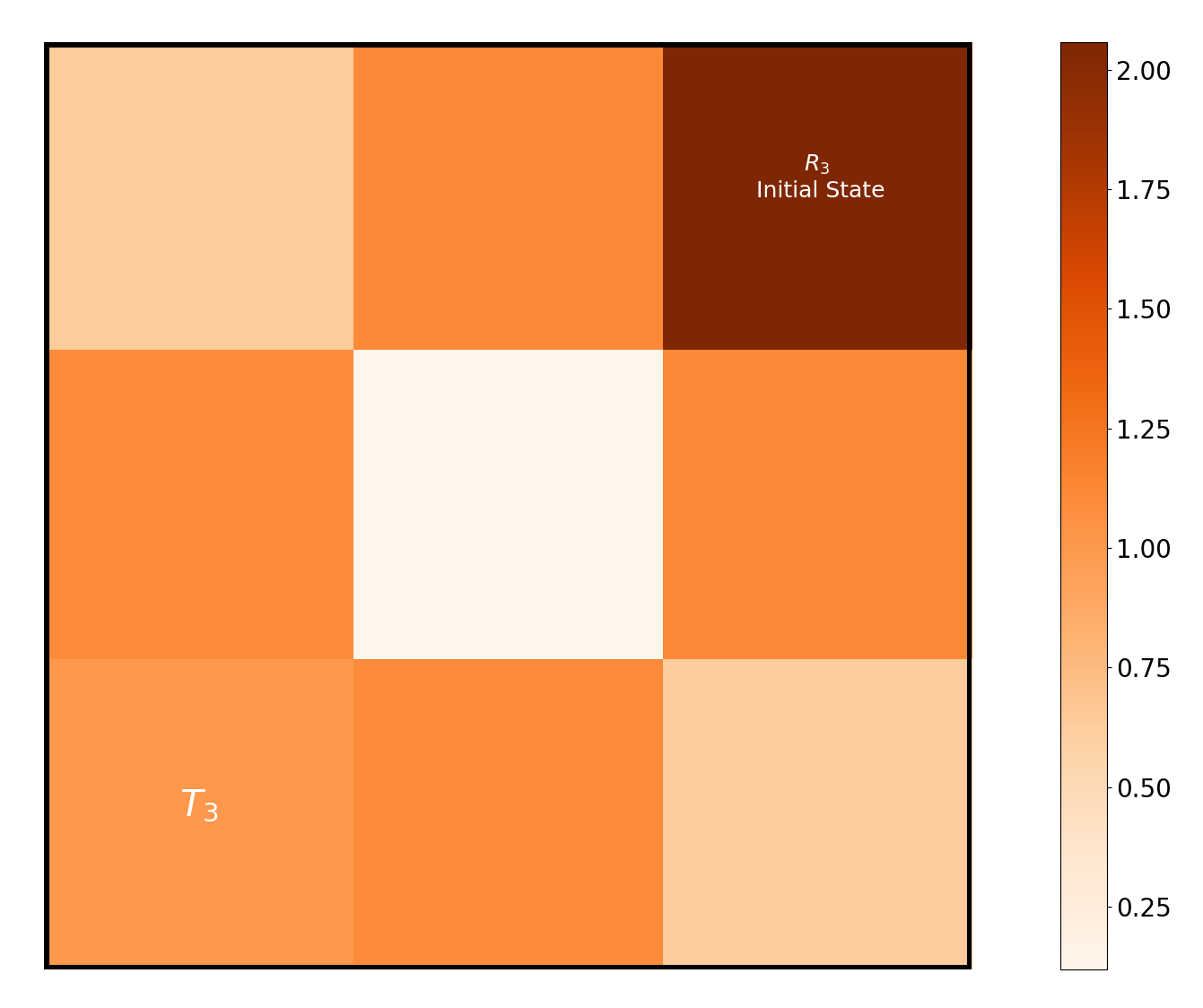}
         \caption{Occupancy measures \(\occupancyVar_{\mdpState^3}\) of Robot \(R_3\)}
         \label{fig:r1_reachability_heatmap_three_agents}
     \end{subfigure}
    \caption{Occupancy measures for the individual agents under joint policy \(\policy_{base}\).}
    \label{fig:occupancy_heatmap_reachability_three_agents}
\end{figure}

\end{document}